\newif\ifsubmission
\newif\ifanon
\newif\ifextendedabstract

\submissiontrue

\documentclass[11pt]{article}

\usepackage{headers}
\setcounter{tocdepth}{2}

\title{ 
Diffusion Codes: Self-Correction from Small(er)-Set Expansion with Tunable Non-locality
}
\date{}

\ifanon
    \author{}
\else
    \author[1]{Adithya Sriram}
    \author[1]{Vedika Khemani}
    \affil[1]{{\small 
    Department of Physics, Stanford University, Stanford, CA 94305
    }}
    \author[2]{Benedikt Placke}
    \affil[2]{{\small 
    Rudolf Peierls Centre for Theoretical Physics, University of Oxford, Oxford, UK
    }}
\fi

\begin{document}

\maketitle

\begin{abstract}
Optimal constructions of classical LDPC codes can be obtained by choosing the Tanner graph uniformly at random among biregular graphs. We introduce a class of codes that we call ``diffusion codes'', defined by placing each edge connecting bits and checks on some graph, and acting on that graph with a random SWAP network. By tuning the depth of the SWAP network, we can tune a tradeoff between the amount of randomness --- and hence the optimality of code parameters --- and locality with respect to the underlying graph.
For diffusion codes defined on the cycle graph, if the SWAP network has depth $\sim Tn$ with $T> n^{2\beta}$ for arbitrary $\beta>0$, then we prove that almost surely the Tanner graph is a lossless ``smaller set'' vertex expander for small sets up size $\delta \sim \sqrt T \sim n^{\beta}$, with bounded bit and check degree. At the same time, the geometric size of the largest stabilizer is bounded by $\sqrt T$ in graph distance. 
We argue, based on physical intuition, that this result should hold more generally on arbitrary graphs.
By taking hypergraph products of these classical codes we obtain quantum LDPC codes defined on the torus with smaller-set boundary and co-boundary expansion and the same expansion/locality tradeoffs as for the classical codes.
These codes are self-correcting and admit single-shot decoding, while having the geometric size of the stabilizer growing as an arbitrarily small power law.
Our proof technique establishes mixing of a random SWAP network on small subsystems at times scaling with only the subsystem size, which may be of independent interest.
\end{abstract}

\tableofcontents

\ifextendedabstract
\else
    \section{Introduction}
\fi

\ifextendedabstract
    \noindent
\else
\fi

A major challenge in quantum error correction is the inherent tradeoff between favorable properties of codes constructed in high spatial dimensions and the need for locality in practical implementations on near term hardware. A canonical example is passive quantum memory: it is known to exist for local models in $D \geq 4$, forbidden for $D \leq 2$, and remains elusive in three dimensions \cite{Dennis_2002, BPTbounds2, Bravyi_2013, lin2024proposals3dselfcorrectingquantum}. The contrast is sharper still between local Euclidean lattices and non-Euclidean expander graphs, which are effectively ``infinite-dimensional''.
%
%
Such graphs underlie recent breakthroughs in quantum error correction which have produced ``good'' quantum low density parity check (LDPC) codes \cite{breuckmann_balancedproductcodes,panteleev2022asymptoticallygoodquantumlocally, dinur2022goodquantumldpccodes, leverrier2022quantumtannercodes} with optimal rate and distance scaling. Furthermore, these codes and related constructions are known to admit linear-time decoding algorithms \cite{Fawzi_2018_lineartime, dinur2022goodquantumldpccodes}, self correction \cite{Hong_2025,placke2024topologicalquantumspinglass}, and single-shot error correction \cite{Fawzi_2018, gu2024single_shot}.


What makes all these codes ‘good’ is expansion: expander graphs have large boundary-to-volume ratios and small diameters, properties that translate into constant rate and large distance for LDPC constructions. The catch is that such codes are not geometrically local: their parity checks couple qubits that are far apart in any low-dimensional Euclidean embedding. By contrast, in any fixed spatial dimension with local checks there are rigorous tradeoffs that preclude ‘good’ quantum LDPC codes and severely constrain mechanisms for passive memory \cite{BPTbounds1, BPTbounds2, YOSHIDA20112566}. This has sparked interest in a middle ground: physically realizable architectures that preserve some expander-like connectivity---such as codes with sparse long-range links, or fractal geometries.
Exploring this middle regime is compelling both practically (lower overhead, higher thresholds, and potentially larger energy barriers) and conceptually, as it interpolates between the rigorously understood, effectively infinite-dimensional limit and the physically realized local Euclidean setting. 
This has motivated a program of study towards this effort \cite{Yang_2025, PRXQuantum.6.010306, Berthusen2024partialsyndrome, dai2024localityvsquantumcodes,williamson2024layercodes,baspin2024wirecodes}.

Against this backdrop, we develop a concrete “middle-ground’’ construction that interpolates between fully local Euclidean lattices and fully nonlocal expanders. We introduce a family of  codes which we call \emph{diffusion codes}, with a controllable knob that trades locality for expansion, yielding classical (and, via hypergraph product, quantum) LDPC codes with, what we term, ``smaller set expansion''. In typical expander codes, errors up to a size which is extensive in the total number of bits trigger a syndrome which grows at least linearly with the size of the error. This is known in the literature as "small set expansion." In contrast, our diffusion codes are smaller set expanders, which means only errors up to a size which is \textit{sub-extensive} in system size have a linearly growing (in error weight) lower bound on the syndrome weight. We define these terms in more detail in \cref{app:preliminaries}. We state this main result as an informal theorem below:

\begin{theo} [Main Result, Informal]
    For all $\beta > 0$, there exist classical $[n, \mathcal{O}(n), \mathcal{O}(n^\beta)]$ LDPC codes with $\beta > 0$, and an embedding of these codes onto the line such that
    \begin{enumerate}
        \item the geometric size of every check (the diameter of its support on the line) is concentrated around $\sim  \mathcal{O}(n^\beta)$
        \item the code has the property of ``smaller set'' expansion, i.e. there exist constants $(\gamma, \delta)$ for any error $\mathbf{x}$ such that $|\mathbf{x}| \leq \delta \cdot n^\beta$, the weight of the associated syndrome $\mathbf{s}$ is lower bounded as $\mathbf{s} \geq \gamma\cdot |\mathbf{x}|$.
    \end{enumerate} 
    By taking the hypergraph product of two such codes, one can further obtain a family of quantum LDPC codes with the same locality in two dimensions and the same ``smaller set'' expansion property.
\end{theo}

The construction of these codes arises from running a local shuffling process on the Tanner graph for a fixed amount of time. The process resembles diffusion, whence the name. We rigorously prove that this results in codes with smaller set expansion on the cycle graph, which yields a quasi-1D classical code (quasi-2D quantum code via the hypergraph product). However, based on the physically expected generality of diffusion ideas, we expect the results to actually hold on arbitrary graphs.

Diffusion codes have bits (qubits) which are arranged on a finite dimensional Euclidean lattice. The models are non-local, but the checks of the code involve bits which are all contained within a tunable radius of size $\delta(n) \sim n^{\beta}$, which scales sub-extensively with system size ($\beta < 1$). 
This radius is tuned by the time that the shuffling process is run. The result is that locally within this radius, the interaction graph of the code resembles that of an expander code; globally, the Euclidean structure reemerges. We prove that even for \emph{arbitrarily small} $\beta > 0 $, the code retains small(er) set expansion: subsets of vertices up to size $\delta(n)$ have a boundary of a size proportional to the subset size. 
It follows that classical diffusion codes can have parameters scaling as $[n, \mO (n), \mO (n^\beta)]$. Importantly, the smaller-set expansion property is still sufficient to prove that our codes are self-correcting. We show that the quantum codes obtained as the hypergraph product of two classical diffusion codes inherit the same smaller-set expansion and self-correction properties. We conclude by showcasing some numerical experiments that illustrate the properties that we prove. 

We prove the above results by relating expansion in diffusion codes to properties of the joint distribution of interparticle spacings in a simple exclusion process induced by the aforementioned shuffling process. The key technical challenge is to establish properties of this distribution for arbitrary initial states and at ``mesoscopic'' times $T$, which are longer than the subsystem size $\delta$ but much shorter than the global mixing time ($\delta \ll T \ll t_{\rm mix}$). We expect our proof technique for achieving this to be of independent interest. 

Finally, we note that from a physical perspective, good classical expander LDPC codes, when viewed as spin Hamiltonians, are known to be in a \textit{spin glass phase} \cite{mezard_montanari}. This topic was revisited recently and shown to arise directly from the underlying expansion of the code \cite{placke2025expansioncreatesspinglassorder}. Furthermore, as expansion is also a property of the good qLDPC codes, Hamiltonians of these codes also exhibit an inherently quantum version of the spin glass phase, termed a topological quantum spin glass \cite{placke2024topologicalquantumspinglass}. 
In both cases, code expansion, in addition to self-correction (that is, stability of the ground state), creates a complex free energy landscape with many local minima.
Our work here suggests that this physics of the expander models may be retained also in models with a well-defined notion of spatial locality. Indeed, we show that the central ingredient which gives rise to the glass physics of ``good'' codes, expansion, is also realized in diffusion codes. Because of this, we expect them to also furnish a glass phase, both in the classical and quantum setting, and our numerical results are consistent with this expectation. The problem of understanding finite dimensional spin glasses remains one of the paradigmatic open problems \cite{moorestein, PhysRevB.35.6841, Altieri_2024, spinglassbook} in statistical physics, and our work represents a contribution to this effort as well.




In the remainder of this section, we informally explain the general construction of diffusion codes, the physical intuition behind them, and summarize our rigorous results.



\subsection{Construction of Diffusion Codes\label{sec:summary_diffusion_codes}}

\begin{figure}
    \centering
    \includegraphics[width=0.85\linewidth]{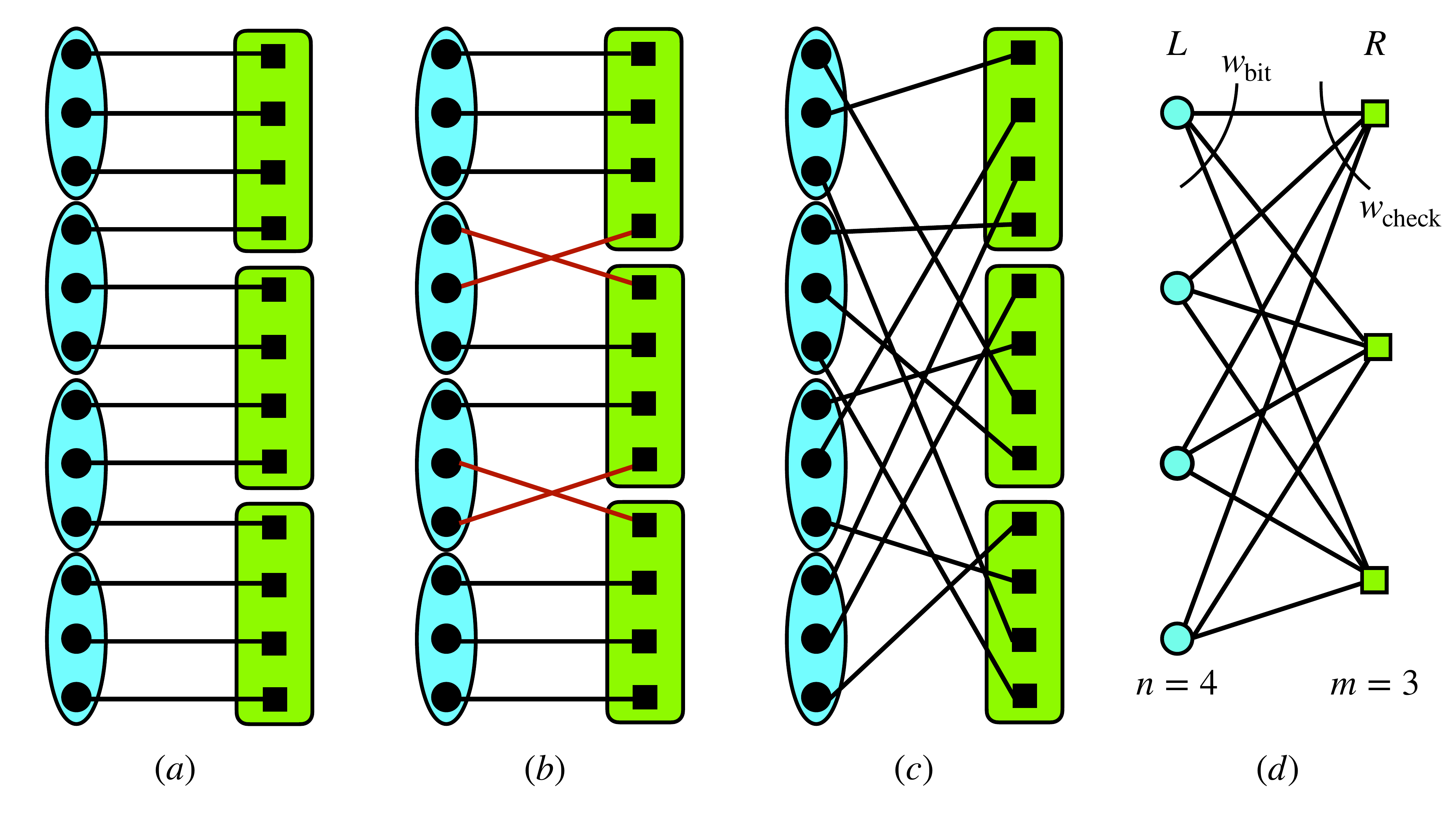}
    \caption{Sketch of the diffusion code construction on a line graph with $n=4$ bits, $m=3$ checks, $\wbit = 3$ and $\wcheck=4$. (a) We start from a perfect matching of $N=n\wbit=m\wcheck$ nodes, grouped in two ways corresponding to bits and checks. (b) We act of with a sequence of random pairwise SWAPS on the nodes, for a total of $T$ time steps (corresponding to $NT$ SWAPs in total). (c) After time $T$ we have created some permutation of the $N$ nodes. (d) We then collapse the node groups into the left and right vertices of a bipartite graph, which we take to be the Tanner graph of code. 
    }
    \label{fig:summary_diffusion_codes}
\end{figure}

The oldest construction of good classical LDPC codes is due to Gallager \cite{gallagerLDPC}. These codes are most conveniently represented by their Tanner graph. For a Gallager code, this is a random bi-regular graph between $n$ left-vertices (bits) and $m$ right-vertices (checks) with degree $\wbit$ and $\wcheck$, respectively.
How does one choose the graph (and hence the code) in practice? A common prescription is the \emph{configuration model} \cite{Richardson_Urbanke_2008,gallagerLDPC,sipserspielman}: first choose a random matching between $N = n\wbit$ left nodes and $ N = m\wcheck$ right nodes, and then group consecutive sets of $\wbit$ left nodes and $\wcheck$ right nodes to form the left and right vertices of the graph, respectively, see also \cref{fig:summary_diffusion_codes} (c-d). 

Such a random graph, when interpreted as the Tanner graph (see \cref{def:tanner graph}), defines a good classical code with constant rate $k\sim n$ and optimal distance scaling $d\sim n$. Sipser and Spielman \cite{sipserspielman} further showed that such codes are \emph{expander codes}, with energy barriers that scale linearly with $n$ between codewords, and which admit an efficient (linear time) decoder. In the context of quantum error correction, hypergraph products of such classical expander codes were introduced by Leverrier, Tillich, and Zemor \cite{Leverrier_2015}. While these ``quantum expander codes'' do not have optimal parameters ($k\sim n$ but $d\sim \sqrt{n}$), they are known to admit linear time decoders~\cite{quantumexpandercodes,Fawzi_2018} and single-shot error correction\cite{Fawzi_2018,Gu_2024} for \emph{random errors}, and to act as passive quantum memories \cite{Hong_2025, placke2024topologicalquantumspinglass}.

The favorable properties of Gallager codes and their hypergraph products are all inherently linked to the fact that their Tanner graph $\mB = (L, R, E)$ is a lossless small-set bipartite vertex expander. Given any subset of bits $S \subset L$, with $\abs{S}\leq \delta n$ for some $\delta > 0$, the size of the neighbor set $\Gamma(S) \subset R$ is proportional to the size of the set itself $\abs{\Gamma(S)} \geq \wbit(1-\varepsilon)\abs{S}$, where the constant $\varepsilon$ can be made arbitrarily small.

Due to the underlying expansion, Gallager codes cannot be made geometrically local in any finite Euclidean dimension. The goal of diffusion codes is to provide a (stochastically) uniform construction which preserves some degree of locality with respect to a specified underlying geometry, while at the same time preserving the favorable properties of Gallager codes as much as possible. 

\begin{figure}
    \centering
    \includegraphics[width=0.75\linewidth]{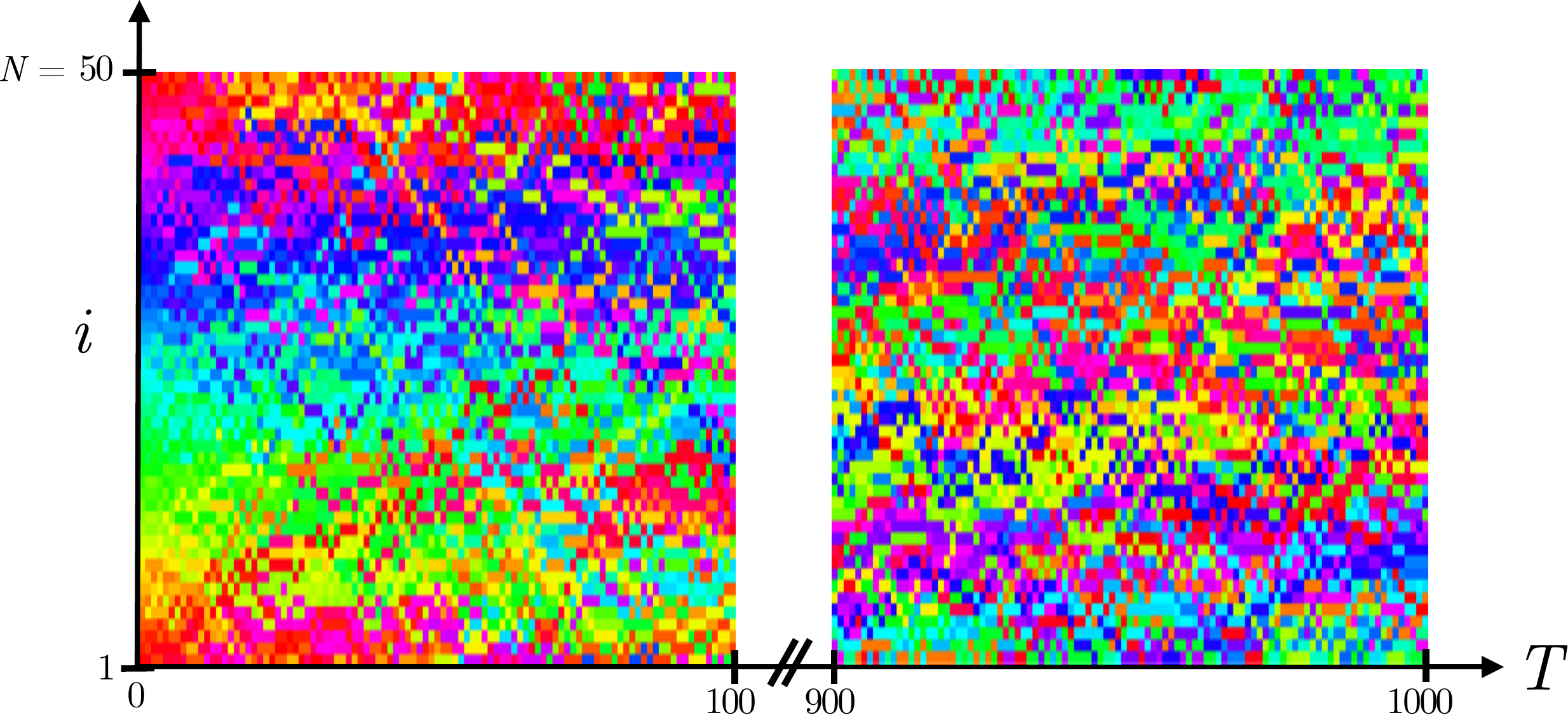}
    \caption{Illustration of mixing in the random SWAP process. We indicate each node by a different color, arranged in a regular gradient at the start.  Every time step corresponds to $n$ randomly chosen SWAPs. For times $T\ll N^2$ the process is not fully mixed and the resulting permutation retains some degree of locality. Nevertheless, it is \emph{locally} well mixed.}
    \label{fig:diffusion_process}
\end{figure}

The idea of our construction is inspired by the configuration model. We generate the Tanner graph randomly by drawing a permutation of $N$ elements. However, we want to draw this permutation not uniformly random among all permutations, but in such a way that it looks ``locally random'' while also preserving some degree of locality.
To this end we allude to intuition from physics. A completely uniformly random permutation of $N$ variables can be generated by running a \emph{local} random process for a time $T \gg N$. The idea of our construction will be to limit the amount of time $T$ that this process is run, to preserve some degree of locality.
In particular, it is known that a random permutation can be generated by running a local SWAP process (also called the interchange process or random adjacent transpositions) for $\sim N^2$ steps (in our convention, we perform on average $N$ SWAPs per step) \cite{markovchains}. This is the process that we will generalize. We illustrate the time evolution of a random SWAP circuit on $n=50$ nodes in \cref{fig:diffusion_process}, with each node indicated by a different color. The initial state is a clean color gradient, and for short times similar colors stay close to each other but their order is mixed up locally.
In the local SWAP process, any single node undergoes a diffusion process over time and its position delocalizes over a range scaling as $\sqrt T$. This motivates the name of our construction---diffusion codes.


Concretely but informally (see \cref{def:diffusioncodes} for a more formal definition), our construction is as follows. Given a graph $\mG$ of $N$ vertices we do the following
\begin{enumerate}
    \item Create a random permutation of the $N$ vertices by acting on them with a random circuit of nearest-neighbor (with respect to $\mG$) SWAP gates of depth $NT$.
    \item Match each vertex with its initial position to create a bipartite graph of $N$ left and $N$ right vertices. 
    \item Given two partitions of the vertices into groups of size $\wbit$ and $\wcheck$, respectively, collapse these groups into left and right-vertices, to produce a $(\wbit, \wcheck)$ bi-regular graph.
    \item Define a code by interpreting this graph as a Tanner graph of $n = N / \wbit$ bits and $m = N / \wcheck$ checks.
\end{enumerate}
The construction is illustrated in \cref{fig:summary_diffusion_codes}. We can obtain quantum LDPC codes by taking hypergraph products of diffusion codes. 

If we follow the time trace of a single vertex---corresponding to a single edge of the Tanner graph---over time, then this vertex is exactly undergoing a random walk on $G$. 
However, the random walks carried out by different edges are not independent, and the expansion of the Tanner graph is a property of their joint distribution.
While this is the main challenge in proving results on diffusion codes in the general setting, we nevertheless expect---based on the physical intuition---that general diffusion codes will locally behave like Gallager codes. 
In particular we expect them to have the lossless \textit{smaller set} expansion property: 
We say that the Tanner graph $\mathcal B = (L, R, E)$ is a lossless $(\delta(n), \gamma)$ smaller-set expander if for all $S\subset L$ with $\abs{S}\leq \delta(n)$, we have $\abs{\Gamma(S)}\geq \gamma \abs{S}$ with $\gamma = \wbit (1-\varepsilon)$ as before. The smaller-set cutoff $\delta(n)$ here is a sublinear function of $n$. For diffusion codes, we expect it to scale as $\delta \sim \sqrt T$, and hence to be freely tunable by the choice of the diffusion time $T$ as a function of $n$.
In the limit of $T\sim n^2$, we recover the configuration model and hence Gallager codes, independent of the choice of the underlying graph $G$.

Smaller-set expansion of the Tanner graph has many direct consequences for the properties of the resulting code. For $\gamma > \wbit /2$ (or equivalently, $\varepsilon < 1/2$), it implies code expansion up to error size $\delta(n)$ in the sense of Sipser and Spielman (\cref{def:expander codes}). This in turn implies that as long as $\delta(n)$ scales as an arbitrarily small power law $\delta\sim n^\beta$ for some $\beta > 0$, the codes are self correcting (see \cref{sec:bottleneck}). 
For $\gamma > 3\wbit / 4$, \cite{quantumexpandercodes} smaller-set expansion implies code expansion up to error size $\delta(n)$ in the hypergraph product of the code with itself. Again, as long as $\delta\sim n^\beta$ for some $\beta < 1$, this implies quantum self correction.

\subsection{Summary of Rigorous Results} \label{sec:summary_results}

We now summarize our rigorous results, which make the above intuition rigorous for the case of diffusion codes based on the cycle graph $\mathcal C_N$ and their hypergraph products.

For this summary to be self-contained, we reiterate that we call a bipartite graph $\mB = (L,R,E)$ with left degree $c$ and right degree $d$ a $(\delta, \gamma)$ small-set expander if $\forall S \subset L \ : \ |S| \leq \delta$, the neighbor set of $S$, $\Gamma(S) \subset R$ has the property that $|\Gamma (S) | \geq \gamma |S|$ (see \cref{def:vertexexpander}). If $\delta(n) = o(n)$, then we say the graph is a \textit{smaller set} expander. We furthermore say that a family of graphs is a family of lossless expanders if $\gamma = c(1-\varepsilon)$ where $\varepsilon \to 0$ as $c \to \infty$. When $\mB$ is the Tanner graph of a code, $c$ defines $\wbit$ and $d$ defines $\wcheck$. 


Our main results are the following:
\begin{enumerate}
    \item For the ``diffusion code'' construction defined above, we show that if they are constructed from applying the SWAP network on the cycle graph $\mathcal C_N$ and if $T \sim N^\alpha$, then almost surely the resulting Tanner graph is a smaller-set expander for any $\delta\sim n^\beta$ for any $\beta > \alpha /2$ (\cref{theorem:diffusioncodecycle}). Furthermore, every check is almost surely no larger than $ \sim T^{1/2}$.

    \item To show the above result, we relate the question of expansion in the Tanner graph, to a property of the distribution of interparticle distances in a related simple exclusion process (SEP) on $\mathcal C_N$. The central technical challenge is that the time $T$ at which the distribution is inspected is much smaller than the mixing time of this SEP. In particular, we are interested in a property of the joint distribution of \emph{many} particles $k\sim \sqrt T$, when starting from arbitrary initial states. 
    We solve this challenge by defining a Markov chain that acts directly on the interparticle distances, and establish stochastic monotonicity between this chain and one corresponding to the SEP on a smaller cycle graph $\mathcal C_{N'}$. Choosing $N'$ such that the mixing time of the SEP on $\mathcal C_{N'}$ is smaller than $T$, we can then use properties of the steady-state distribution of this SEP, together with monotonicity, to obtain a bound on the interparticle distances also in the SEP on $\mathcal C_{N}$.
    We expect this proof technique to be of independent interest, for example in the study of local scrambling and pseudo-randomness in  unitary circuits.

    \item Even the ``smaller set expansion'' property established for diffusion codes suffices to guarantee a number of desirable properties. The classical codes are thermally stable \cite{spinglassbook,placke2025expansioncreatesspinglassorder} under Glauber dynamics (\cref{theorem:self correction from confinement}). Furthermore, the proofs for linear (co-)boundary confinement of the hypergraph product in \cite{Leverrier_2015} extend readily to diffusion codes. 
    As a result, the hypergraph products of diffusion codes are qLDPC codes which are self correcting. 
    We also expect related consequences of expansion to generalize, such as linear-time single shot decoding against random noise \cite{Fawzi_2018_lineartime,Fawzi_2018}. 
    Hypergraph products of diffusion codes on the cycle graph admit a natural embedding on the torus, with the maximum geometric non-locality of any check almost surely given by $\delta(n)$ (\cref{theorem: sublinear expander hypergraph product}).
\end{enumerate}

As a primer to the diffusion code construction, we also more generally prove the existence of smaller set expanders through a much simpler random construction in \cref{sec:proof_existence}. However, the problem in this construction is that it is not symmetric between bits and checks. This means that (i) the right degree (corresponding to $\wcheck$ of the code) is not constrained by construction and (ii) it is not clear that the resulting graph is both a left- and right- small set expander.

\subsection{Summary of Numerical Experiments}


We perform a number of numerical experiments that illustrate the properties of diffusion codes. 

We determine the threshold of classical diffusion codes under the local flip decoder of \cite{sipserspielman} and under belief propagation. For a particular choice of code parameters (i.e. $\wbit = 9, \wcheck = 11, T \sim n^{1}$), we find the threshold to be $0.017 \leq p_c^{\rm flip} \leq 0.019$ and $0.11 \leq p_c^{\rm BP} \leq 0.13$ for the local decoder and the belief propagation decoder respectively.
We also study the self-correction capability of these codes by conducting a "memory time" experiment. Self correction would indicate that when the code is subjected to thermal noise, the time at which a decoder can no longer recover the state diverges in system size. In memory time experiments for classical diffusion codes, we find that the memory time as measured using the local flip decoder state diverges as a stretched exponential in system size. 
We also show the result of a ``heating" experiment, which demonstrates that the codes remain out of equilibrium for long time at low temperatures, and a ``cooling" experiment which shows that the codes fall out of equilibrium at some low temperature when annealed from a high temperature. In the cooling experiments, we find that under annealing, the system fails to reach the ground state which is consistent with spin glass physics. 

Finally, we also perform such ``heating-cooling" experiments on quantum codes constructed as hypergraph products of diffusion codes, to provide a demonstration of their self-correction capability. We furthermore find that, similar to the classical codes, under annealing in the cooling experiment, the system fails to find the ground state. This is consistent with the phenomenology of the recently proposed topological quantum spin glass phase \cite{placke2024topologicalquantumspinglass}. 

\subsection{Related Work}

Here, we give a brief overview and comparison of related work. 

\textbf{Spatial Coupling of LDPC Codes.} Spatially coupled LDPC codes are a family of codes that are constructed by linking together copies of a single \textit{protograph code} \cite{sc-LDPC1, sc-LDPC2}. The protograph code is usually a good LDPC code of some size and copies of it are linked together by changing a few edges so as to connect nodes between them. This yields a parity check matrix which is banded. Innovative methods of coupling the protographs together can lead to improved performance under belief propagation \cite{sc-LDPC2, sc-LDPC3}. These methods, originally developed for classical codes, have recently been extended to quantum codes \cite{Yang_2025}.

Diffusion codes in 1D constructed from the cycle graph are in some ways similar to this idea. The size of the protograph is analagous to the scale of expansion. In fact, consider a diffusion code of size $n$ constructed from a diffusion time of $T \sim n^{2\beta}$. This code has code parameters $[n, \mathcal{O}(n), \mathcal{O}(n^\beta)]$. These same code parameters could be achieved by placing $n^{1-\beta}$ different "good" codes of size $n^\beta$ along a line, even without coupling them. What sets diffusion codes apart from these is the simplicity and flexibility of their construction (they naturally can be defined on any graph geometry). Diffusion codes are also by construction stochastically homogeneous with respect to the underlying graph, making them more natural from a physics perspective, and also potentially easier to implement.

\textbf{Interpolation from Topological Codes.} Another approach to achieve tunable non-locality proposed in recent years has been to add non-locality to topological codes. This may be done either by modifying some number of checks to make them long range \cite{Hong_2024}, or to insert additional checks and bits to ``augment'' the topological code \cite{Roffe_2020}. Yet another approach is to concatenate an appropriate qLDPC code with surface codes, in which case the quasi-locality may be tuned by varying the input qLDPC code \cite{dai2024localityvsquantumcodes}. 
This last construction has optimal code parameter tradeoffs, in the sense that itsaturates generalized BPT bounds, which were presented in the same work. These generalized bounds answer the question of how many long range interactions in 2D are required in order to circumvent the BPT bound. Dai and Li proved that any $[n,k,d]$ 2D quantum stabilizer code with $kd^2 \geq \mathcal{O}(n)$ must have at least $c_0 \cdot \text{max}(k,d)$ interactions of length at least $c_0 \cdot \text{max} \left( \frac{d}{\sqrt{n}}, \left( \frac{kd^2}{n}\right)^{1/4}\right)$. 
By concatenating a good qLDPC code with a surface code and implementing the result on a ``stacked architecture'' \cite{dai2024localityvsquantumcodes} they are further able to interpolate the trade-off between the number of nonlocal checks, and their size.

These approaches are interesting, and allow for much flexibility. Our construction is quite different in nature, and in contrast to all approaches summarized above naturally homogeneous in space. In terms of parameters, in our construction we have many checks which will typically be of size $L^\beta$, where $\beta$ can be made arbitrarily small (at the cost of also decreasing the distance by the same amount). We emphasize that while out construction is sub-optimal from a pure code parameter perspective, we proof several properties of diffusion code \emph{beyond} just parameter scaling, such as expansion, and as a consequence self-correction and single-shot decoding.

\textbf{Embedding from higher dimensions.} 
Recently, there have been a few examples of codes saturating the BPT bounds (\cite{BPTbounds1}) in finite dimensions. These codes are all constructed via algebraic embeddings of higher dimensional optimal codes into lower dimensions while sacrificing as little favorable properties of the input codes as possible \cite{williamson2024layercodes, portnoy2023localquantumcodessubdivided, baspin2024wirecodes}. 
More generally, from theory of metric embeddings, there exist methods to embed graphs into Euclidean dimensions (e.g. Johnson-Lindenstrauss lemma) at the cost of distortion and some nonlocality \cite{woottersrandomalgos}.
However, in general at least some favorable properties of the input codes will be lost. For example, it has recently been shown that layer codes are not self-correcting, even if the input codes are \cite{baspin2025freeenergybarriereyringpolanyi}. 
Baspin \cite{baspin2023combinatorialstructureslinearcodes} further showed that the connectivity graph of \emph{any} $[n, \Theta(n), d]$ quantum code must contain expanders of size $\Omega(d)$ as subgraphs, and the union of these subgraphs must cover a finite fraction of the full connectivity graph. This is similar (although not identical) to the Tanner graph of the code being an $\Omega(d)$ ``smaller set'' expander. In particular, the results do not allow for the recovery of \emph{lossless bipartite} expanding subgraphs, which is necessary to guarantee the expansion properties of the hypergraph product.


\textbf{Long range models.} When viewed as Hamiltonians, the diffusion codes we construct, especially from the cycle graph, can be viewed as long-range models, where the scale of any interaction has spread over a power law in system size. This construction takes some inspiration from a vast body of literature in physics concerning long-range models. Indeed a famous example is that of the 1D Ising model. Although the short-range version has no phase transition, if one instead considers a 1D Ising model where spins interact with a strength that falls off as a power law in distance, then, depending on the strength of the power law, the resulting model \textit{does} have a phase transition \cite{dyson1disingpowerlaw}. Another example is a one-dimensional embedding of the Sherrington-Kirkpatrick (SK) model of spin glasses presented in \cite{Leuzzi_2008}. The authors embed the SK model onto a 1D lattice by considering an interaction matrix $J_{ij}$, where each entry has a probability of being non-zero which decays as a power law in $|i-j|$. Through extensive numerical experiments, they present evidence of replica symmetry breaking and spin glass phenomenology in this model.
\section{Preliminaries}\label{app:preliminaries}

In this section, we provide essential background on understanding the proofs behind the constructions presented in the main text. For greater detail, we refer the reader to Ref. \cite{essentialcodingtheory}. We begin with some basic graph theory definitions. 

\subsection{Basic Graph Theory}

\begin{defn} [Graphs] 
    A graph $\mathcal{G} = (V,E)$ consists of a set $V$ and a set $E \subseteq V \times V$.
\end{defn}

If a graph has edges which may be bipartitioned into vertex sets where edges connect vertices between sets but not within sets, then we refer to such an object as a bipartite graph.

\begin{defn} [Bipartite graphs]
    A bipartite graph $\mathcal{B} = (L,R,E)$ consists of a set $L$, a set $R$ and a set $E \subseteq L \times R$. 
\end{defn}

We refer to $L$ as "left vertices" and $R$ as right vertices. 

\begin{defn} [Adjacency matrix] 
    Any graph $\mathcal{G}$ may be defined by its \textit{adjacency matrix}, $\mathbf{A}_{\mathcal{G}} \in \mathbb{F}_2^{|V| \times |V|}$, with the matrix elements of $\mathbf{A}_{\mathcal{G}}$ given by
    \begin{align} 
        (\mathbf{A}_{\mathcal{G}})_{ij} = \begin{cases}
            1\ \text{if } (v_i, v_j) \in E \\
            0\ \text{else}
        \end{cases}.
    \end{align}
\end{defn}

For a bipartite graph, the adjacency matrix contains a block structure
\begin{align}
    \mathbf{A}_{\mathcal{B}} = 
\left( \begin{array}{c|c}
0 & \mathbf{H} \\
\hline
\mathbf{H}^{\rm T} & 0
\end{array} \right).
\end{align}

The matrix $\mathbf{H}$ is referred to as the biadjacency matrix. 

\begin{defn} [Biadjacency matrix] 
    The \textit{biadjacency matrix} of a bipartite graph $\mathcal{B}$ is a matrix $\mathbf{H} \in \mathbb{F}_2^{|L|\times|R|}$ given by 
    \begin{align}
        (\mathbf{H}_{\mathcal{B}})_{\ell, r} = \begin{cases}
            1\ \text{if } (\ell, r) \in E \\
            0\ \text{else}
        \end{cases},
    \end{align}
    where $\ell \in L$ and $r \in R$.
\end{defn}

\begin{defn} [Vertex Regularity/Boundedness]
    Given a graph $\mathcal{G}$, we say if it is $c$-regular (bounded) if every vertex in the graph is only part of $c$ ($\leq c$) edges. For a $c$-regular (bounded) graph, $\sum_i (\mathbf{A}_{\mathcal{G}})_{ij} = c (\leq c)$.
\end{defn}

\begin{defn} [Left/Right Regularity/Boundedness] \label{def: left right boundedness}
    A bipartite graph $\mathcal{B}$ is $(c,d)$-regular (bounded) if every left vertex is only part of $c$ ($\leq c$) edges and every right vertex is only part of $d$ ($\leq d$) edges. For a $(c,d)$-regular (bounded) graph, $\sum_\ell (\mathbf{H})_{\ell, r} = c (\leq c)$ and $\sum_r (\mathbf{H})_{\ell, r} = d (\leq d)$.
\end{defn}

\begin{defn} [Neighbor sets]
    Given a graph $\mathcal{G}$ and a vertex $u \in V$, the \textit{neighbor set} of that vertex is the set of all vertices connected to that vertex. That is, the neighbor set $\Gamma(u)$ is
    \begin{align}
        \Gamma (u) = \{ v : (u,v) \in E\}.
    \end{align}
\end{defn}

For a bipartite graph, $\forall \ell \in L$ ($r \in R$), $\Gamma(\ell) \in R$ ($\Gamma (r) \in L$).

\begin{defn}[Unique neighbor sets]\label{def:unique_neighbor_set}
    Given a graph $\mathcal{G}$ and a set of vertices $S \subseteq L$, the unique neighbor set $\Gamma_u(S)$ is the set of vertices which connect \textit{only once} into $S$. 
\end{defn}

We now define what it means for a graph to be a vertex expander.

\begin{defn} [Vertex expansion] \label{def:vertexexpander}
    A graph $\mathcal{G} = (V,E)$ is said to be a $(\delta, \gamma)$ vertex expander if for every subset of vertices $S \subseteq V$ such that $|S| \leq \delta $, we have $|\Gamma(S)| \geq \gamma |S|$.
\end{defn}

In the case of bipartite graphs, we may similarly define notions of left (right) expansion.

\begin{defn} [Left/right vertex expansion] \label{def:leftright expansion}
    We say a bipartite graph $\mathcal{B} = (L,R,E)$ is a left (right) $(\delta, \gamma)$ expander if for every subset of left (right) vertices $S \subseteq L(R)$ with $|S| \leq \delta$, we have $|\Gamma (S)| \geq \gamma |S|$.
\end{defn}

We make several remarks about the above definition. First, for a subset of left vertices $S \subseteq L$, $|\Gamma(S)| \leq |R|$. This implies that $\gamma \leq |R| /  \delta$. Next, if $\mathcal{B}$ is $c$-left regular/bounded, then $\gamma \leq c$. 

In general, having $\gamma  = 1$ is not surprising. For example, the bipartite cycle graph is a left/right vertex expander with $\gamma= 1$. What is nontrivial is attaining $\gamma > c /2$. For families of graphs in which there is a notion of taking a "thermodynamic limit," i.e. a limit of large $|L|, |R|$ with $|R| \leq |L|$, if $\gamma \to c$, we say that the family of graphs are \textit{lossless expanders}. 

\begin{defn} [Lossless expansion] \label{def:lossless expansion}
    Consider a family of bipartite graphs $\{\mB_i\}$, $i\in \mathbb{N}$ with $n_i$ left vertices and $m_i$ right vertices. We say this is a family of  lossless left (right) vertex expanders if for every $i$, $\mB_i$ is a $(\delta(n_i), \gamma)$ expander with $\gamma  = c(1-\varepsilon)$, and as $\varepsilon \to 0$, $c \to \infty$.
\end{defn}

The magnitude of $\gamma$ (i.e. how small $\varepsilon$ is) is a measure of the expansion strength. Strong enough vertex expansion implies a more stringent form of expansion known as \textit{unique neighbor expansion}.

\begin{defn} [Unique neighbor expansion] \label{def:unique neigbhor expansion}
    Let $\mB = (L,R,E)$ be a a bipartite graph with $|L| = n$ and $|R| = m$. We call $\mathcal B$ a left (right) $(\delta, \gamma)$ unique-neighbor expander if for every $S \subseteq L (R)$ with $|S| \leq \delta$, we have $|\Gamma_u(S)| \geq \gamma |S|$, where $\Gamma_u(S)$ is the unique-neighbor set of $S$ (\cref{def:unique_neighbor_set}).
\end{defn}

\begin{lem}  [Unique neighbor expansion from lossless vertex expansion ] \label{lemma:unique neighbor exp}
    Let $\mathcal{B}_i = (L,R,E)$ belong to a family of left $c$-regular $(\delta, \gamma)$ lossless expanders, with $\gamma = c(1-\varepsilon)$. Then $\forall \varepsilon \in (0,1/2)$, $\forall S \subseteq L$ with $|S| \leq \delta$, we have $|\Gamma_u(S)| \geq c(1-2\varepsilon) |S|$.
\end{lem}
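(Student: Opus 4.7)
The plan is to argue by a standard double-counting argument on the edges incident to $S$, leveraging the fact that $\mathcal{B}$ is left $c$-regular so that the total edge count from $S$ is exactly $c|S|$. First, I would fix an arbitrary $S \subseteq L$ with $|S| \leq \delta$ and apply the hypothesis of lossless vertex expansion (\cref{def:lossless expansion}) directly to get $|\Gamma(S)| \geq c(1-\varepsilon)|S|$.

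Next, I would partition the neighbor set $\Gamma(S)$ into two disjoint classes: $\Gamma_u(S)$, the vertices incident to exactly one edge going into $S$, and $\Gamma_{\geq 2}(S) := \Gamma(S) \setminus \Gamma_u(S)$, the vertices incident to at least two edges going into $S$. Each vertex in $\Gamma_u(S)$ contributes exactly $1$ edge to the edges-from-$S$ count, and each vertex in $\Gamma_{\geq 2}(S)$ contributes at least $2$. Combining this with the total edge count yields the inequality
\begin{equation*}
c|S| \;\geq\; |\Gamma_u(S)| + 2\bigl(|\Gamma(S)| - |\Gamma_u(S)|\bigr) \;=\; 2|\Gamma(S)| - |\Gamma_u(S)|.
\end{equation*}

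Finally, I would rearrange to obtain $|\Gamma_u(S)| \geq 2|\Gamma(S)| - c|S|$ and substitute the lossless expansion bound to conclude
\begin{equation*}
|\Gamma_u(S)| \;\geq\; 2c(1-\varepsilon)|S| - c|S| \;=\; c(1-2\varepsilon)|S|,
\end{equation*}
which is nontrivial and positive precisely when $\varepsilon < 1/2$, matching the stated range. There is no real obstacle: the argument is purely combinatorial, needing only left $c$-regularity (to equate the edge count out of $S$ with $c|S|$) and the lossless expansion bound on $|\Gamma(S)|$. The only subtlety to watch is that the bound $|\Gamma_u(S)| \geq c(1-2\varepsilon)|S|$ degrades to the trivial statement $|\Gamma_u(S)| \geq 0$ at $\varepsilon = 1/2$, which is exactly the threshold at which this counting scheme ceases to force any unique neighbors.
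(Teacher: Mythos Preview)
Your proof is correct and is essentially the same double-counting argument as the paper's: both count the $c|S|$ edges out of $S$, use the expansion lower bound on $|\Gamma(S)|$, and exploit that non-unique neighbors absorb at least two edges each. The paper phrases this via designating one ``special'' edge per neighbor and bounding the non-special edges by $\varepsilon c|S|$, but the underlying arithmetic is identical to your $|\Gamma_u(S)| \geq 2|\Gamma(S)| - c|S|$ inequality.
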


\begin{proof}
    Consider $S \subseteq L$, $|S| \leq \delta$. By expansion, $|\Gamma(S)| \geq c(1-\varepsilon)|S|$. By $c$-left regularity, the number of edges emanating from $S$ is $c|S|$. For every $v \in \Gamma(S)$, designate one edge leaving $v$ as "special." Note that the number of non-special edges is at most $\varepsilon c |S|$. If a $v \in \Gamma(S)$ contains only a special edge then it is a unique neighbor and that $v$ is also in $\Gamma_u(S)$. Hence, 
    \begin{align}
        |\Gamma_u(S)| \geq c(1-\varepsilon)|S| - \varepsilon c |S| = (1-2\varepsilon) |S|.
    \end{align}
\end{proof}

We note that if $\varepsilon \to 0$, then similar to lossless vertex expansion, we additionally obtain \textit{lossless unique neighbor expansion}. It is a remarkable fact that such lossless expanders exist and in fact they are quite generic. In the subsequent sections we will show this fact, and show how diffusion codes attain this property on a sub-extensive scale.

\subsection{Linear Classical and Quantum CSS Codes}

\begin{defn} [Classical linear codes]
    A classical linear code $\mathsf{C}$ of $n$ bits and $m$ checks is defined by a parity check matrix $\mathbf{H} \in \mathbb{F}_2^{m\times n}$ with $\text{rank}(\mathbf{H}) \leq n$. The codewords of the code are the elements of the subspace defined by $\ker (\mathbf{H})$. We also define the number of logical bits $k$ as the size of the logical subspace, $k = \dim(\ker(\mathbf{H}))$. The distance of the code $d$ is defined by the minimum weight of a non-zero element of $\ker (\mathbf{H})$. 
    
    Given  a code and its parity check matrix $\mathbf{H}$, we say that it is an $[n,k,d]$ code if it is defined in the space $\mathbb{F}_2^n$, the dimension of the kernel of $\mathbf{H}$ is of dimension $k$ and the code has distance $d$.
\end{defn}

\begin{defn} [Low Density Parity Check Codes] \label{def:ldpc}
    Given a code $\mathsf{C}$ with parity check matrix $\mathbf{H} \in \mathbb{F}_2^{m\times n}$, if $\sum_\ell (\mathbf{H})_{\ell, r} \leq \wbit$ and $\sum_r (\mathbf{H})_{\ell, r} \leq \wcheck$ for constants $\wbit, \wcheck$, then the code defined by $\mathbf{H}$ is a low density parity check (LDPC) code. We say that $\wbit$ and $\wcheck$ are the bit and check degree respectively.
\end{defn}

\begin{remark}
    We shall restrict ourselves to binary codes.
\end{remark}

A classical linear code is defined by a basis of $k$ bit strings of length $n$ (i.e. codewords) which are the unique elements of $\mathbb{F}_2^n$ that satisfy a set of parity checks. These parity checks may be organized into a parity check matrix $\mathbf{H}$ of rank $k$ and column dimension $n$, such that the $k$ codewords form a basis of the kernel of $\mathbf{H}$. Given some arbitrary word $\mathbf{x}$, we may determine its \textit{syndrome}, by calculating $\mathbf{s} = \mathbf{H}\mathbf{x}$. The set of words within the logical subspace are the unique words which satisfy all the parity checks, that is if $\mathbf{z} \in \ker(\mathbf{H})$, then $\mathbf{H}\mathbf{z} = 0$. An arbitrary word may be decomposed into a codeword $\mathbf{z}$ and an error $\mathbf{e} \in \mathbb{F}_2^n$. The task given some word $\mathbf{x}$ is to determine this decomposition and therefore recover the correct codeword. Given the distance of the code $d$, it is in principle possible to do this as long as $|\mathbf{e}| < d / 2$.

Linear codes admit a nice bipartite graph representation. We call the bipartite graph associated to a given code, the \textit{Tanner graph} of the code.

\begin{defn} [Tanner graph] \label{def:tanner graph}
    Given a code defined by a parity check matrix $\mathbf{H} \in \mathbb{F}_2^{m\times n}$, the Tanner graph of the code is the bipartite graph $\mB = (L,R,E)$ which has $\mathbf{H}$ as its biadjacency matrix.
\end{defn}

The Tanner graph $\mB = (L,R,E)$ of a code describes the connectivity between bits and checks. By our convention, the left vertices of the Tanner graph correspond to bits and the right vertices correspond to checks. An edge connects a check to a bit if that check has weight on that bit. For an LDPC code of bit and check degree $\wbit, \wcheck$, the Tanner graph is left / right bounded with left side vertex degree bound $\wbit$ and right side vertex bound $\wcheck$ (\cref{def: left right boundedness}). 

The reverse is also true. Given some bipartite graph $\mB = (L,R,E)$, one may use it to define a code by taking its biadjacency matrix as the parity check matrix of the code. This motivates the study of relating useful code properties to graph theoretic properties. In particular interest is the case where $\mB$ is an expander graph (\cref{def:leftright expansion}).

\begin{defn}[Expander codes ([Definition 11.3.2 in \cite{essentialcodingtheory})] \label{def:expander codes}
If the Tanner graph $\mB$ of a code $\mathsf{C}$ is a left vertex expander according to \cref{def:leftright expansion}, then $\mathsf{C}$ is said to be an \textit{expander code}.
\end{defn}

Expander codes have many useful properties. First, strong enough expansion guarantees a lower bound on the distance.

\begin{lem} [Expansion guarantees lower bound on distance] \label{lemma:expansion distance}

Let $\mathsf{C}$ be an expander code with $n$ bits and $m$ checks and bit degree $\wbit$. If the Tanner graph $\mB =(L,R,E)$ of $\mathsf{C}$ is a lossless $(\delta(n), \wbit(1-\varepsilon))$ expander with $\varepsilon < 1/2$, then $\mathcal{C}$ has distance at least $\delta(n)+ 1$.
    
\end{lem}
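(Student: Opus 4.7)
The plan is to run the standard Sipser--Spielman style distance argument, invoking the unique-neighbor lemma (\cref{lemma:unique neighbor exp}) that has already been proved in the excerpt. Concretely, I would argue by contradiction: assume there exists a nonzero codeword $\mathbf{c} \in \ker(\mathbf{H})$ whose Hamming weight is at most $\delta(n)$, and derive a violated parity check.

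First I would fix a nonzero codeword $\mathbf{c}$ and let $S \subseteq L$ denote its support, i.e.\ the set of bit-vertices $\ell$ with $c_\ell = 1$. Assume for contradiction that $|S| \leq \delta(n)$. Since the Tanner graph is by hypothesis a lossless $(\delta(n), \wbit(1-\varepsilon))$ left expander with $\varepsilon < 1/2$, \cref{lemma:unique neighbor exp} applies and yields
\begin{equation}
    |\Gamma_u(S)| \;\geq\; \wbit(1-2\varepsilon)\,|S| \;>\; 0,
\end{equation}
because $1-2\varepsilon > 0$ and $|S| \geq 1$ (the codeword is nonzero).

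Next I would use the defining property of a unique neighbor to exhibit a violated check. Pick any $v \in \Gamma_u(S) \subseteq R$. By \cref{def:unique_neighbor_set}, $v$ is adjacent to exactly one bit-vertex $\ell^\star \in S$, and to no other bit-vertex in $S$. The parity check associated with $v$ is the row of $\mathbf{H}$ indexed by $v$, and its inner product with $\mathbf{c}$ equals $\sum_{\ell \in \Gamma(v)} c_\ell$. Among the bits in $\Gamma(v)$, only $\ell^\star$ lies in $S$ and hence contributes $c_{\ell^\star} = 1$; every other $\ell \in \Gamma(v)$ lies outside $S$ and contributes $c_\ell = 0$. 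Therefore the $v$-th entry of $\mathbf{H}\mathbf{c}$ equals $1$, contradicting $\mathbf{c} \in \ker(\mathbf{H})$.

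This contradiction forces every nonzero codeword to have support of size at least $\delta(n) + 1$, so by definition $d \geq \delta(n) + 1$. There is no real obstacle here beyond correctly invoking \cref{lemma:unique neighbor exp}; the only subtle point is that the hypothesis $\varepsilon < 1/2$ is used precisely to guarantee the strict inequality $\wbit(1-2\varepsilon) > 0$, which is what makes $\Gamma_u(S)$ nonempty and thereby produces the violated check.
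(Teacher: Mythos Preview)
Your proposal is correct and essentially identical to the paper's own proof: both argue by contradiction, take the support $S$ of a putative low-weight codeword, invoke \cref{lemma:unique neighbor exp} (using $\varepsilon<1/2$) to produce a nonempty unique-neighbor set, and then observe that any unique neighbor gives an unsatisfied check. Your write-up is slightly more explicit about why the check is violated (spelling out the inner product), but the structure and key lemma are the same.
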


\begin{proof}
    We may prove this by contradiction. Let us assume that the distance $d \leq \delta(n)$. Then by the definition of distance, there exists a nonzero codeword $\mathbf{z}$ with weight $|\mathbf{z}| \leq \delta(n)$. Let $S$ be the coordinates of nonzero elements of $\mathbf{z}$. By \cref{lemma:unique neighbor exp}, $|\Gamma_u (S)| \geq \wbit (1-2\varepsilon) |S| > 0$. Thus, $\Gamma_u(S)$ is non-empty and there exists at least one $r\in R$ which connects only once into the set $S$. This $r$ is a parity check and by the definition of $S$, all other neighbors of $r$ in $L$ correspond to bits in $\mathbf{z}$ which are 0. Thus, the parity check corresponding to $r$ must necessarily be unsatisfied, which is a contradiction.
\end{proof}

Second, strong enough expansion also guarantees a property known as \textit{confinement}.

\begin{defn} [Confinement] \label{def:confinement}
    Let $\mathbf{H}$ be the parity check matrix of a code $\mathsf{C}$ with $n$ bits and $m$ checks. The code $\mathsf{C}$ is said to exhibit $(\delta(n), \gamma)$ confinement if $\forall \mathbf{e} \notin \ker (\mathbf{H})\ |\ |\mathbf{e}| \leq \delta(n)$, $|\mathbf{H} \mathbf{e} | \geq \gamma |\mathbf{e}|$.
\end{defn}

\begin{lem} [Expansion guarantees confinement] \label{lemma:expansion confinement}
    Let $\mathsf{C}$ be an expander code with $n$ bits and $m$ checks and bit degree $\wbit$. If the Tanner graph $\mB =(L,R,E)$ of $\mathsf{C}$ is a lossless left $(\delta(n), \wbit(1-\varepsilon))$ vertex expander with $\varepsilon < 1/2$, then $\mathsf{C}$ exhibits $(\delta(n), \wbit(1-2\varepsilon))$ confinement.
\end{lem}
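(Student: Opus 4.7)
The plan is to directly combine the unique-neighbor expansion lemma (\cref{lemma:unique neighbor exp}) with the observation that every unique neighbor of the support of an error contributes a single unsatisfied check to the syndrome. This is essentially the same mechanism that drives the distance bound in \cref{lemma:expansion distance}, except that now we keep track of the \emph{number} of violated checks rather than just establishing the existence of one.

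Concretely, I would proceed as follows. Fix an error $\mathbf{e}$ with $\mathbf{e}\notin \ker(\mathbf{H})$ and $|\mathbf{e}|\leq \delta(n)$, and let $S\subseteq L$ be the set of coordinates where $\mathbf{e}$ is nonzero, so $|S|=|\mathbf{e}|\leq \delta(n)$. Apply \cref{lemma:unique neighbor exp} to the set $S$ (which is permissible because $\varepsilon<1/2$) to obtain
\begin{equation}
    |\Gamma_u(S)| \;\geq\; \wbit(1-2\varepsilon)\,|S|.
\end{equation}
By \cref{def:unique_neighbor_set}, each $r\in \Gamma_u(S)$ is a check vertex whose unique neighbor in $S$ is some bit $\ell\in S$. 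Since $\ell\in S$ means $e_\ell = 1$ while all other neighbors of $r$ are outside $S$ and therefore contribute $0$ to the parity, the $r$-th syndrome bit evaluates to
\begin{equation}
    (\mathbf{H}\mathbf{e})_r \;=\; \bigoplus_{\ell'\in \Gamma(r)} e_{\ell'} \;=\; e_\ell \;=\; 1.
\end{equation}
Hence every unique neighbor of $S$ indexes a violated check.

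The syndrome weight therefore satisfies $|\mathbf{H}\mathbf{e}|\geq |\Gamma_u(S)|\geq \wbit(1-2\varepsilon)|\mathbf{e}|$, which is exactly $(\delta(n),\wbit(1-2\varepsilon))$ confinement in the sense of \cref{def:confinement}. No step of this argument is an obstacle; the only thing to watch is that \cref{lemma:unique neighbor exp} requires $\varepsilon<1/2$, which is part of the hypothesis, and that the assumption $\mathbf{e}\notin \ker(\mathbf{H})$ is used implicitly only to ensure $S$ is nonempty (if $\mathbf{e}=0$ the statement is vacuous, and any nonzero $\mathbf{e}$ with $|\mathbf{e}|\leq \delta(n)$ automatically lies outside $\ker(\mathbf{H})$ by \cref{lemma:expansion distance}).
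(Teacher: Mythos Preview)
Your proposal is correct and follows essentially the same approach as the paper: both arguments reduce to \cref{lemma:unique neighbor exp} together with the observation that every unique neighbor of the error support is a violated check. The only cosmetic difference is that the paper phrases this as a proof by contradiction (assuming $|\mathbf{H}\mathbf{e}|$ is too small and deriving a contradiction from $|\Gamma_u(S)|\geq \wbit(1-2\varepsilon)|S|$), whereas you give the direct version of the same inequality chain.
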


\begin{proof}
    We prove this by contradiction. Let $\mathbf{e}$ be an error with weight $|\mathbf{e}| \leq \delta$ and $|\mathbf{H} \mathbf{e}| \leq \wbit(1-2\varepsilon)|\mathbf{e}|$. Let $S$ be the coordinates of $\mathbf{e}$ which are $1$, which means $|\mathbf{e}| = |S|$. By \cref{lemma:unique neighbor exp}, $|\Gamma_u(S)| \geq \wbit(1-2\varepsilon) |S|$. As we saw in the distance proof above, the elements of $\Gamma_u(S)$ correspond to checks which are necessarily unsatisfied, and so we have a contradiction.
\end{proof}

A quantum code is defined similar to a classical code in that we again have a set of $k$ codewords of dimension $n$ which form a basis of space that is kernel of some parity check matrix. However, the elements of each codeword are now instead drawn from $\mathbb{F}_2 \times \mathbb{F}_2$, owing to the fact that the parity checks of the code now correspond to $n$-qubit Pauli operators. 

A CSS code is a special type of construction of a quantum code in which we have two separate classical codes that obey a nice structure.  

\begin{defn}[Quantum CSS Codes]\label{def:quantum css}
    A quantum CSS code is defined by a pair of classical codes $\mathsf{C}_X$ and $\mathsf{C}_Z$ which obey the relation $\mathsf{C}_Z^\perp \subseteq \mathsf{C}_X$, where $\mathsf
{C}_Z^\perp$ corresponds to the words which are orthogonal to the codewords of $\mathsf{C}_Z$. This condition manifests as a constraint on the parity check matrices of $\mathsf{C}_X$ and $\mathsf{C}_Z$, that is $\mathbf{H}_X \cdot \mathbf{H}_Z^T = 0$. We say that $\mathsf{C}_{X(Z)}$ is the code which corresponds to the $X(Z)$ sector of the quantum code.
\end{defn}

Constructing CSS codes with favorable properties is a nontrivial task. However, quantum expander codes with favorable properties such as $k,d$ growing with $n$ and a finite bit, check weight may be constructed using products of classical expander codes. One construction is given by the \textit{hypergraph product}.

\begin{defn} \label{def:hypergraph product}
    (Hypergraph product CSS code). Given a bipartite graph $\mathcal{B} = (L,R,E)$, one may construct a quantum CSS code via a hypergraph product:
    \begin{enumerate}
        \item The bipartite graph $\mathcal{B}_X$ corresponds to the Tanner graph of the $X$ sector of the code. The left set of vertices of $\mathcal{B}_X$ is $L^2 \cup R^2$ and its right set of vertices is $L \times R$. That is, $(i,j) \in L^2 \cup R^2$ if $i,j \in L$ or $i,j \in R$, and $(i,j) \in L \times R$ if $i \in L$ and $j \in R$. If a left vertex $\alpha a \in A^2$, then
        \begin{align}
            \Gamma(\alpha a) = \{ \alpha \beta \in A \times B, (a, \beta) \in E \}
        \end{align}
        and if a left vertex $b\beta \in B^2$, then
        \begin{align}
            \Gamma(b \beta) = \{ \alpha \beta \in A \times B, (\alpha, b) \in E \}.
        \end{align}
        \item The bipartite graph $\mathcal{B}_Z$ corresponds to the Tanner graph of the $Z$ sector of the code. The left set of vertices of $\mathcal{B}_Z$ is $L^2 \cup R^2$ and its right set of vertices is $R \times L$. That is, $(i,j) \in L^2 \cup R^2$ if $i,j \in L$ or $i,j \in R$, and $(i,j) \in R \times L$ if $i \in R$ and $j \in L$. If a left vertex $\alpha a \in A^2$, then
        \begin{align}
            \Gamma(\alpha a) = \{ ba \in B \times A, (\alpha,b) \in E \}
        \end{align}
        and if a left vertex $b\beta \in B^2$, then
        \begin{align}
            \Gamma(b \beta) = \{ ba \in B \times A, (a,\beta) \in E \}.
        \end{align}
    \end{enumerate}

\end{defn}

\begin{lem} [Properties of hypergraph product code \cite{Leverrier_2015}]
    The quantum code defined from the hypergraph product of a $[n_A, k_A, d_A]$ classical code $\mathsf{C}_A$ and another $[n_B, k_B, d_B]$ classical code $\mathsf{C}_B$ has a number of qubits $n = n_A^2 + n_B^2$, a number of logical qubits $k\geq (n_A - n_B)^2$ and distance $d = \min (d_A,d_B)$.
\end{lem}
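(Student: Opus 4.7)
The plan is to handle the three claims --- qubit count, logical-qubit count, and distance --- separately, using the chain-complex reformulation of the hypergraph product as the main organizing tool. For the qubit count I would argue directly from \cref{def:hypergraph product}: the physical qubits are indexed by the left vertex set $L^2 \cup R^2$ of $\mathcal{B}_X$ (equivalently of $\mathcal{B}_Z$), so the total number of qubits is $|L|^2 + |R|^2$, which matches the stated value once $|L|$ and $|R|$ are identified with the appropriate bit and check sets of the input codes.

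For the logical-qubit count and the CSS orthogonality, I would recast the construction as the total complex associated with the tensor product of two length-two chain complexes $\mathbb{F}_2^{n_A} \xrightarrow{\mathbf{H}_A} \mathbb{F}_2^{m_A}$ and $\mathbb{F}_2^{n_B} \xrightarrow{\mathbf{H}_B^\top} \mathbb{F}_2^{m_B}$. A direct block computation using \cref{def:hypergraph product} (or the standard fact that $\partial^2 = 0$ on a tensor product of complexes over $\mathbb{F}_2$) shows $\mathbf{H}_X \mathbf{H}_Z^\top = 0$, giving the CSS condition. The number of logical qubits equals the dimension of the middle homology of this product complex, which by a K\"unneth-type computation decomposes as $k_A k_B + k_A^\top k_B^\top$, where $k^\top$ denotes the number of redundant checks of the corresponding input code; keeping either summand gives a lower bound of the claimed shape.

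The main obstacle is the distance bound $d = \min(d_A, d_B)$. My plan is to argue separately for $X$- and $Z$-type logicals, and to show that any non-trivial $Z$-logical can be brought, by adding $Z$-stabilizers, into a canonical tensor-product form $u \otimes e_i$ or $e_j \otimes v$, where $u$ (resp.\ $v$) is a non-trivial codeword of one factor code (or of its transpose code, when the redundancy contributions survive) and $e_i, e_j$ are standard basis vectors of the other factor. This reduces the weight question to the classical distances of the input codes, yielding the lower bound $\min(d_A, d_B)$; the matching upper bound comes from explicitly constructing logical representatives at exactly those weights using minimum-weight codewords of the inputs. I expect the most delicate step to be verifying that the canonical reduction indeed minimizes weight within its stabilizer coset --- this is a careful combinatorial argument about which rows and columns of the hypergraph-product check matrices are absorbed by stabilizer additions, and constitutes the technical heart of the original hypergraph-product distance proof of \cite{Leverrier_2015}. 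The $X$-type case follows by a symmetric argument.
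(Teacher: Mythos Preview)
The paper does not supply a proof: it simply refers the reader to \cite{Leverrier_2015}. Your outline --- direct qubit count from \cref{def:hypergraph product}, a K\"unneth-type computation for $k$, and a canonical-form/coset-reduction argument for the distance --- is precisely the standard route taken in that reference (and originally in Tillich--Z\'emor), so you are effectively reconstructing what the citation points to.

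One caution worth flagging: the lemma as stated here is really tailored to the product of a \emph{single} Tanner graph with itself (cf.\ \cref{def:hypergraph product}), with $n_A$ and $n_B$ playing the roles of $|L|$ and $|R|$. In the genuinely two-code setting your chain-complex formulation is set up for, the qubit count is $n_A n_B + m_A m_B$ rather than $n_A^2 + n_B^2$, and the distance formula also involves the transpose-code distances $d_A^\top, d_B^\top$. Your canonical-form reduction should track those as well, since minimum-weight logicals can sit on either the $L^2$ or the $R^2$ block; otherwise the lower bound you obtain will be $\min(d_A, d_B, d_A^\top, d_B^\top)$ rather than the claimed $\min(d_A, d_B)$.
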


\begin{proof}
    We refer the reader to \cite{Leverrier_2015} for proof of this lemma.
\end{proof}

\subsection{Existence and Construction of Lossless Expanders}

\begin{theo}[Existence of lossless expanders \cite{essentialcodingtheory}]\label{theorem:losslessexpander} 
    Consider $\mathcal{B} = (L,R,E)$ with $|L| = n$, $|R| = m$, with $m = \mathcal{O}(n)$, and the left vertex degree be $c$. $\mB$ is constructed such that $\forall \ell \in L$, each of the $c$ edges emanating from it connects uniformly at random to a neighbor $r \in R$. Then with finite probability, $\mathcal{B}$ is a left $(\delta(n), \gamma)$ vertex expander with $\delta(n) = \delta n,\ \gamma > c(1-\varepsilon (c))$, where $\varepsilon(c) \to 0$ as $c \to \infty$. 
\end{theo}

\begin{proof}
Let $S \subseteq L$ be a subset such that $1 \leq |S| \leq \delta n$. For every $\ell \in S$, $\Gamma(\ell)$ consists of $c$ elements of $R$ chosen uniformly at random. Let $r_1, \dots, r_{c|S|}$ be the neighbors of $S$ in some arbitrary order. A choice $r_j$ is a "repeat" if $\exists i \leq j$ such that $r_i = r_j$. If the total number of repeats is less than $\varepsilon c |S|$, then $|\Gamma(S)| \geq c(1-\varepsilon )|S|$.  

If we reveal the labels $r_i$ sequentially, then the probability that $r_i$ is a repeat is at most $\frac{i-1}{m} \leq \frac{c|S|}{m}$. This follows from the fact that there are $m$ choices for $r_i$ and the repeat probability is maximized if all of the previous choices are distinct. The second bound is from the fact that $i < c|S|$.

From this, we obtain the probability of having more than $\varepsilon c |S|$ repeats in the list ($r_1, \dots, r_{c|S|})$:
\begin{align} \label{eq:repeatsinS_expander}
    \mathbb{P} [ > \varepsilon c |S|\ \text{repeats}] \leq \begin{pmatrix}
        c|S| \\ \varepsilon c |S|
    \end{pmatrix} \left( \frac{|S|c}{m} \right)^{\varepsilon c |S|} \leq \left( \frac{e|S|c}{\varepsilon m} \right)^{\varepsilon c |S|}.
\end{align}

The first bound follows from having $\begin{pmatrix}
        c|S| \\ \varepsilon c |S|
    \end{pmatrix}$ to arrange $\varepsilon c |S|$ repeats while integrating over the possibilities for the remaining nodes. The second bound follows from the bound on binomial coefficients. The probability that \textit{any} arbitrary subset $S$ with $|S| \leq \delta n$ fails to expand is exponentially suppressed in the size of the subset. 

By a union bound, we can upper bound the probability that a single set $S \subset L$ with $1 \leq |S| \leq \delta n$ fails to expand (has $|\Gamma (S)| < c(1-\varepsilon) |S|$).
\begin{align}
    \mathbb{P}(\text{fail}) &\leq \sum_{|S| = 1}^{\delta n} \begin{pmatrix}
        n \\ |S|
    \end{pmatrix} \left( \frac{e|S|c}{\varepsilon m} \right)^{\varepsilon c |S|} \\
    &\leq \sum_{|S| = 1}^{\delta n} \left(\frac{e n}{|S|} \right)^{|S|}  \left( \frac{e|S|c}{\varepsilon m} \right)^{\varepsilon c |S|} \\
    &= \sum_{|S| = 1}^{\delta n} \left(\frac{e n}{|S|}  \left( \frac{e|S|c}{\varepsilon m} \right)^{\varepsilon c }\right)^{|S|}.
\end{align}

The terms in the summation represent the number of ways to create a subset of size $|S|$. In the third line, we have written the summation as a geometric series in $|S|$. If the argument of geometric series is less than 1/2, the entire summation becomes bounded away from 1. Futhermore, we note that the entire argument is either a strictly increasing or strictly decreasing function of $|S|$, depending on the sign of $\varepsilon c - 1$. Therefore, we may simply bound the first and last terms of the sum. 

We then seek to attain a condition on $c$ such that the following inequalities are true:
\begin{align}
    en (\delta n)^{\varepsilon c - 1} \left( \frac{e c}{\varepsilon m}\right)^{\varepsilon c} & \leq \frac{1}{2} \\
    en \left( \frac{e c}{\varepsilon m}\right)^{\varepsilon c} & \leq \frac{1}{2}.
\end{align}

Algebraic manipulations of these inequalities yield:
\begin{align}
    c \geq \frac{1}{\varepsilon} \frac{1 + \log_2 \left( \frac{e n }{\delta n} \right)}{\log_2 \left(\frac{\varepsilon m}{\delta n e c} \right)} \\
    c \geq \frac{1}{\varepsilon} \frac{1 + \log_2 \left( en \right)}{\log_2 \left(\frac{\varepsilon m}{ e c} \right)}
\end{align}

Since we have demanded that $m = \mathcal{O}(n)$, it is possible to simultaneously satisfy both of these inequalities for any value of $\varepsilon$ between 0 and 1, with a $c$ which does not grow with $n$. If the choice of $c$ satisfies these inequalities, then the graph as constructed will, with nonzero probability in the large $n,m$ limit, be a lossless expander. 

\end{proof}

In the construction above, we showed that when each of the $c$ neighbors of every left vertex are chosen uniformly at random from all of the right vertices, depending on $c$ and $\varepsilon$, we may obtain a lossless expander. The construction above however places no guarantee on right vertex degree. This may be rectified by adding vertices on the right side \cite{essentialcodingtheory}. 

\begin{lem}[Right vertex correction \cite{essentialcodingtheory}] \label{lemma:vertexcorrection}
    Let $\mathcal{B} = (L,R,E)$ be a lossless $(\delta, \gamma)$ expander constructed according to theorem \ref{theorem:losslessexpander}. Let $|L| = n$, $|R| = m$ and the left vertex degree be $c$. Then, there exists another bipartite graph $\mathcal{B}' = (L,R',E')$, with $|R'| = m$ that is also a $(\delta, \gamma)$ bipartite expander such that 
    \begin{itemize}
        \item $m \leq m' \leq 2m$
        \item Every right vertex in $\mathcal{B}'$ has degree at most $\lceil \frac{nc}{m} \rceil$.
    \end{itemize}
\end{lem}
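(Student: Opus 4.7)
The idea is a simple \emph{vertex-splitting} construction: right vertices with excessive degree in $\mathcal{B}$ are subdivided into several copies of bounded degree, and the edge set is repartitioned accordingly. Splitting never decreases the size of any neighborhood, so expansion is preserved, while a counting argument controls the total number of right vertices produced.

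Concretely, set $D := \lceil nc/m \rceil$. For each $r \in R$ with $\deg(r) \leq D$, keep $r$ in $R'$ with its incident edges unchanged. For each $r \in R$ with $\deg(r) > D$, replace $r$ by $\lceil \deg(r)/D \rceil$ new right vertices $r^{(1)}, r^{(2)}, \dots$ and partition the edges incident to $r$ among these copies so that each copy receives at most $D$ edges. Let $\mathcal{B}' = (L, R', E')$ be the resulting bipartite graph. By construction every right vertex of $\mathcal{B}'$ has degree at most $D = \lceil nc/m\rceil$, and the left degree is unchanged since each original edge $(\ell, r)$ is reattached to exactly one copy of $r$.

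For the bound on $|R'|$, observe that the total number of edges in $\mathcal{B}$ equals $nc$. Then
\begin{align}
    m' \;\leq\; \sum_{r:\,\deg(r)\leq D} 1 \;+\; \sum_{r:\,\deg(r)>D}\!\left(\frac{\deg(r)}{D}+1\right)
    \;\leq\; m \;+\; \frac{nc}{D} \;\leq\; 2m,
\end{align}
using $D\geq nc/m$ in the last step. Since no vertex is removed, $m'\geq m$ is immediate.

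For expansion, let $S\subseteq L$ with $|S|\leq\delta$. Each $r\in \Gamma(S)$ has at least one edge into $S$, and after splitting this edge is attached to (at least) one copy of $r$ in $R'$. Hence the map $r\mapsto$ (any copy of $r$ lying in $\Gamma'(S)$) is an injection from $\Gamma(S)$ into $\Gamma'(S)$, giving $|\Gamma'(S)|\geq |\Gamma(S)| \geq \gamma|S|$. Thus $\mathcal{B}'$ is a $(\delta,\gamma)$-expander with the desired right-degree bound. The only mildly subtle step is the counting for $m'\leq 2m$; the rest is essentially bookkeeping and the monotonicity of $|\Gamma(\cdot)|$ under vertex splitting.
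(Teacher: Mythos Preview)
Your proof is correct and follows essentially the same vertex-splitting approach as the paper: replace each right vertex $r$ by $\lceil \deg(r)/D\rceil$ copies with $D=\lceil nc/m\rceil$, then bound $m'-m$ by $\sum_r \deg(r)/D = nc/D \leq m$. Your version is in fact slightly more complete, since you explicitly verify the expansion property (via the injection $\Gamma(S)\hookrightarrow\Gamma'(S)$), which the paper's proof leaves implicit.
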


\begin{proof}
    Let us define
    \begin{align}
        d = \lceil \frac{nc}{m} \rceil.
    \end{align}

    For every vertex $r \in R$, let $d_r$ be its degree. For each $r \in R$, we add $\lceil \frac{d_r}{d} \rceil$ vertices to $R'$. That way, every vertex in $R$ corresponds to $\geq 1$ vertex in $R'$. The edges incident on $r \in R$ are divided evenly among the corresponding vertices $r' \in R'$. This causes all of the vertices in $R'$ which correspond to $r \in R$ to have degree $d$ except for at most 1, which has degree $\leq d$. Furthermore, we may show that $m \leq m' \leq 2m$:
    \begin{align}
        m' - m = \sum_{r \in R} \left( \lceil \frac{d_r}{d} \rceil - 1\right) \leq \sum_{r\in R}\frac{d_r}{d} = \frac{nc}{d} \leq m.
    \end{align}
\end{proof}

The above method to attain a lossless expander requires the addition of new vertices which may or may not be desirable. A more elegant solution is given by Gallager codes which are the paradigmatic LDPC codes. These codes are probabilistically constructed and with high probability, their Tanner graphs are of fixed left and right degrees. \cite{gallagerLDPC}.

We reiterate the construction of Gallager codes here as follows: Let $\wbit$ denote the desired bit degree of the code and let $\wcheck$ be the desired check degree. We create $n\wbit$ nodes and number them from 1 to $n\wbit$. For each bit, there will be $\wbit$ corresponding nodes within this arrangement and so given a node $i$, $\lceil \frac{i}{\wbit} \rceil$ yields the bit to which it belongs. Let us call the group of nodes corresponding to a bit, the "socket" of the bit. We then take adjacent groups of $\wcheck$ of these nodes. There will be $m$ of these groups and we say that the nodes contained within these groups are in the "socket" of the corresponding check. Integer division of the node labels within socket $j$ yields the bits which participate in check $j$. Now, in order to generate the Tanner graph of a code within the Gallager LDPC ensemble, we perform a complete random permutation of all of the $n \wbit$ nodes. This will completely randomize which nodes are within which socket. Then, following integer division by $\wbit$ of the node labels, we will obtain all of the desired connections between each check and bit for this randomized code. With high probability the Tanner graph of this code is expanding (defined below) and the code rate $k/n = 1 - \wbit/\wcheck$ \cite{Richardson_Urbanke_2008, essentialcodingtheory,gallagerLDPC}.

\begin{lem} 
    [Gallager codes are lossless expanders \cite{Richardson_Urbanke_2008, essentialcodingtheory}]\label{lemma:gallager} Consider the graph constructed in the following manner: Given a random matching $\mG$ between $n\wbit = m\wcheck$ nodes, the graph that results from collapsing groups of $\wbit$ adjacent nodes on the left side and $\wcheck$ adjacent nodes on the right side is a lossless expander. 
\end{lem}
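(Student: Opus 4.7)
The plan is to mirror the counting argument from the proof of Theorem \ref{theorem:losslessexpander}, with the only substantive change being how ``repeats'' are counted in the configuration model, where the right side is also partitioned into sockets of size $\wcheck$. Fix $S \subseteq L$ with $|S| = s \leq \delta n$. Under the uniform random matching $\mG$, the $\wbit s$ left-nodes belonging to the sockets of $S$ are matched to $\wbit s$ uniformly random right-side positions (without replacement). After collapsing adjacent groups of $\wcheck$ on the right, $\Gamma(S)$ is exactly the set of right-sockets that receive at least one of these $\wbit s$ matches. Hence, if the number of \emph{collisions} (matches landing in an already-touched socket) is at most $\varepsilon \wbit s$, then $|\Gamma(S)| \geq \wbit(1-\varepsilon)s$, which is the desired expansion bound.

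To bound the collision count I would reveal the $\wbit s$ matches sequentially. At step $i$, conditioned on the previous $i-1$ matches, the next match is uniform over the $N - i + 1$ remaining right-side positions, where $N = n\wbit = m\wcheck$. Since at most $i-1$ sockets have been touched and each socket holds $\wcheck$ positions, the number of currently-unoccupied positions lying in an already-touched socket is at most $(\wcheck - 1)(i - 1) \leq \wcheck \wbit s$. So for $\delta$ below a small absolute constant, the per-step collision probability is at most $2\wcheck\wbit s / N = 2\wbit s / m$. Plugging into the same binomial tail as in the proof of Theorem \ref{theorem:losslessexpander} gives
\begin{align*}
\mathbb{P}\bigl[>\varepsilon\wbit s \text{ collisions from } S\bigr] \;\leq\; \binom{\wbit s}{\varepsilon\wbit s}\left(\frac{2\wbit s}{m}\right)^{\!\varepsilon\wbit s} \;\leq\; \left(\frac{2e\wbit s}{\varepsilon m}\right)^{\!\varepsilon\wbit s}.
\end{align*}
This estimate differs from Eq.~\eqref{eq:repeatsinS_expander} only by a constant factor in the numerator, reflecting that a collision can occur against any of the $\wcheck - 1$ unused peer-positions of an already-placed node, rather than against a single target right-vertex.

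Closing the argument then proceeds exactly as in Theorem \ref{theorem:losslessexpander}. A union bound over the $\binom{n}{s} \leq (en/s)^s$ subsets of size $s$ yields
\begin{align*}
\mathbb{P}\bigl[\,\exists\,S \text{ with } |S| \leq \delta n \text{ that fails}\,\bigr] \;\leq\; \sum_{s=1}^{\delta n}\left(\frac{en}{s}\left(\frac{2e\wbit s}{\varepsilon m}\right)^{\!\varepsilon\wbit}\right)^{\!s}.
\end{align*}
The summand is monotone in $s$ (increasing or decreasing depending on the sign of $\varepsilon\wbit - 1$), so it suffices to force the $s = 1$ and $s = \delta n$ endpoints below $1/2$; using $m = (\wbit/\wcheck)n = \Theta(n)$, both inequalities can be satisfied simultaneously for fixed $\wcheck,\wbit$ sufficiently large and any $\varepsilon \in (0,1)$, with $\varepsilon \to 0$ forcing $\wbit \to \infty$ in the sense of Definition \ref{def:lossless expansion}. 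I do not anticipate any real conceptual obstacle: the finite socket size on the right merely rescales the effective sampling rate by $\wcheck$, which is absorbed into constants, and the analogous right-expansion statement follows by exchanging the roles of $L$ and $R$ together with $n\wbit = m\wcheck$. The one technical care point will be verifying that the monotonicity-in-$s$ step still goes through once the $\wcheck$-dependent constants are tracked, but this is purely an exercise in chasing the constants through the same geometric-series argument used to establish Theorem \ref{theorem:losslessexpander}.
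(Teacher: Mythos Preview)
Your proposal is correct and follows essentially the same route as the paper: bound the per-step repeat probability in the configuration model, feed it into the binomial tail, and union-bound over subsets exactly as in Theorem~\ref{theorem:losslessexpander}. The only cosmetic difference is that the paper bounds the \emph{marginal} probability that $r_i$ is a repeat via a product formula $\mathbb{P}(r_i\text{ repeat}) = 1 - \prod_{j=0}^{i-1}\bigl(1 - \tfrac{\wcheck}{m\wcheck - j}\bigr) \leq \tfrac{\wbit|S|}{m}$, whereas you bound the \emph{conditional} probability given the history by counting unoccupied positions in already-touched sockets; your version loses an inessential factor of $2$ but is arguably cleaner.
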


\begin{proof}
    Let $\mB = (L,R,E)$ be the Tanner graph of the code, that is, the graph obtained after collapsing nodes in the sockets in $\mathcal{G}$. Let $S \subseteq L$ be a subset such that $1 \leq |S| \leq \delta n$. Let $r_1, \dots, r_{\wbit|S|}$ be the neighbors of $S$. We again define $r_j$ to be a repeat if $\exists i < j$ such that $r_i = r_j$. We again would like to consider the process of revealing the labels $r_i$ sequentially. However, each of the $r_i$ can be identified with an edge in $\mathcal{G}$ and the identity of $r_i$ corresponds to the socket to which that edge connects. 

    We may ask, what is the probability that $r_i$ is \textit{not} a repeat. This is the probability that $\nexists j < i| r_j = r_i$. Identically, this is the probability that as we reveal the identities of the labels, none of them connect to the same socket as $r_i$. We may write this in terms of conditional probabilities:
    \begin{align}
        \mathbb{P} (r_i\ \text{not a repeat}) = \mathbb{P}(r_1 \neq r_i) \mathbb{P}(r_2 \neq r_i | r_1 \neq r_i) \dots
    \end{align}

    When we reveal the first label $r_1$, there are $md$ choices of nodes for $r_1$ to connect into in $\mathcal{G}$. $\wcheck$ of those choices will be in the same socket as $r_i$. Therefore, $\mathbb{P}(r_1\neq r_i) = \frac{m\wcheck - \wcheck}{m\wcheck}$. When we reveal $r_2$, $r_2$ now has $m\wcheck-1$ choices. Again, $\wcheck$ of them are contained within the socket of $r_i$, and from the conditional, they are unoccupied. Therefore, $\mathbb{P}(r_2 \neq r_i | r_1 \neq r_i) = \frac{m\wcheck-1-\wcheck}{m\wcheck-1}$. The product of the conditionals above amounts to
    \begin{align}
        \mathbb{P} (r_i\ \text{not a repeat}) &= \prod_{j=0}^{i-1} \frac{m\wcheck-\wcheck-j}{m\wcheck-j} \\
        &= \prod_{j=0}^{i-1} \left(1 - \frac{\wcheck}{m\wcheck-j} \right).
    \end{align}

    With this, we obtain the probability that $r_i$ \textit{is} a repeat:
    \begin{align}
        \mathbb{P} (r_i\ \text{is a repeat}) &= 1 - \prod_{j=0}^{i-1} \left(1 - \frac{\wcheck}{m\wcheck-j} \right) \\
        &\leq \sum_{j=0}^{i-1} \frac{\wcheck}{m\wcheck-j} \\
        &\leq \sum_{j=0}^{i-1} \frac{\wcheck}{m\wcheck} \\
        &= \frac{i-1}{m} \\
        &\leq \frac{\wbit|S|}{m}.
    \end{align}
    The first bound above follows algebraically (may also be interpreted as a union bound) and the second bound follows from $j$ merely decreasing the denominator of each term in the summation. The last bound follows from $i-1 \leq \wbit|S|$. We obtain an upper bound on the repeat probability which is the same as that in the proof of theorem \ref{theorem:losslessexpander}. Subsequently, the remainder of the proof is identical.
    
\end{proof}

\subsection{Markov chains and Mixing Times}

In proving one of the main results of this work, we will make use of the mixing time of Markov chains. By showing that a Markov chain has evolved for a time greater than its mixing time, we will make use of the stationary distribution of the Markov chain in order to bound quantities of interest. Here we provide some definitions used in later parts of this work. For more background on Markov chains and mixing times, we refer the reader to \cite{markovchains}.

\begin{defn} [Finite Markov Chains] \label{def:markov chain}
    A finite Markov chain defines an evolution on a space of configurations $\mathcal{X}$. If $x_t$ is the configuration of the Markov chain at time $t$, the next configuration $x_{t+1}$ is chosen probabilistically based on a transition matrix $P_t$ which depends only on the current state $x_t$.
\end{defn}

\begin{defn}  [Total Variation Distance] \label{def:tvd}
    Let $\mu, \nu$ be two probability distributions defined on a space of configurations $\mathcal{X}$. The total variation distance is defined by
    \begin{align}
    ||\mu - \nu ||_{\rm TVD} = \max_{A \subseteq \mathcal{X}} |\mu(A) - \nu(A)|.
    \end{align}
\end{defn}

\begin{defn} [Mixing Time] \label{def:mixing time}
    For any Markov chain, the $\epsilon$-mixing time $t_{\rm mix}(\epsilon)$ is given by the first time $t$ at which the total variation distance (def. \ref{def:tvd}) between the current probability distribution of the Markov chain and the stationary distribution becomes less than or equal to $\epsilon$.
\end{defn}

\section{Proof of Main Results}

In the following, we present the proof of our main results summarized in \cref{sec:summary_results}. We begin by showing the existence of bipartite smaller set lossless expanders via a simple random geometric construction in \cref{sec:proof_existence}. 
We then proceed to show that diffusion codes on the cycle graph are smaller set lossless expanders with high probability.
We then discuss hypergraph products of diffusion codes and their smaller set (co-)boundary expansion. 
Finally, we discuss the the implications of the smaller set expansion, which include the existence of a linear-time local decoder, single-shot error correction, and passive memory.

Though in the definitions (\cref{def:vertexexpander}) we defined expander graphs and codes with $\delta(n)$ left ambiguous, in the above and following, we reiterate that we use the term \textit{smaller set expansion} to refer to expander graphs where $\delta(n) = o(n)$.

\subsection{Existence of Smaller Set Lossless Expanders\label{sec:proof_existence}}


Our proof for the existence of bipartite smaller set lossless expanders is directly inspired from the proof of existence of `regular' bipartite lossless expanders as presented in \cref{theorem:losslessexpander}. In this case, each left vertex is connected to a random subset of $c$ right vertices. In the following, we consider a simple generalization of this idea, where we define for each left vertex $\ell\in L$ a candidate neighbor set, $\Upsilon_\ell\subset R$. Then, as long as each candidate vertex set is sufficiently large, we can guarantee smaller set lossless expansion from left to right vertices.

Importantly, the below theorem does not place any restriction onto the candidate sets (they could, in fact, all overlap in the worst case). A natural case would be, for example, to place left and right vertices onto some manifold (e.g. $\mathbb{R}^d$), and choosing the candidate set for a given left vertex as all right vertices within a certain distance.

\begin{theo} \label{theorem:sublinearlossless}
    (Existence of smaller set expanders). Consider a bipartite graph $\mathcal{B} = (L,R,E)$ with $|L| = n, |R| = m = \mathcal{O}(n),$ and with left-degree $c$ constructed as follows. Define for each $\ell\in L$ a candidate neighbor set $\Upsilon_\ell \subset R$, and connect $\ell$ to $c$ vertices from $\Upsilon_\ell$ chosen uniformly at random.

    Now, $\forall \varepsilon > 0$, if $\forall \ell, |\Upsilon_\ell| > \delta$ with  $\delta = \Omega(n^\beta)$ and $\beta > (\varepsilon c)^{-1}$, then with probability $q(n) > 0$, $\mathcal{B}$ is a left $(\delta, \gamma)$ smaller set vertex expander (\cref{def:vertexexpander}) with $\gamma > c(1-\varepsilon)$, where $q(n) \to 1$ as $\abs{L}\to 0$.     
\end{theo}


\begin{proof}
    Let $S \subset L$ be a subset such that $1 \leq |S| \leq \delta(n)$. We again let $r_1, \dots, r_{c|S|}$ be the neighbors of $S$, with the definition of $r_j$ being a repeat being the same as before. In the worst case scenario, all of the candidate neighbor sets for each $\ell \in S$ exactly overlap, as this maximizes the repeat probability. We additionally restrict the candidate neighbor sets to be of size $\delta(n)$. Then, the probability that $r_i$ is a repeat is at most $\frac{i-1}{\delta(n)} \leq \frac{c|S|}{\delta(n)}$, where the bounds come from the scenario where each of the $i-1$ choices before $i$ were distinct, and $i-1 < c|S|$. We then obtain the probability of having more than $\varepsilon c |S|$ repeats in the list $(r_1,\dots,r_{c|S|})$:
    \begin{align} \label{eq:sublinear bound}
        \mathbb{P}[> \varepsilon c |S| \text{ repeats}] \leq \left( \frac{e|S|c}{\varepsilon \delta(n)} \right)^{\varepsilon c |S|}.
    \end{align}

    Here, the bounds come from the same arguments as in \cref{eq:repeatsinS_expander}. We now seek to bound the probability that in $\mathcal{B}$, there exists a single set $S \subset L$ with $1 \leq |S| \leq \delta(n)$ fails to expand. However, naively performing a union bound over all subsets will not yield the desired result. The reason is that in the earlier proof, the probability of a subset failing to expand was exponentially suppressed in $n$, which competed with the exponential number of subsets of size $|S|$. However, if $\delta(n)$ is a sublinear function of $n$, then the overall number of subsets will overwhelm the probability of the subset to fail to expand. 

    Instead, we will use the fact that any subset $S \subset L$ may be decomposed into a union of \textit{neighbor-connected subsets}. By neighbor connected, we mean that left vertices which are connected through a neighbor in $R$. That is, $v,v' \in L$ are neighbor connected iff $\Gamma(v) \cap \Gamma(v') \neq \empty$. We may write the decomposition of $S$ as
    \begin{align}
        S = \cup_{i=1}^k \mathcal{S}_k,
    \end{align}
    where $\mathcal{S}_k \subset L$ is neighbor connected and each of the $\mathcal{S}_k$ are disjoint. By definition, since the $\mathcal{S}_k$ are disjoint, their neighbors do not overlap. Subsequently,
    \begin{align}
        |\Gamma(S)| = \sum_k |\Gamma (\mathcal{S}_k)|.
    \end{align}

    It then suffices to bound the probability that in $\mathcal{B}$, there exists a single neighbor-connected set $\mathcal{S} \subset L$ with $1 \leq |{\mathcal{S}}| \leq \delta(n)$ fails to expand. 
    
    The number of such sets is \cite{Fawzi_2018}
    \begin{align}
        |C_{\mathcal{S}}(\mathcal{B})| \leq n\Phi^{|{\mathcal{S}}|},
    \end{align}
    where 
    \begin{align}
        \Phi \leq (c^2-1) \left( 1 + \frac{1}{c^2- 2} \right)^{c^2- 2}.
    \end{align}
    The benefit of this strategy is clear. The number of neighbor connected sets of size $|\mathcal{S}|$ is only \textit{polynomial} in $n$ rather than exponential. The union bound may now be written as
    \begin{align} \label{eq:sum bound}
        \mathbb{P}(\text{fail}) &\leq \sum_{|\mathcal{S}| = 1}^{\delta(n)} n \Phi^{|\mathcal{S}|} \left( \frac{e|\mathcal{S}|c}{\varepsilon \delta(n)} \right)^{\varepsilon c |\mathcal{S}|} \\
        &= \sum_{|\mathcal{S}| = 1}^{\delta(n)} \left( n^{1/|\mathcal{S}|} \Phi \left( \frac{e|\mathcal{S}|c}{\varepsilon \delta(n)} \right)^{\varepsilon c} \right)^{|\mathcal{S}|}
    \end{align}

    Defining the argument in the parenthesis to be $q$, the above expression is upper bounded by
    \begin{align}
        &= \sum_{|\mathcal{S}| = 1}^{\delta(n)} q^{|\mathcal{S}|} \leq \frac{q}{1-q},
    \end{align}
    as long as $q < 1$ for all $|\mathcal{S}|$. If we allow $\delta(n) \sim n^\beta$, where $0 < \beta < 1$, then $q \sim n^{\frac{1}{|\mathcal{S}|\varepsilon c} - \beta}$, which for large $n$, vanishes if $\beta > \frac{1}{|\mathcal{S}|\varepsilon c}$. This inequality is most difficult to satisfy when $|\mathcal{S}| = 1$, and so we obtain the condition:
    \begin{align}
        \beta > \frac{1}{\varepsilon c}.
    \end{align}
    When this condition is satisfied, then the above bound on the probability for $\mathcal{B}$ to fail to expand vanishes in the large $n$ limit. 
    
\end{proof}

\begin{remark}
    If one proceeds as we did in proving theorem \cref{theorem:losslessexpander} and bound each term in the summation in \cref{eq:sum bound}, then following some algebraic manipulations and in the large $n$ limit, one will obtain the inequality:
\begin{align}
    \varepsilon c \geq \frac{1}{|\mathcal{S}|} \frac{\log_2 m}{\log_2 \delta(n)}.
\end{align}

Since $m = \mathcal{O}(n)$, if we would choose $\delta (n) \sim \text{polylog}(n)$, then we see that to guarantee smaller set expansion we need to chose the left degree as $c \sim \frac{\log n}{\log \text{polylog} n}$. This is extremely slow growing but nevertheless diverges at large $n$.
\end{remark}

As mentioned earlier, a natural case to consider is that we take both left and right vertices to be placed on some manifold, with Euclidean space $\mathbb{R}^d$ being of particular interest, provided, every left vertex has a candidate neighbor set of sufficient size. For many regular placings of left and right neighbors, it is easy to estimate the size of the candidate the neighbor candidate set if one considers simply all right vertices within a given distance.
Even for both types of vertices randomly placed in $\mathbb{R}^d$, the resulting random geometric graph is a smaller set lossless expander.

\begin{lem} [Smaller set expanding random geometric graphs] \label{lemma:geometricsublinear}

    Given a unit volume in the space $\mathbb{R}^d$, consider a bipartite graph $\mB = (L,R,E)$ with $|L| = n, |R| = m = \mathcal{O}(n),$ and with left degree $c$ constructed as follows. Place each of the left and right vertices uniformly at random on the manifold. Define for each $\ell \in L$ a candidate neighbor set $\Upsilon_\ell \subset R$ as the subset set of right vertices contained within a volume $V_\ell$ surrounding $\ell$. Then, connect $\ell$ to $c$ vertices from $\Upsilon_\ell$ chosen uniformly at random.

    Now, $\forall \varepsilon > 0$, if $\forall \ell$, $V_\ell = \mO (m^{\beta - 1})$ with $\beta > (\varepsilon  c)^{-1}$, then almost surely $\forall \ell$ $|\Upsilon_\ell| > \delta$ with $\delta = \Omega(n^\beta)$  and $\mB$ is a left $(\delta(n), \gamma)$ smaller set vertex expander (\cref{def:vertexexpander}) with $\gamma > c(1-\varepsilon)$.
     
\end{lem}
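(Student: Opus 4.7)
The plan is to reduce this lemma to \cref{theorem:sublinearlossless}: once we know that every candidate set $\Upsilon_\ell$ has size at least $\delta(n) = \Omega(n^\beta)$ simultaneously for all $\ell$, the random geometric construction is just a special case of the abstract construction treated there, so smaller-set expansion follows immediately. Hence the only real work is a geometric probability estimate that the candidate sets are all large.

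First I would fix a single left vertex $\ell$ and condition on its position. Since the right vertices are placed i.i.d.\ uniformly on the unit-volume region, the probability that a given right vertex falls into the volume $V_\ell$ around $\ell$ is exactly $V_\ell$, and the events are independent across right vertices. Therefore $|\Upsilon_\ell|$ is $\mathrm{Binomial}(m, V_\ell)$ with mean $\mu = m V_\ell$. Reading the hypothesis as $V_\ell = \Theta(m^{\beta - 1})$ (only a lower bound is needed to make $\Upsilon_\ell$ large, so the $\mO$ in the statement is really a $\Theta$), this gives $\mu = \Theta(m^\beta) = \Theta(n^\beta)$ since $m = \Theta(n)$.

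Next I would apply a multiplicative Chernoff bound to obtain
\[
\Pr[\,|\Upsilon_\ell| \leq \mu/2\,] \leq \exp(-\mu/8) = \exp(-\Omega(n^\beta)),
\]
and then union bound over the $n$ left vertices:
\[
\Pr[\,\exists\,\ell : |\Upsilon_\ell| < \mu/2\,] \leq n \exp(-\Omega(n^\beta)) = o(1).
\]
So almost surely every candidate set has size at least $\delta(n) := \mu/2 = \Omega(n^\beta)$. On this high-probability event, the random geometric construction is exactly the setting of \cref{theorem:sublinearlossless}: each left vertex picks $c$ neighbors uniformly from a candidate set of size at least $\delta(n)$. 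Since the assumption $\beta > (\varepsilon c)^{-1}$ is precisely the condition of that theorem, we conclude that with probability $1 - o(1)$ the graph $\mB$ is a left $(\delta(n), c(1-\varepsilon))$ smaller-set vertex expander.

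There is no genuine obstacle; the argument is a concentration bound composed with a black-box citation. The only points to check are: (i) \cref{theorem:sublinearlossless} is proved for arbitrary candidate sets of the stated size, so conditioning on the geometric good event and then invoking it is legitimate, since the only randomness that the theorem uses is the uniform choice of $c$ neighbors from each $\Upsilon_\ell$, which is independent of the vertex placements; and (ii) the Chernoff tail $\exp(-\Omega(n^\beta))$ must beat the union-bound factor $n$, which is automatic for any $\beta > 0$ and is implied by $\beta > (\varepsilon c)^{-1}$. Boundary effects (e.g.\ if the volume $V_\ell$ exits the unit region) are harmless because they can only make the effective candidate volume smaller by a constant factor, preserving $\mu = \Theta(n^\beta)$; alternatively one may work on a flat torus to remove them entirely.
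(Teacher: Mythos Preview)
Your proposal is correct and follows essentially the same approach as the paper: a Chernoff bound on the binomial count $|\Upsilon_\ell|$ to guarantee all candidate sets are large, followed by a black-box reduction to \cref{theorem:sublinearlossless}. If anything, you are more careful than the paper in making the union bound over all $\ell$ explicit and in flagging the independence and boundary subtleties.
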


\begin{proof}
    Consider $\ell \in L$ and the volume $V_\ell = \alpha m^{\beta - 1}$ surrounding it, where $\alpha$ is some positive constant. We would like to know whether $|\Upsilon_\ell| \geq m^\beta\ \forall \ell \in L$. This requires that $\forall \ell$, $V_\ell$ must contain $\geq m^\beta$ right vertices. If this condition is met, then the proof of the construction in the lemma being a sublinear lossless expander follows identically to that of  \cref{theorem:sublinearlossless}.
    
Since the right vertices are placed iid uniformly at random in the unit volume, the probability that any given right vertex is in $V_\ell$ is just $V_\ell$.  Let $X = \sum_{r = 1}^{m}\mathbb{1}_{r \in V_{\ell}}$. Then for every $r$, $\mathbb{P}[\mathbb{1}_{r \in V_\ell} = 1] = V_\ell$. If fewer than $m^\beta$ right vertices are in $V_\ell$, then $X < m^\beta$. Note that $\mathbb{E}[X] = mV_\ell = \alpha m^\beta$. Then by the Chernoff bound on binomial distributions,
    \begin{align}
        \mathbb{P}\left(X \leq (1-\epsilon) \alpha m^\beta  \right) \leq \exp \left( - \frac{\epsilon^2 \alpha m^\beta }{2} \right).
    \end{align}

    In particular, if we now set $\alpha \geq 1$ , the probability for there to be arbitrarily fewer than $m^\beta$ right vertices within any $V_\ell$ vanishes exponentially in $m^\beta$. Thus, the candidate neighbor set for every $\ell$ is almost surely large enough. 
 \end{proof}

\begin{remark}
    In the above proof, we showed that the probability for there to be fewer than $m^\beta$ right vertices in a ball around $\ell$ of size $V_\ell$ vanishes. In fact a more general statement is true. Any arbitrary volume of size $\alpha  m^\beta$ in the region will almost surely contain $> m^\beta$ right vertices. This implies that odd scenarios such as clustering of most of the right vertices in a small region of the space is exponentially unlikely. 
\end{remark}

From a physical point of view, the construction of \cref{lemma:geometricsublinear} is of greater interest. Upon constructing the graph, then when viewed as a Tanner graph, the right vertices define checks that involve bits that are located within a connected volume of size $m^{\beta - 1}$. Identically, this means that in $\mathbb{R}^d$, each check only involves bits that are at most a distance $\mathcal{O}\left( m^{\frac{\beta-1}{d}}\right)$ apart. This quantity of course diverges but much more slowly than the linear divergence that would emerge from attempting to embed a true expander in $\mathbb{R}^d$.

The construction presented above is very general, and using it, it is easy to construct lossless smaller set expanders where the edge length of the graph is bounded relative to some metric. However, it is not symmetric between left and right vertices: for example while the left-degree of the graph is bounded by the constant $c$, the right degree might not be bounded in general. Furthermore, it is not obvious in which cases the graph will both be left and right expanding.
Hence, when taking this construction to define a linear code, we do in general not know whether this code will be LDPC. Furthermore, if one would try to take the hypergraph product of such codes to define a quantum LDPC codes, the right expansion properties become important even to lower bound the distance and both the bit and check degree. In theory, one could use \cref{lemma:vertexcorrection} to alleviate this, however this will yield an asymmetric construction. Because of this, it is clearly desirable to obtain a construction more symmetric between the left and right vertices. We provide such a construction in the next section, which can be viewed as a generalization of the configuration model.

\subsection{Diffusion Codes}\label{sec:construction}

In this section, we define in full detail the diffusion codes which we introduced informally in \cref{sec:summary_diffusion_codes}, and derive that if placed on the cycle graph, they lead a family of bipartite smaller set lossless expanders.

\begin{figure}
    \centering
    \includegraphics[width=\linewidth]{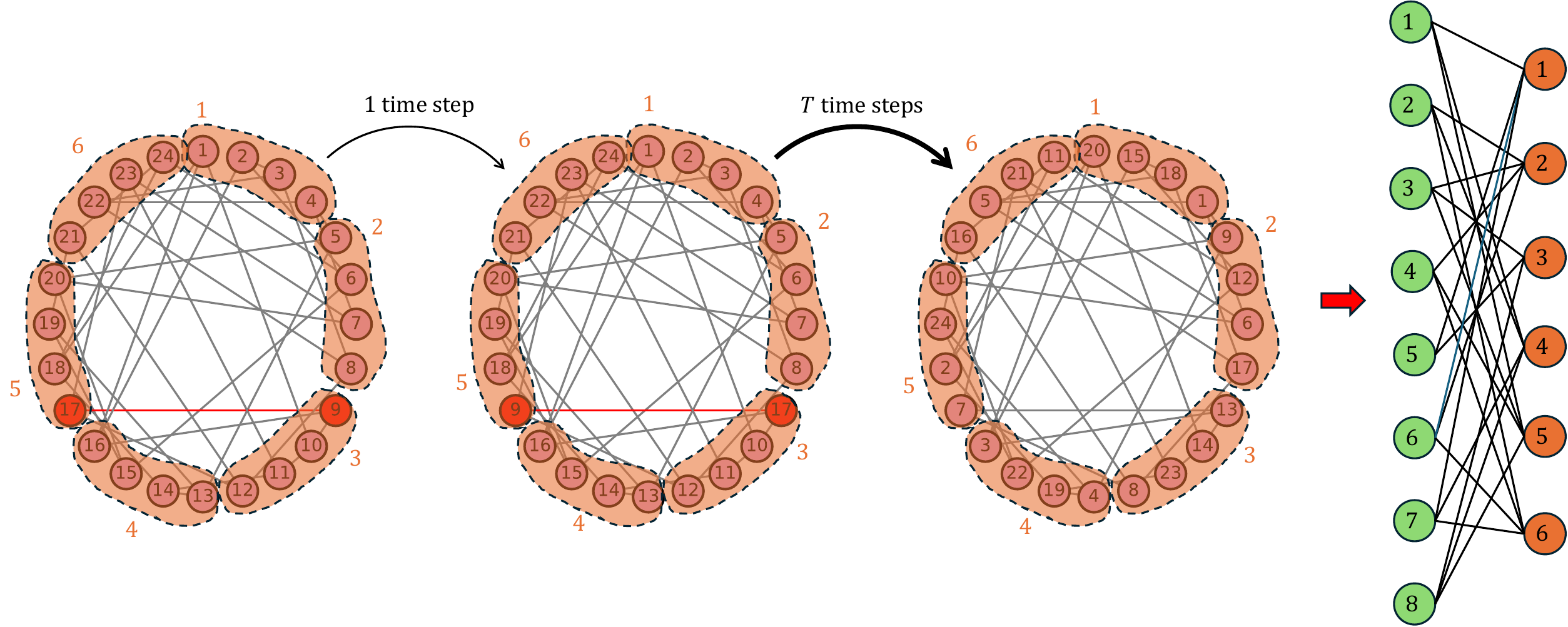}
    \caption{On the left we have a graph with a numbering of the vertices. The interchange process swaps pairs of vertex labels at each time step, the first of which is shown in the middle graph. Here the swap changed the group to which $9$ and $17$ belong. After many such swaps, the final position of the vertex labels, shown in the third graph, enables the creation of the bipartite Tanner graph on the right. The bit identities can be determined by the vertex labels. That is, node $i$ collapses into bit $\lceil i / \wbit \rceil$.
   }
    \label{fig:interchangeprocess}
\end{figure}

We begin with some definitions

\begin{defn}[Interchange Process] \label{def:interchange}
Let $\mG = (V,E)$  be a graph where every vertex is occupied by a distinguishable particle. On each edge, there is a Poisson clock of rate $1$. Each time a clock on an edge rings, the particles on the vertices belonging to that edge are swapped. 
\end{defn}

The interchange process, as defined here, is exactly that of the SWAP network described informally in earlier sections. 

\begin{defn} [Simple Exclusion Process] \label{def:sep}
    By declaring only a subset of particles as visible, and regarding the visible particles as indistinguishable, we obtain the induced process on $\mG$ known as the simple exclusion process (SEP) \cite{markovchains}.
\end{defn}

Running the interchange process for some time $T$ hence generates a permutation of the vertices of $\mG$. We can use this to define a random family of codes as below. Importantly, both the interchange process and the SEP are well studied Markov chains with known bounds on their mixing time \cite{markovchains}, and this can be used to derive rigorous results about the corresponding code family.

\begin{defn}[Diffusion Codes] \label{def:diffusioncodes}
Fix a number of bits $n$, checks $m$, desired bit degree $\wbit$ and desired check degree $\wcheck$. Specify a connected graph $\mG = (V,E_\mG)$ with $|V| = n\wbit = m\wcheck = N$. Next, partition the vertices $V$ into connected groups of $\wcheck$ vertices, with each group numbered from $1$ to $m$, and a map $\phi : V \to [m]$ which yields the group of a vertex. The groups will correspond to the sockets of a check. 
Finally, assign an initial numbering of vertices $\psi_0 : V \to [N]$ such that $\forall v \in V$, if $v$ is in group $j$, that is $\lceil \frac{\psi_0(v)}{\wcheck} \rceil = j$.

Given a time $T \in \mathbb{R}_+$, run the interchange process on $\mG$ with the vertex labels as the particles in order to generate a permutation of the vertex labels $\psi_T$. 
We then define the Tanner graph $\mB$ of the $(n,m,\wbit,\wcheck,T,\mG)$ diffusion code is defined as follows: $\mB = (L,R, E)$ has $|L| = n, |R| = m$. The elements of $E$ are defined according to the bijection $\chi : [N] \to E$ where $\chi(i) = \left( \lceil \frac{i}{\wbit} \rceil, \ \phi ( \psi_T^{-1}(i))\right)$. That is, given a vertex label $i$, we connect the bit $\lceil \frac{i}{\wbit} \rceil$ to the check corresponding to the group to which the vertex with label $i$ belongs.
\end{defn}

\begin{remark}
     There are two relevant graphs here. The first is the graph $\mathcal{G} = (V,E_\mG)$ on which the interchange process occurs. The second is the Tanner graph of the diffusion code $\mathcal{B} = (L,R,E)$. For the sake of clarity, in this section, when we say vertex, we are referring to an element of $V$ on the graph $\mathcal{G}$. We shall refer to the elements of $L$, the left vertices of $\mathcal{B}$ as "bits" and we shall refer to the elements of $R$, the right vertices of $\mathcal{B}$ as "checks." In context, we will make clear when we refer to the edges of $\mathcal{G}$ or the edges of $\mathcal{B}$.
\end{remark}

The diffusion code creation process is illustrated in \cref{fig:interchangeprocess}. We have a graph $\mG$ which we partition, and we label all the vertices. Then we sequentially begin swapping labels between connected vertices, resulting in a permutation of the vertex labels which was a product of strictly local permutations. Then, by looking at each vertex, its label and to which partition it belongs, we can obtain all the edges of $\mB$.

\begin{example} [$T= 0 $ diffusion code]
    Consider a $(n,n,\wbit,\wbit,\mG)$ diffusion code. That is, the bit degree and the check degree are identical and we have the same number of bits and checks. If $T =  0$, then $\psi_T = \psi_0$. However, because $\psi_0$ was defined such that every vertex had a label $i$ which, when $\lceil \frac{i}{\wbit} \rceil$ is calculated, always equals the identity of the partition of that vertex, the resulting Tanner graph yields a one to one matching. From this, we see that $T$ sets a degree of nonlocality to the code. The longer we run the interchange process, the greater the spread in checks to which any particular bit are connected. 
\end{example}

\begin{example}[$T \to \infty$ diffusion code]
If $T\to \infty$, then the permutation generated on $\psi_0$ will approach that of a uniform permutation. In this case, because marginally every vertex label has a uniform probability to be in any partition, this will result in a Gallager code. 
    
\end{example}

Can we obtain any general guarantees on the code properties, such as code expansion for this construction? From the examples above, we see that $T = 0$ is clearly not an expander while $T = \infty$ yields a Gallager code which is a true expander (\cref{lemma:gallager}). 
Intuitively, we might expect see that as the interchange process unfolds, the set of possible neighbors for each bit grows diffusively as $\sqrt{T}$. 
In this case, we expect that after a time $T \sim n^{2\beta}$ for some $\beta > 0$, the average candidate neighbor set for each bit is of size $\mO(n^\beta)$, and from  \cref{theorem:sublinearlossless} we might then expect smaller set expansion on that scale. Importantly however, in \cref{theorem:sublinearlossless}, we assume that the neighboring checks are chosen uniformly at random from the candidate check. We hence have to show that the permutation generated by the interchange process mixes on the relevant scale. This is a challenging problem in general, but below we provide a proof for the diffusion process on the cycle graph.

\subsubsection{Cycle Graph} \label{sec:cyclegraph}

In the following, we consider diffusion codes constructed from the interchange process on the cycle graph $\mathcal{C}_N$. We will show that for appropriate $T$, we obtain bounded degree smaller set lossless expanders with a natural embedding into a 1D lattice, and a bound on the typical size of checks. 


\begin{theo} [Lossless smaller set expansion in diffusion codes on cycle graphs] \label{theorem:diffusioncodecycle}
        Take $\mathcal{C}_N$ the cycle graph of size $N = n \wbit = m \wcheck$, and its obvious partition into contiguous groups of $\wcheck$.  
        For all $\varepsilon, \alpha, \beta > 0$, if $T \propto n^\alpha$ and $(\varepsilon \wbit)^{-1} < \beta < \alpha/2$, then the Tanner graph of the $(n, m, \wbit,\wcheck, \mathcal{C}_N, T)$ diffusion code is, with probability $q(N) > 0$
        , a $(\delta, \gamma)$ lossless smaller set left vertex expander with $\delta \propto n^{\beta}$ and $\gamma > \wbit(1-\varepsilon)$. Further, $q(N)\to 1$ as $N\to\infty$.
\end{theo}

\begin{remark}
    The lossless smaller set expansion of the Tanner graph of diffusion codes implies a distance bound and linear confinement, single-shot error correction against random errors, and self correction. While the rest of this section is devoted to proving the above statement, we discuss these consequences properties in \cref{sec:bottleneck}.
\end{remark}


\begin{cor} [Tanner graphs of Diffusion codes on cycle graphs are smaller set expanders]
    Diffusion codes constructed from the interchange process on the cycle graph $\mathcal{C}_N$ have, as Tanner graphs, smaller set expanders.
\end{cor}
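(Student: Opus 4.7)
The plan is to prove the stronger Theorem~\ref{theorem:diffusioncodecycle}, of which the Corollary is an immediate consequence. The overall architecture mirrors the proof of Theorem~\ref{theorem:sublinearlossless}: fix a candidate subset $S\subset L$ with $|S|\leq\delta$, bound the probability that $S$ fails to expand, and take a union bound over \emph{neighbor-connected} subsets (whose count is only polynomial in $n$). The new ingredient is that the ``candidate neighbor set'' for a given bit is not specified in advance --- it is produced implicitly by the interchange process on $\mathcal{C}_N$. I would begin by translating the expansion question into a statement about the SEP. The $\wbit|S|$ labels initially occupying the sockets of the bits in $S$ form a contiguous block of marked particles on $\mathcal{C}_N$; after running the interchange process for time $T$, we need at least $\wbit(1-\varepsilon)|S|$ of these labels to land in distinct groups of $\wcheck$ consecutive sites. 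Calling a label a ``repeat'' if it lands in the same check group as some earlier label (revealed in an arbitrary order), the task is to show that the number of repeats exceeds $\varepsilon\wbit|S|$ only with exponentially small probability in $|S|$.

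The core technical challenge, and the one that cannot be handled by off-the-shelf SEP results, is that $T\propto n^\alpha$ is much smaller than the global mixing time $\Theta(N^2\log N)$ of the SEP on $\mathcal{C}_N$. Hence we cannot treat the post-$T$ distribution of labels as uniform, which is what a direct analog of the Gallager bound (Lemma~\ref{lemma:gallager}) would need. My plan is to by-pass global mixing by working with the joint distribution of \emph{interparticle gaps}. Collisions between two marked particles occur precisely when their gap (modulo the coarse-graining by $\wcheck$) is less than $\wcheck$, so controlling the law of gaps suffices. I would formalize a Markov chain acting directly on the ordered vector of gaps induced by the SEP, and attempt to compare the gap chain on $\mathcal{C}_N$ with the one on a smaller cycle $\mathcal{C}_{N'}$ of size $N'=\Theta(\sqrt{T/\log T})$, chosen precisely so that the SEP on $\mathcal{C}_{N'}$ does mix within time $T$. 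A stochastic monotonicity relation --- gaps on the larger cycle stochastically dominate gaps on the smaller cycle when both are started from the same clustered configuration --- would allow us to lift the easily computed collision probabilities from the mixed chain on $\mathcal{C}_{N'}$ to an upper bound on the cycle of interest. Under the stationary distribution on $\mathcal{C}_{N'}$, the probability that a given pair of labels collides is $O(\wcheck/N')=O(\wcheck/\sqrt T)$, yielding a per-label repeat probability of $O(\wbit|S|\wcheck/\sqrt T)$.

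Once this ``effective uniformity on scale $\sqrt T$'' is in hand, the rest is a direct reprise of Theorem~\ref{theorem:sublinearlossless}. The probability of more than $\varepsilon\wbit|S|$ repeats within $S$ is bounded by $\bigl(e\wbit|S|\wcheck/(\varepsilon \sqrt{T})\bigr)^{\varepsilon\wbit|S|}$ up to constants; the number of neighbor-connected subsets of size $|S|$ is at most $n\Phi^{|S|}$ with $\Phi=\Phi(\wbit)$; and summing the resulting geometric series gives a vanishing bound whenever the base $n^{1/|S|}\Phi(e\wbit|S|\wcheck/\varepsilon\sqrt T)^{\varepsilon\wbit}$ stays below $1$. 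Setting $\sqrt T\propto n^{\alpha/2}$ and $\delta\propto n^{\beta}$, the worst case $|S|=1$ forces exactly the condition $\beta>(\varepsilon\wbit)^{-1}$ given in the theorem, while the constraint $\beta<\alpha/2$ is what keeps the collision probability small in the first place. The probability of failure over all $|S|\leq\delta$ is then $o(1)$, giving $q(N)\to1$, and the Corollary follows.

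The hard part, and the single step I expect to consume most of the work, is the stochastic monotonicity comparison between gap chains on $\mathcal{C}_N$ and $\mathcal{C}_{N'}$. Gap-valued processes are not automatically monotone under the natural orderings (the ``pushing back'' of neighboring particles couples their gaps), so the comparison has to be set up by an explicit coupling that respects the wrap-around constraint $\sum_i g_i = N-k$, and then one must verify that relaxing to $\sum_i g_i = N'-k$ weakly shrinks every individual gap. Once this coupling is constructed and the mixing time bound for the SEP on the smaller cycle is imported from~\cite{markovchains}, everything else is a routine reprise of the argument templates already developed in Sections preceding the theorem.
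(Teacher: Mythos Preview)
Your proposal is correct and follows essentially the same approach as the paper: both derive the corollary immediately from Theorem~\ref{theorem:diffusioncodecycle}, and your outlined proof of that theorem---reducing expansion to a small-gaps statement for the SEP, establishing stochastic monotonicity of the gap process to compare $\mathcal{C}_N$ with a smaller cycle $\mathcal{C}_{N'}$ whose SEP mixes within time $T$, bounding small gaps in the stationary distribution on $\mathcal{C}_{N'}$, and then reprising the neighbor-connected union bound of Theorem~\ref{theorem:sublinearlossless}---matches the paper's argument almost exactly. One small slip: the bits in $S$ need not be contiguous on $\mathcal{C}_N$, so the marked particles need not form a ``contiguous block''; the monotonicity argument must (and in the paper does) handle arbitrary initial configurations, which your coupling plan already accommodates.
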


\begin{proof}
    This follows from \cref{def:expander codes} and proof of \cref{theorem:diffusioncodecycle}
\end{proof}

The proof of the theorem is presented in the next sub section. The idea is to reduce the smaller set expansion statement to a statement about the distance of particles in the simple exclusion process. In particular, we will need to derive guarantees about \emph{small} numbers of particles $k \ll N$ separating into large pairwise distances already at times $T \ll t_{{\rm mix}, N}$.

In particular, the object of central interest to us will be what we call the \emph{gap vector}.



\begin{defn} [Gap Vector] \label{def:gapvector}
    For $k$ particles on $\mathcal{C}_N$, the gap vector $\mathbf{g}$ is defined as the vector which stores the distance between successive particles. 
    We designate an arbitrary particle as the first and label particles clockwise starting from the first. Then, $g_i$ is the distance between the particle $i$ and particle $i+1$.

\end{defn}

\begin{remark}
The gap vector is subject to the constraint $\sum_i g_i = N$. 
\end{remark}

\begin{remark} 
The mapping from particle positions to gap vectors is many-to-one. However, different particle positions that map to the same gap vector are related by translations. 
\end{remark}



The gap vector allows for facile organization of the states of the SEP. 

\begin{defn} [Partial Ordering] \label{def:partialorderinggap}
    Given two vectors $\mathbf{g}, \mathbf{g}'\in \mathbb{N}^k$, we say that $\mathbf{g}' \preceq \mathbf{g}$ iff $g'_i \leq g_i \forall i$. 
\end{defn}


\begin{remark}
    It is possible that under a cyclic shift of all of the elements of $\mathbf{g}'$, $\mathbf{g}'$ becomes ordered with respect to $\mathbf{g}$. This is accomplished by redefining the particle 1 for the $\mathbf{g}'$ system. 
\end{remark}

\begin{remark}
    If $\mathbf{g}$ and $\mathbf{g}'$ both correspond to states of the SEP on $\mathcal{C}_N$, then $N = \sum_i g_i = \sum_i g_i'$ and it is of course impossible that  $\mathbf{g}' \preceq \mathbf{g}$ unless $\mathbf{g}' = \mathbf{g}$. Therefore, the natural case to have in mind is that $\mathbf{g}'$ corresponds to a system of $k$ particles on a smaller graph $\mathcal{C}_{N'}$ with $N' \leq N$ (see \cref{fig:sep ring}).
\end{remark}

\begin{example} [Smaller gap vector] \label{example:smaller gap vector}
    We can obtain a $\mathbf{g}'$ from $\mathbf{g}$ with $\mathbf{g}' \preceq \mathbf{g}$ via the following. Let $\mathbf{g}$ be the gap vector for a particular particle configuration on $\mathcal{C}_N$. Now, remove a subset of vertices from $\mathcal{C}_N$ which are unoccupied by particles and join up the remaining vertices. That is, if $v_i, v_{i+1},v_{i+2}$ are vertices in $\mathcal{C}_N$ and we remove $v_{i+1}$, now there is an edge $(v_i, v_{i+2})$ in the new graph. Each time a vertex is removed, if that vertex is in the gap $g_i$, then $g_i$ will decrease by 1. This vertex removal process will generate a new $\mathbf{g}'$ that is strictly ordered with respect to $\mathbf{g}$.
\end{example}

We will now define a Markov chain acting directly on $\mathbf{g}$. We will show later that a lazy version of this Markov chain is exactly the evolution of the gap vector in the SEP.

\begin{defn} [Gap Process] \label{def:gapprocess}
    We define the gap process as the following Markov process acting on $\vec g \in \mathbb{N}^k$ (with $g_i > 0\forall i$). 
    Now, each step first choose an entry $i\in [k]$ uniformly at random and then with probability one half each attempt either
        \begin{flalign}\label{eq:gap_increment}
            &&
            g_i(t+1) = g_i(t) + 1,
            &&
            g_{i-1}(t+1) = g_{i-1}(t) - 1,
            &&
        \end{flalign}
        or
        \begin{flalign}\label{eq:gap_decrement}
            &&
            g_i(t+1) = g_i(t) - 1,
            &&
            g_{i-1}(t+1) = g_{i-1}(t) + 1.
            &&
        \end{flalign}
       The updates are rejected if $g_{i-1} = 1$ or $g_{i} = 1$, respectively, to ensure that $g_i > 0\,\forall i$. For the continuous-time version of this chains, let a ``step'' occur upon rings of a Poisson clock of a specified rate. 
\end{defn}

In words, in the gap process at each step we choose a gap at random and then throw a coin to either try and increment or decrement it by 1. A move is rejected if it would shrink a gap to be below 1. 

\begin{remark}
    The above process has an obvious coupling between two copies acting on different initial states: at each time step, we choose the same entry $i$ and the choice of whether to try and increment/decrement. This means we propose the same move on both states, and the state determines whether the move is rejected or not.
\end{remark}

We may also define a "lazy" gap process.

\begin{defn} [Lazy Gap Process] \label{def:lazy gap process}
    We define the lazy gap process as the following Markov process acting on $\mathbf{g} \in \mathbb{N}^k$ (with $g_i > 0\ \forall i$). At each step, with probability $q$ implement an update according to the gap process (\cref{def:gapprocess}), and with probability $1-q$, do nothing. For the continuous-time version of this chains, let a ``step'' occur upon rings of a Poisson clock of a specified rate.
\end{defn}

We will now argue that the SEP induces a lazy gap process. 

\begin{lem} [SEP induces a lazy gap process] \label{lemma:SEP induces gap}
    Let $\mathbf{g}$ be the gap vector corresponding to a state of $k$ particles on the cycle graph $\mathcal{C}_N$, defined according to  \cref{def:gapvector}. Then, the lazy gap process defined in  \cref{def:lazy gap process} with $q = 2k /N$ and clock rate $N$ is the Markov chain induced on the gap vector $\mathbf{g}$ corresponding to the state in the SEP (\cref{def:sep}) on the cycle graph $\mathcal{C}_N$.
\end{lem}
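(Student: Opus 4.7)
The plan is to show equality of the two continuous-time Markov chains on the state space of admissible gap vectors $\{\mathbf{g}\in\mathbb{Z}_{>0}^k : \sum_i g_i = N\}$ by matching their off-diagonal generator entries one by one. Since both chains live on the same finite space and change $\mathbf{g}$ only through events we can enumerate, checking transition rates suffices.

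First I would translate each SEP event into its effect on $\mathbf{g}$. A ring of an edge $e=(v,v')$ of $\mathcal{C}_N$ proposes a swap of the labels at its two endpoints, but this leaves $\mathbf{g}$ invariant unless exactly one endpoint of $e$ carries a visible particle and the other is a hole. Writing $p_i$ for the position of particle $i$, the right-edge $(p_i,p_i+1)$ has a hole on its right iff $g_i\ge 2$; a ring of this edge then sends $p_i\mapsto p_i+1$, which by $g_j=p_{j+1}-p_j$ yields $g_i\mapsto g_i-1$ and $g_{i-1}\mapsto g_{i-1}+1$. This is exactly the ``decrement $g_i$'' proposal of \cref{def:gapprocess}, and the feasibility condition $g_i\ge 2$ matches the rejection rule verbatim. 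An identical argument on the left-edge realizes the ``increment $g_i$'' move under $g_{i-1}\ge 2$. Hence SEP state-changing events on the gap vector are in exact bijection with the non-rejected gap process proposals, entry by entry and direction by direction.

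Next I would match rates. Every edge of $\mathcal{C}_N$ rings at rate $1$, so for each $i$ the two directed moves on entry $i$ fire at rate $1$, each gated by its rejection indicator. In the lazy gap process with clock rate $N$ and laziness $1-q$, a gap step occurs at rate $Nq$; conditioned on a step, the entry is uniform over $[k]$ and the direction is uniform, so each directed proposal on entry $i$ fires at rate $Nq\cdot(1/k)(1/2)$. The choice $q=2k/N$ normalizes this to $1$, matching the SEP. Every off-diagonal generator entry therefore agrees, so the induced Markov chain on $\mathbf{g}$ is the lazy gap process with the stated parameters.

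The one subtlety to flag, and what I expect to be the main (but entirely superficial) obstacle, is that the two descriptions account for ``null'' events differently: the SEP has rings of hole--hole and particle--particle edges, while the lazy gap process absorbs nulls both through the laziness probability $1-q$ and through rejected proposals. These rates do not agree in general, but they contribute only to the diagonal of the generator and thus play no role in the induced chain on $\mathbf{g}$. It suffices to verify that every off-diagonal entry matches, which is precisely what the choice $q=2k/N$ is engineered to ensure despite the apparent mismatch between the global clock rate $N$ and the per-particle hop rate $2k$ in the SEP.
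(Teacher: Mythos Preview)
Your proof is correct and follows essentially the same route as the paper: identify the bijection between SEP particle hops and gap process increment/decrement moves, then match transition rates to pin down $q=2k/N$. The only difference is framing---the paper writes both processes as rate-$N$ Poisson-clocked discrete chains and matches \emph{all} transition probabilities (including the ``do nothing'' probability), whereas you work directly with the CTMC generator and match only the off-diagonal entries.

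One small correction to your final paragraph: your worry that the null rates ``do not agree in general'' is actually unfounded. With $q=2k/N$, the total null probability per clock tick is
\[
(1-q) + \frac{q}{k}\sum_i \mathbb{1}_{g_i=1} \;=\; 1 - \frac{2k}{N} + \frac{2}{N}\sum_i \mathbb{1}_{g_i=1} \;=\; \frac{N - 2k + 2\sum_i \mathbb{1}_{g_i=1}}{N},
\]
which is exactly the fraction of edges in $\mathcal{C}_N$ whose ring leaves the gap vector unchanged (hole--hole edges plus particle--particle edges). The paper verifies this explicitly. Of course, as you note, this is forced once the off-diagonals match, so your argument is complete regardless---but there is no hidden discrepancy to worry about.
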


\begin{proof}
We show the result by explicit calculation.
Let $\mathbf{g}$ be a generic $k$ particle gap vector. Let us delineate all the moves of the lazy gap process acting on $\mathbf{g}$ and their probabilities:
\begin{enumerate}
    \item $\forall i$, with probability $p_i^\uparrow = \frac{q}{2k}\mathbb{1}_{g_{i-1}(t) > 1}$ do
    \begin{flalign}
        &&
        g_i(t+1) = g_i(t) + 1
        &&
        g_{i-1}(t+1) = g_{i-1} - 1,
        &&
    \end{flalign}
    \item $\forall i$, with probability $p_i^\downarrow = \frac{q}{2k} \mathbb{1}_{g_{i}(t) > 1}$,
    \begin{flalign}
        &&
        g_i(t+1) = g_i(t) - 1
        &&
        g_{i-1}(t+1) = g_{i-1} + 1
        &&
    \end{flalign}
    
    \item Nothing with probability $p_0 = (1-q) + q \frac{\sum_i\mathbb{1}_{g_i(t) \leq 1}}{k}$.
\end{enumerate}

Next, note that rather than define the SEP as evolving any time a rate 1 clock on an edge rings, we could have equivalently defined the SEP in the following way: Each time a rate $N$ Poisson clock rings, choose an edge uniformly at random to apply a move. 

Our statement follows if we can find $q$ such that these probabilities coincide with the dynamics that is induced on the gap vector in the SEP. We hence have to compute the dynamics of the gap vector in the SEP. 
This is relatively straightforward. Each of the moves in the gap process corresponds to choosing a specific edge in the SEP, so clearly the corresponding probability must be either zero (if the move is impossible) or $1/N$, so 
\begin{flalign}
    &&
    p_i^{\uparrow\prime} = \frac{1}{N}\mathbb{1}_{g'_{i-1}(t) > 1} 
    &&
    p_i^{\downarrow\prime} = \frac{1}{N}\mathbb{1}_{g'_{i}(t) > 1} 
    &&
\end{flalign}
On the other hand, the probability of doing nothing corresponds to choosing an edge that is not adjacent to any particle, which is given by
\begin{equation}
    p_0' = \frac{N - 2 k +2\sum_i \mathbb{1}(g_i'=1)}{N}.
\end{equation}
where $2k-2\sum_i \mathbb{1}(g_i'=1)$ is the number of distinct edges adjacent to a particle.
By observing that the transition probabilities in the lazy gap process and in the SEP agree if $q = 2k/N$, the proof is concluded.

\end{proof}
\begin{cor} [SEP induces a gap process] \label{corollary:sep induces nonlazy}
    The SEP also induces a non-lazy gap process on $\mathbf{g}$ where the clock is of rate $2k$.
\end{cor}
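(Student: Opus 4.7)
The plan is to deduce the corollary from \cref{lemma:SEP induces gap} by applying the thinning property of Poisson processes. Recall that the lemma produced the SEP as a lazy gap process with rate-$N$ clock and non-laziness probability $q = 2k/N$. Concretely, the continuous-time lazy gap process can be realized by ringing a rate-$N$ Poisson clock and, at each ring, independently drawing a Bernoulli$(q)$ variable: if it is $1$, perform a (non-lazy) gap-process step as in \cref{def:gapprocess}; otherwise do nothing.

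The key observation I would make is that the subsequence of rings on which the Bernoulli variable equals $1$ is itself a Poisson process, of rate $qN = 2k$, by the standard thinning theorem for independent marking of Poisson processes. Conditional on any such ring being ``accepted,'' the update carried out on $\mathbf{g}$ is drawn from exactly the single-step transition law of the non-lazy gap process (uniform choice of $i$, symmetric sign, and rejection only when $g_{i-1}=1$ or $g_i=1$, which is already part of the gap-process definition). Hence the restriction of the lazy process to its accepted clicks has, as a continuous-time Markov chain, the same law as the gap process driven by a Poisson clock of rate $2k$.

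Since the two realizations agree on every trajectory, I would conclude that the induced process on $\mathbf{g}$ from the SEP on $\mathcal{C}_N$ is equally well described as a (non-lazy) gap process with clock rate $2k$, which is precisely the claim. I don't foresee an obstacle here: the content beyond the preceding lemma is only the Poisson-thinning bookkeeping, and it requires no further properties of the SEP.
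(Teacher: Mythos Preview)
Your proposal is correct and follows essentially the same approach as the paper: the paper's proof simply states that ``the laziness may be removed by choosing a slower clock rate,'' turning the rate-$N$ clock into a rate-$qN=2k$ clock, and your Poisson-thinning argument is precisely the standard justification for that sentence.
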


\begin{proof}
    By lemma \cref{lemma:SEP induces gap}, the SEP induces a lazy gap process. The laziness may be removed by choosing a slower clock rate. In this case, the rate $N$ clock becomes a rate $qN = 2k$ clock.
\end{proof}
\begin{remark} \label{remark:particle configs to gaps}
Although the map from particle configurations to gap vectors is inherently many to one, if one has an initial configuration of particles, and the time evolution of the corresponding gap vector $\{\mathbf{g}(1), \mathbf{g}(2), \dots\}$, then the time evolution of that particular configuration of particles is specified. Knowing which gaps shrunk or enlarged at each time step yields which particle moved at each time step.
\end{remark}

The central property that we will use is that the gap proccess has the property of stochastic monotonicity. Intuitively, this says that if we couple to gap processes and start one of the copies in a state with smaller gaps, this copy will remain to have smaller gaps for all times.

\begin{figure}
    \centering
    \includegraphics[width=\linewidth]{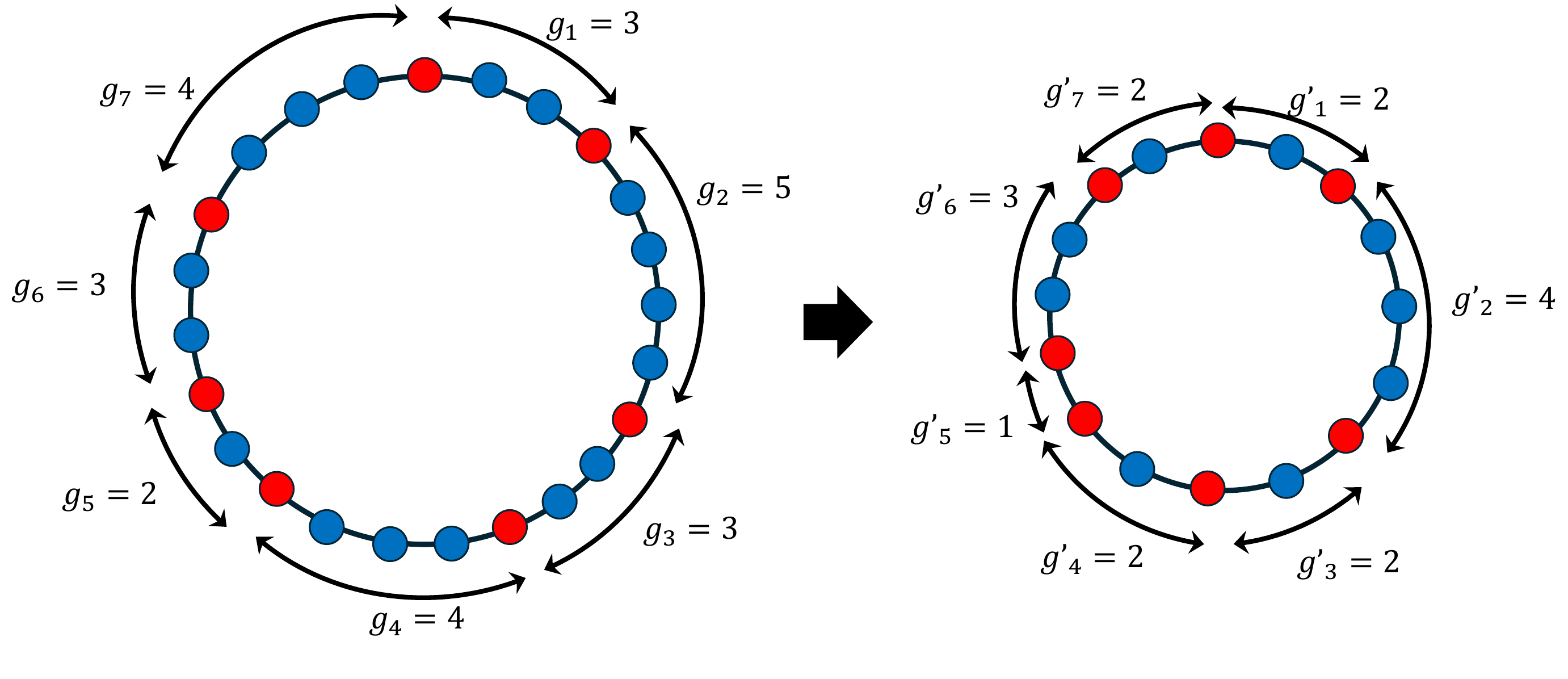}
    \caption{\textbf{Left}: A configuration of $7$ particles on $\mathcal{C}_{24}$. The 12 o'clock vertex is set as the origin and we designate the particle at that vertex as particle 1, we obtain a gap vector where each of the gaps are shown along the ring. The groups of vertices corresponding to a check are drawn. \textbf{Right}: By removing some number of vertices, we obtain a particle configuration on a smaller graph $\mathcal{C}_{16}$. Each of the gaps $g_i'$ are strictly smaller than their counter parts $g_i$.  }
    \label{fig:sep ring}
\end{figure}

\begin{lem}[Stochastic Monotonicity in the Gap Process] \label{lemma:stochasticdomination}
    Let $\mathbf{g}, \mathbf{g}' \in \mathbb{N}$ such that $\mathbf{g}' \preceq \mathbf{g}$ under the partial ordering defined in  \cref{def:partialorderinggap}. Under the obvious coupling of two gap processes, we have $\mathbf{g}'(t) \preceq \mathbf{g}(t)$ for all $t$.
\end{lem}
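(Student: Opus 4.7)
The natural approach is a standard coupling argument: use the obvious coupling in which both chains at each step pick the same index $i\in[k]$ and the same direction choice ($\uparrow$ or $\downarrow$). Under this coupling, only coordinates $i-1$ and $i$ can change, and they change by $\pm 1$. The plan is to prove the lemma by induction on the number of steps, reducing to the claim that $\mathbf{g}'(t)\preceq\mathbf{g}(t)$ implies $\mathbf{g}'(t+1)\preceq\mathbf{g}(t+1)$ for a single step of the coupled chain.

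The bulk of the work is a small case analysis on the status (accepted or rejected) of the proposed move in each copy. If both copies accept, the two affected coordinates change by the same $\pm 1$ in each copy, so all pairwise inequalities $g'_j \le g_j$ are preserved trivially. If both reject, nothing changes. The nontrivial case is when the move is accepted in one copy but rejected in the other. Here one observes that for the $\uparrow$-move, rejection happens precisely when $g_{i-1}=1$, so if the $\mathbf{g}'$-chain accepts (meaning $g'_{i-1}>1$), then by the inductive hypothesis $g_{i-1}\ge g'_{i-1}>1$ and the $\mathbf{g}$-chain also accepts; the symmetric statement holds for the $\downarrow$-move. Therefore the only asymmetric situation possible is that $\mathbf{g}$ accepts while $\mathbf{g}'$ rejects, and I will check directly that in this case the order is still preserved: for an $\uparrow$ move with $g'_{i-1}=1 < g_{i-1}$, one needs $g'_{i-1}=1 \le g_{i-1}-1$, which holds because $g_{i-1}\ge 2$; for a $\downarrow$ move with $g'_i=1<g_i$, one needs $g'_i=1\le g_i-1$, which holds for the same reason. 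The coordinate that only moves in $\mathbf{g}$ and increases there (e.g.\ $g_i\to g_i+1$ in the $\uparrow$ case) trivially preserves its inequality.

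The main obstacle I anticipate is precisely this asymmetric-rejection case, because it is the place where the naive coupling could in principle break monotonicity; however, the structure of the rejection condition (it only forbids decreases below $1$ on the smaller entry) is exactly what makes the argument go through. Everything else is bookkeeping. Finally, since the continuous-time process is obtained from the discrete one by attaching a Poisson clock that fires at the same moments in both copies, the discrete-time induction transfers immediately to the continuous-time setting, giving $\mathbf{g}'(t)\preceq\mathbf{g}(t)$ for all $t\ge 0$.
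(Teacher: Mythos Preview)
Your proposal is correct and follows essentially the same approach as the paper's proof: induction on steps under the obvious coupling, reducing to the observation that rejection on $\mathbf{g}$ forces rejection on $\mathbf{g}'$, so the only asymmetric case is $\mathbf{g}$ accepting while $\mathbf{g}'$ rejects, and there the decreasing coordinate in $\mathbf{g}$ stays at least $1=g'_j$. Your write-up is slightly more explicit (treating the $\uparrow$ and $\downarrow$ cases separately and noting the continuous-time transfer), but the argument is the same.
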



\begin{proof}
We prove this by induction. The base case is true by assumption.
We then want to show that for all $t$, if $g_i'(t) \leq g_i(t)$, then $g_i'(t+1) \leq g_i(t+1)$.

At time $t$, under the coupling we chose to update the same entry $i$, and whether to try and increment of decrement it, or keep it the same. Clearly, if the update is to not change the entry, then the partial ordering is preserved.
If a move is accepted or rejected on both chains, then the partial ordering is preserved. Now, note that if a move is rejected on $\mathbf{g}$, it is surely rejected on $\mathbf{g'}$, because if $g_i(t)=1$ then by assumption $g_i'(t)=1$. So only case we have to check is a move accepted on $\mathbf{g}$ and rejected on $\mathbf{g'}$.
In this case, we decrease some $g_j$ but do not decrease $g_j'$. However, by assumption $g_j'(t) = 1$ (otherwise the move would not have been rejected on $\mathbf{g'}$) and $g_j(t) > 1$ (otherwise the move would not have been accepted on $\mathbf{g}$) which implies $g_j(t+1) \geq 1 = g_j'(t+1)$.
This concludes the induction step and the statement follows.

\end{proof}
\begin{remark}
    It is important to note here that every $\Delta t = 1$ corresponds to a \textit{single move}, or single ring of the rate $N$ clock. This translates to a ring of one of the $N$ clocks on $\mathcal{C}_N$, when thinking about the gap process in relation to the SEP. 
\end{remark}

\begin{remark}
    By lemma \cref{lemma:stochasticdomination}, we see that if $\mathbf{g}'(0) \preceq \mathbf{g}(0)$, then $\mathbf{g}$ stochastically dominates $\mathbf{g}'$.
\end{remark}

\begin{cor} \label{corollary:monotonic nondecreasing}
    Let $f : \mathbb{N}^k \to \mathbb{R}$ be a monotonically non-increasing function. If $\mathbf{g}',\mathbf{g}$ are two $k$-particle gap vectors evolving under the gap process (\cref{def:gapprocess}) satisfying $\mathbf{g}'(0)\preceq \mathbf{g}(0)$, then $f(\mathbf{g}'(T)) \geq  f(\mathbf{g}(T))$ almost surely 

    This furthermore implies $\mathbb{P}\left[f(\mathbf{g}(T)) \leq d \right] \geq \mathbb{P}\left[f(\mathbf{g}'(T)) \leq d \right]$ for any $d$.
\end{cor}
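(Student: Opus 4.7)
The corollary follows almost immediately from the stochastic monotonicity established in Lemma \ref{lemma:stochasticdomination}, so the plan is short. First, I would invoke the obvious coupling between the two gap processes, in which at each time step both chains draw the same index $i \in [k]$ and the same increment/decrement proposal. Under this coupling, Lemma \ref{lemma:stochasticdomination} guarantees that $\mathbf{g}'(t) \preceq \mathbf{g}(t)$ for every $t$, and in particular at $t = T$ we have $\mathbf{g}'(T) \preceq \mathbf{g}(T)$ almost surely.

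Next, I would apply the hypothesis that $f : \mathbb{N}^k \to \mathbb{R}$ is monotonically non-increasing with respect to the partial order $\preceq$ of Definition \ref{def:partialorderinggap}. By definition, $\mathbf{g}'(T) \preceq \mathbf{g}(T)$ means that every component of $\mathbf{g}'(T)$ is at most the corresponding component of $\mathbf{g}(T)$, so monotonicity of $f$ yields $f(\mathbf{g}'(T)) \geq f(\mathbf{g}(T))$. Since this holds on the full probability-one event on which the coupling preserves the ordering, the inequality holds almost surely, which is the first claim.

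For the probability statement, I would argue directly from the almost sure inequality: on the event $\{f(\mathbf{g}'(T)) \leq d\}$, the chain of inequalities $f(\mathbf{g}(T)) \leq f(\mathbf{g}'(T)) \leq d$ holds almost surely, so this event is (up to a null set) contained in $\{f(\mathbf{g}(T)) \leq d\}$. Taking probabilities of both sides and noting that coupling preserves the marginal distributions of $\mathbf{g}(T)$ and $\mathbf{g}'(T)$ gives $\mathbb{P}[f(\mathbf{g}(T)) \leq d] \geq \mathbb{P}[f(\mathbf{g}'(T)) \leq d]$, as required.

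There is no genuine obstacle here: the entire argument is a standard coupling-plus-monotonicity deduction, and the only subtle point worth flagging is that the probabilistic statement requires the coupling to be well-defined (which it is, by the remark following Definition \ref{def:gapprocess}) and that the marginal law of each chain under the coupling equals its original law. Both are immediate from the construction of the coupling as a common choice of proposal $(i, \uparrow/\downarrow)$ at each step.
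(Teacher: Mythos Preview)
Your proposal is correct and follows essentially the same route as the paper: invoke the coupling and Lemma~\ref{lemma:stochasticdomination} to get $\mathbf{g}'(T)\preceq\mathbf{g}(T)$ almost surely, then apply the non-increasing property of $f$. You are actually slightly more thorough than the paper in spelling out the event-containment argument for the probability inequality, which the paper leaves implicit.
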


\begin{proof}
    By definition, $f(\mathbf{g}'(0)) \geq f(\mathbf{g}(0))$ due to the monotonic non-increasing property. By lemma \cref{lemma:stochasticdomination}, we know that $\mathbf{g}'(T) \preceq \mathbf{g}(T)$ almost surely
    . This implies that $f(\mathbf{g}'(T)) \geq f(\mathbf{g}(T))$ almost surely.
\end{proof}

When proving \cref{theorem:diffusioncodecycle}, we will be interested in the probability of there being some number of small gaps in $\mathbf{g}$ at sufficiently large time. Through the lemma \cref{lemma:stochasticdomination}, we have just shown that if $\mathbf{g}'(0) \preceq \mathbf{g}(0)$, then at a later time $T$, $\mathbf{g}'(T) \preceq \mathbf{g}(T)$ almost surely. Furthermore, the number of small gaps $\sum_i \mathbb{1}_{g_i \leq d}$ is a monotonically non-increasing function in $\mathbf{g}$ for a given $d$. Therefore, by corollary \cref{corollary:monotonic nondecreasing}, this implies that the probability of having more small gaps in $\mathbf{g}'$ is almost surely
larger. 

Even given $\mathbf{g}(0)$, it is difficult to obtain general forms forms for the probability distribution of $\sum_i \mathbb{1}_{g_i \leq d}$. However, a useful limit is obtained in the case where $T$ is large enough to exceed the mixing time of the SEP corresponding to $\mathbf{g'}$.

\begin{lem} [Mixing Time of SEP on $\mathcal{C}_N$; theorem 1.1 in \cite{Oliveira_2013}] \label{lemma:SEP mixing time}
    Consider the continuous time SEP on the cycle graph $\mathcal{C}_N$. Then, $\forall \epsilon \in (0,1/2)$, there exists a constant $C > 0$ such that
    \begin{align} \label{eq:sep mixing time}
        t_{\rm mix}(\epsilon) \leq C \cdot N^2 \cdot \ln \left( \frac{N}{\epsilon} \right).
    \end{align}
   
\end{lem}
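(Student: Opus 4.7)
The statement is a classical mixing-time bound for the symmetric simple exclusion process on $\mathcal{C}_N$, which has been established in the literature (most sharply by Lacoin and more generally by Oliveira), so in a self-contained writeup I would either cite the result directly or reconstruct a proof along the following lines.

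My plan would be to reduce the mixing of the SEP to the mixing of the underlying \emph{interchange process} (\cref{def:interchange}) and then exploit Aldous' spectral gap identity (proved by Caputo–Liggett–Richthammer) to relate the spectral gap of the interchange process on $\mathcal{C}_N$ to that of a single continuous-time random walker. Concretely, the SEP is a deterministic projection of the interchange process (by declaring some particles invisible), so any $L^2$ contraction estimate for the interchange process transfers directly to the SEP. Aldous' identity then gives $\mathrm{gap}(\text{IP on } \mathcal{C}_N) = \mathrm{gap}(\text{RW on } \mathcal{C}_N) = \Theta(1/N^2)$, producing a relaxation time $\tau_{\mathrm{rel}} = \Theta(N^2)$.

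Next I would upgrade this spectral gap bound to a total-variation mixing time bound. The crude inequality $t_{\mathrm{mix}}(\epsilon) \leq \tau_{\mathrm{rel}} \bigl[\tfrac{1}{2}\log(1/\pi_{\min}) + \log(1/\epsilon)\bigr]$ applied to the uniform stationary measure on $\binom{N}{k}$-many configurations would give a prefactor of $k\log N$ in front of $N^2$, which is too large when $k$ is a constant fraction of $N$. To get the clean $\log(N/\epsilon)$ factor appearing in \cref{eq:sep mixing time}, I would instead prove a log-Sobolev inequality for the SEP (or equivalently a modified log-Sobolev / entropy decay estimate). For the cycle, such an inequality can be obtained by a comparison argument: decompose the SEP dynamics into blocks and compare, via the Bakry–Émery or canonical-paths machinery, to the complete-graph exclusion process whose log-Sobolev constant is known. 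The resulting log-Sobolev constant scales like $N^{-2}$ up to logarithmic factors, which translates via the standard Diaconis–Saloff-Coste argument into the desired $N^2 \log(N/\epsilon)$ bound.

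The hardest part of this program is the log-Sobolev inequality, not the spectral gap: while Aldous' identity gives the relaxation time essentially for free, controlling entropy decay of the SEP on the cycle requires either a non-trivial comparison/canonical-path computation or a direct coupling analysis like Oliveira's. In fact, because we only need the bound as stated (and not the sharp cutoff constant proved by Lacoin), the cleanest self-contained route is simply to invoke Oliveira's general theorem: for \emph{any} finite graph $\mathcal{G}$, $t_{\mathrm{mix}}^{\mathrm{SEP}}(\epsilon) = O\bigl(t_{\mathrm{mix}}^{\mathrm{RW}} \log(|V|/\epsilon)\bigr)$, and then substitute the well-known $t_{\mathrm{mix}}^{\mathrm{RW}}(\mathcal{C}_N) = \Theta(N^2)$ to conclude.
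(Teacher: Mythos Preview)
The paper does not prove this lemma at all; it is stated as a direct citation of Theorem~1.1 in Oliveira's work, with no accompanying proof. Your proposal to simply invoke Oliveira's general theorem and substitute the random-walk mixing time on $\mathcal{C}_N$ is therefore exactly what the paper does, and your additional sketch of a self-contained argument via Aldous' spectral-gap identity and log-Sobolev inequalities goes well beyond what the paper provides.
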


\begin{remark} \label{remark:clocks and time}
    Note that for a unit of time $\Delta T = 1$, due to the continuous time setting on average $N$ clocks have gone off.
\end{remark}

\begin{cor} [Mixing time of the lazy gap process] \label{corollary:gap mixing time}
  Consider the lazy gap process for a $k$ particle gap vector $\mathbf{g}$ on $\mathcal{C}_N$. Then, $\forall \epsilon \in (0,1/2)$, there exists a constant $C > 0$ such that
    \begin{align} \label{eq:sep mixing time}
        t_{\rm mix}(\epsilon) \leq C \cdot N^2 \cdot \ln \left( \frac{N}{\epsilon} \right).
    \end{align}
   
\end{cor}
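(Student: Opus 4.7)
The plan is to derive this as a direct pushforward of the SEP mixing bound through the surjection from particle configurations to gap vectors, using the identification of the induced chain established in \cref{lemma:SEP induces gap}.

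First I would fix notation: let $f$ denote the deterministic map from a $k$-particle configuration on $\mathcal{C}_N$ (with a choice of distinguished particle) to its gap vector $\mathbf{g}\in\mathbb{N}^k$. By \cref{lemma:SEP induces gap}, with the clock rate set to $N$ and laziness parameter $q=2k/N$, the Markov chain obtained by applying $f$ to the continuous-time SEP is exactly the lazy gap process of \cref{def:lazy gap process}. Consequently, if $\mu_t$ denotes the law of the SEP at time $t$ and $\pi_{\text{SEP}}$ its stationary distribution, then the law of $\mathbf{g}(t)$ under the lazy gap process started from $\mathbf{g}(0) = f(x_0)$ is $f_\ast \mu_t$, and its stationary distribution is $f_\ast \pi_{\text{SEP}}$.

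Next I would invoke the data processing inequality for total variation distance: for any probability measures $\mu,\nu$ on the SEP state space and any deterministic map $f$,
\begin{equation}
\|f_\ast \mu - f_\ast \nu\|_{\rm TVD} \;\leq\; \|\mu - \nu\|_{\rm TVD}.
\end{equation}
This is immediate from \cref{def:tvd} since every event $A$ in the target $\sigma$-algebra pulls back to an event $f^{-1}(A)$ in the source. Applying this with $\mu = \mu_t$ and $\nu = \pi_{\text{SEP}}$ gives
\begin{equation}
\|f_\ast \mu_t - f_\ast \pi_{\text{SEP}}\|_{\rm TVD} \;\leq\; \|\mu_t - \pi_{\text{SEP}}\|_{\rm TVD}.
\end{equation}

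Finally, because \cref{lemma:SEP mixing time} gives $t_{\rm mix}^{\text{SEP}}(\epsilon) \leq C N^2 \ln(N/\epsilon)$ \emph{uniformly} over the initial state of the SEP, and because for every gap vector $\mathbf{g}(0)$ there exists a particle configuration $x_0$ with $f(x_0)=\mathbf{g}(0)$, we obtain that for $t \geq C N^2 \ln(N/\epsilon)$ the pushforward $f_\ast \mu_t$ is within TVD $\epsilon$ of $f_\ast \pi_{\text{SEP}}$ from every starting gap vector. This yields exactly the claimed bound on $t_{\rm mix}(\epsilon)$ for the lazy gap process. There is no real obstacle to overcome here — the result is a one-line consequence of the data processing inequality combined with the identification of the induced chain, and the main thing to be careful about is simply matching the clock rate and laziness parameter $q=2k/N$ used in \cref{lemma:SEP induces gap} so that the time scales on the two sides agree.
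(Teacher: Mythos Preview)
Your proposal is correct and follows essentially the same approach as the paper's proof: both deduce the bound from \cref{lemma:SEP mixing time} via the fact (\cref{lemma:SEP induces gap}) that the lazy gap process is the chain induced on gap vectors by the SEP. Your version is more explicit in naming the key ingredient (the data processing inequality for total variation) and in checking that the clock rate and laziness parameter match, whereas the paper simply states that if the SEP has reached stationarity then so has any function of it; but the underlying argument is identical.
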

\begin{proof}
    By \cref{lemma:SEP induces gap}, the SEP induces a  lazy gap process. Thus, if after a time $t_{\rm mix}(\epsilon) = O(N^2 \log N)$ the state of the SEP is described by the stationary distribution, then the state of the gaps must also be determined by the stationary distribution. Thus $t_{\rm mix}(\epsilon)$ is also the mixing time of the lazy gap process. 

\end{proof}
To use the mixing time bound to bound the number of small gaps after some time $T$, we also need to bound the probability of having a large number of small gaps in the steady state, which is just the uniform distribution over all gap vectors under the constraint $\sum_i g_i = N$. 

\begin{lem} [Number of Small Gaps in Uniform Distribution] \label{lemma:small gaps}

In the uniform distribution over all $k$ particle gap vectors with $\sum_i g_i = N$, 
    
    \begin{align}
        \mathbb{P}\left[\sum_i \mathbb{1}_{g_i \leq d} \geq Q \right] \leq {k \choose Q} \left( \frac{e^2 kd}{N}\right)^Q
    \end{align}
\end{lem}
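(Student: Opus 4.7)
The plan is to prove this by a straightforward first-moment-style union bound combined with elementary combinatorial counting. Let $X := \sum_i \mathbb{1}_{g_i \leq d}$. The uniform distribution on $k$-particle gap vectors $\vec g$ with $g_i \geq 1$ and $\sum_i g_i = N$ corresponds to the uniform distribution over compositions of $N$ into $k$ positive parts, of which there are $\binom{N-1}{k-1}$; in particular this distribution is symmetric under permutations of the indices.

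First, I would write $\{X \geq Q\} = \bigcup_{A \subset [k],\,|A|=Q}\{g_i \leq d\ \forall i \in A\}$ and apply the union bound together with permutation symmetry to get
\[
\mathbb{P}[X \geq Q] \;\leq\; \binom{k}{Q}\, p_Q, \qquad p_Q := \mathbb{P}[g_1,\ldots,g_Q \leq d].
\]
Next, I would bound $p_Q$ by explicit counting. Any composition with $g_1,\ldots,g_Q \in \{1,\ldots,d\}$ is obtained by first choosing $(g_1,\ldots,g_Q) \in \{1,\ldots,d\}^Q$ and then choosing a composition of $N - S$ (with $S := g_1+\cdots+g_Q$) into $k-Q$ positive parts, giving $\binom{N-S-1}{k-Q-1}$ completions. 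Since $S \geq Q$ and $\binom{N-S-1}{k-Q-1}$ is decreasing in $S$, the numerator is at most $d^Q \binom{N-Q-1}{k-Q-1}$.

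Then I would simplify the ratio. A direct factorial computation gives
\[
\frac{\binom{N-Q-1}{k-Q-1}}{\binom{N-1}{k-1}} \;=\; \prod_{j=1}^{Q} \frac{k-j}{N-j}.
\]
The elementary inequality $(k-j)/(N-j) \leq k/N$, which holds for every $k \leq N$ and $j\geq 0$ (cross-multiplying reduces to $N \geq k$), then yields
\[
p_Q \;\leq\; d^Q \left(\frac{k}{N}\right)^Q \;=\; \left(\frac{kd}{N}\right)^Q.
\]
Combining with the union bound step gives $\mathbb{P}[X \geq Q] \leq \binom{k}{Q}(kd/N)^Q$, which is stronger than the stated bound $\binom{k}{Q}(e^2 kd/N)^Q$; hence the lemma follows with room to spare.

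There is no substantive obstacle here: the argument is a clean counting exercise once one recognizes that the numerator is controlled by the worst case $S = Q$ and that the binomial ratio telescopes into a product that can be bounded termwise using $k \leq N$. The only thing to watch is that the symmetry step genuinely uses uniformity over compositions (equivalently, over rotations of particle configurations on $\mathcal{C}_N$ with a designated first particle), which is exactly the distribution declared in the lemma.
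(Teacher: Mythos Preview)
Your proof is correct and follows the same overall scaffold as the paper's: union bound over the $\binom{k}{Q}$ choices of which $Q$ gaps are small, then count compositions of $N$ into $k$ positive parts with those $Q$ entries constrained to $[1,d]$, and finally bound the binomial ratio $\binom{N-Q-1}{k-Q-1}/\binom{N-1}{k-1}$.

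The difference is in execution, and yours is cleaner. The paper rewrites the sum over $(\lambda_1,\dots,\lambda_Q)\in[1,d]^Q$ as a sum over $s=\sum\lambda_i$ with multiplicity $\binom{s-1}{Q-1}$, applies the hockey-stick identity to get $\binom{Qd}{Q}$, and then uses $\binom{Qd}{Q}\le (ed)^Q$; it also bounds the ratio by $(ek/N)^Q$ via the standard binomial bound. You instead bound the sum directly by $d^Q$ times the maximal completion count (taking $S=Q$) and bound the product $\prod_{j=1}^{Q}\frac{k-j}{N-j}$ termwise by $(k/N)^Q$. Both steps are strictly sharper, so you end up with $\binom{k}{Q}(kd/N)^Q$ rather than $\binom{k}{Q}(e^2kd/N)^Q$, and you avoid the hockey-stick detour entirely.
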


\begin{proof}
    Under the uniform distribution, every configuration of the gap vector $\mathbf{g}$ is equally likely. How many such configurations are there? This is given by the number of solutions to the equation
    \begin{align}
        g_1 + g_2 + \dots + g_k = N,
    \end{align}
    subject to the constraint that $\forall i, g_i \geq 1$. 

    We can think of the gap vector as a particular distribution of $N$ indistinguishable balls into $k$ indexed bins, with the constraint that each bin must contain at least 1 ball. The number of such distributions is 
    \begin{equation}\label{eq:small_gaps_1}
        N_{\rm gap}(N, k) =
        \begin{pmatrix}
        N-1\\k-1
        \end{pmatrix}.
    \end{equation}

    To proceed, we want to upper bound the number of configurations where at least some number $Q$ of distances is smaller than $d$.
    Again thinking of the gap vector as $N$ balls distributed over $k$ bins, we designate $Q$ bins as special, and designate to each of them a fixed number of $\{\lambda_i\}_{i=1}^{Q}$ balls. Summing over all possibilities for $\lambda_i \leq d \forall i$, and taking a union bound over all choices of special bins we obtain
    
    \begin{align}
        \mathbb{P}\left[ \text{at least $Q$  small bins} \right] &\leq \begin{pmatrix}
            k \\ Q
        \end{pmatrix} \frac{\sum_{\lambda_1 = 1}^d \sum_{\lambda_2 = 1}^d \dots \sum_{\lambda_Q = 1}^d \begin{pmatrix}
            N - \sum_i \lambda_i - 1 \\ (k-Q)- 1
        \end{pmatrix}}{\begin{pmatrix}
        N-1\\k-1
    \end{pmatrix}} 
    \end{align}

    Let us analyze this expression more closely. Because the combinatorics in the numerator only depend on $\sum_i \lambda_i$, we can rewrite the expression in terms of the multiplicity of the vector $\vec \lambda$ under the constraints that $\sum \lambda_i$ and $\lambda_i \leq d\,\forall i$. Dropping the second constraint, we obtain an upper bound for this number, which takes the same form as \cref{eq:small_gaps_1}. 
    Substituting this above yields
    \begin{align}
    &\leq \begin{pmatrix}
            k \\ Q
        \end{pmatrix} \frac{\sum_{s = Q}^{Qd}  
        \begin{pmatrix}
            s - 1 \\ Q - 1
        \end{pmatrix} \begin{pmatrix}
            N - s - 1 \\ k-Q-1
        \end{pmatrix}
        }{\begin{pmatrix}
            N-1\\k-1
        \end{pmatrix}} \\
        &\leq \begin{pmatrix}
            k \\ Q
        \end{pmatrix} \frac{\begin{pmatrix}
            N - 1 - Q \\ k-1 -Q
        \end{pmatrix} \sum_{s = Q}^{Qd}  
        \begin{pmatrix}
            s - 1 \\ Q - 1
        \end{pmatrix} 
        }{\begin{pmatrix}
            N-1\\k-1
        \end{pmatrix}} \\
        &= \begin{pmatrix}
            k \\ Q
        \end{pmatrix} \frac{\begin{pmatrix}
            N - 1 - Q \\ k-1 -Q
        \end{pmatrix}   
        \begin{pmatrix}
            Qd \\ Q
        \end{pmatrix} 
        }{\begin{pmatrix}
            N-1\\k-1
        \end{pmatrix}}
\end{align}
In the second line we upper bounded the second factor in the sum, and in going to the third line we use the hockey-stick identity. 
We can simplify the bound of binomial coefficients
\begin{align}
    \frac{{N-Q-1 \choose k-Q-1}}{{N-1\choose k-1}} &= \frac{{(N-Q-1)!(k-1)!}}{(k-Q-1)!(N-1)!}\\
    &= \frac{(k-1)(k-2)\dots(k-Q)}{(N-1)(N-2)\dots(N-Q)} \\
    &\leq \frac{k(k-1)\dots(k-Q+1)}{N(N-1)\dots(N-Q+1)} \\
    &= \frac{\begin{pmatrix}
        k \\ Q
    \end{pmatrix}}{\begin{pmatrix}
        N \\ Q
    \end{pmatrix}}\\
    &\leq \left(\frac{ek}{N}\right)^Q.
\end{align}
In the third line, we use the fact that $k/N  > (k-Q) / (N-Q)$ and in the last line we use the standard bound on binomial coefficients. Using the same binomial bound on ${Qd\choose Q}$, we finally obtain 
\begin{align}
    \mathbb{P}\left[ \text{at least $Q$  small bins} \right]
    \leq \begin{pmatrix}
        k \\ Q
    \end{pmatrix}
    \left(\frac{e^2 kd}{N} \right)^Q.
\end{align}

\end{proof}

Putting together stochastic monotonicity of the gap process, the mixing time bound on the SEP and its reduction to the gap process, and the bound on the number of small gaps in the steady state of the gap process, we are now finally able to prove the absence of small gaps in the SEP at sufficiently large time.

\begin{lem} [Small Gaps in the SEP] \label{lemma:sep small gaps}
Consider the SEP on $\mathcal{C}_N$ (\cref{def:sep}). Then, there exists a constant $C > 0$ independent of $N$, such that for all $T > 0$, and any $\nu > 0$,
    \begin{align}\label{eq:sep_small_gaps}
        \mathbb{P}\left[ \sum_i \mathbb{1}_{g_i\leq d} \geq Q \right] \leq {k \choose Q} \left( \frac{Ckd}{T^{\frac{1}{2+\nu}}}\right)^Q
    \end{align}
    where $g_i$, $i \in [k]$ are the successive distances of the $k$ particles on $\mathcal C_N$.
\end{lem}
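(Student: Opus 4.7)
The plan is to compare the SEP on $\mathcal{C}_N$ with an auxiliary SEP on a much smaller cycle $\mathcal{C}_{N'}$ chosen so that its mixing time fits within the budget $T$. Once mixed on the smaller cycle, the gap distribution is close to uniform, where \cref{lemma:small gaps} already supplies the combinatorial bound. Stochastic monotonicity of the gap process (\cref{lemma:stochasticdomination}) then transfers the bound back to $\mathcal{C}_N$.

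Concretely, fix $\nu > 0$ and choose $N' = \lfloor T^{1/(2+\nu)} \rfloor$, which by \cref{lemma:SEP mixing time} ensures $t_{\mathrm{mix}}(\epsilon) \leq C N'^2 \log(N'/\epsilon) \leq T$ for all $T$ large enough. Starting from an arbitrary initial gap vector $\mathbf{g}(0) \in \mathbb{N}^k$ with $\sum_i g_i(0) = N$, construct an auxiliary $\mathbf{g}'(0) \in \mathbb{N}^k$ with $\sum_i g_i'(0) = N'$ and $\mathbf{g}'(0) \preceq \mathbf{g}(0)$ by the vertex-removal procedure of \cref{example:smaller gap vector}. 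By \cref{corollary:sep induces nonlazy}, the gap processes induced by the SEPs on $\mathcal{C}_N$ and on $\mathcal{C}_{N'}$ are both non-lazy gap processes of rate $2k$, so they evolve in the same natural time parameter and can be run under the obvious coupling of \cref{lemma:stochasticdomination}. This gives $\mathbf{g}'(T) \preceq \mathbf{g}(T)$ almost surely. Since $\mathbf{g} \mapsto \sum_i \mathbb{1}_{g_i \leq d}$ is monotonically non-increasing, \cref{corollary:monotonic nondecreasing} yields
\[
\mathbb{P}\!\left[\sum_i \mathbb{1}_{g_i(T) \leq d} \geq Q\right] \;\leq\; \mathbb{P}\!\left[\sum_i \mathbb{1}_{g'_i(T) \leq d} \geq Q\right].
\]

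With the choice of $N'$ above, the law of $\mathbf{g}'(T)$ lies within total variation distance $\epsilon$ of the uniform distribution over $k$-gap vectors summing to $N'$. Applying \cref{lemma:small gaps} to that uniform distribution then bounds the right-hand side by ${k \choose Q}(e^2 k d / N')^Q + \epsilon$. Substituting $N' \asymp T^{1/(2+\nu)}$ and absorbing $\epsilon$ (which can be taken, say, polynomially small) together with the $e^2$ factor into a universal constant $C$ yields the stated bound.

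The main subtlety to keep in mind is the degenerate regime where the construction of $\mathbf{g}'(0)$ is not possible, namely when $N' < k$ (since each $g'_i \geq 1$ forces $N' \geq k$). However, in that case the inequality $C k d / T^{1/(2+\nu)} \geq 1$ already makes the claimed bound trivial, so one can separately dispense with this regime by taking the constant $C$ sufficiently large. A secondary technical point is that one must verify the vertex-removal reduction indeed preserves the partial order componentwise for whatever labeling convention we adopt for the ``first'' particle; this is immediate from the construction since removing a vertex from gap $i$ strictly decreases $g_i$ while leaving all other gaps unchanged.
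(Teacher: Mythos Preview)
Your proposal is correct and follows essentially the same strategy as the paper: reduce to a smaller cycle $\mathcal{C}_{N'}$ with $N' \sim T^{1/(2+\nu)}$ via stochastic monotonicity of the gap process, use the mixing-time bound on the smaller system, and then apply \cref{lemma:small gaps} to the near-uniform distribution. Your use of \cref{corollary:sep induces nonlazy} to note that both SEPs induce the \emph{same} non-lazy rate-$2k$ gap process is a slightly cleaner way to match the time parameters than the paper's time-rescaling argument; you are also more explicit than the paper about the residual total-variation error $\epsilon$ and the degenerate regime $N' < k$, both of which the paper handles only in passing.
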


\begin{proof}
    The idea of the proof is to relate the probability on the left hand side of \cref{eq:sep_small_gaps} to the corresponding distribution in the steady state of a gap process on a smaller cycle graph of size $N' < N$ and $T \sim (N')^{2+\nu}$ for any $\nu > 0$.
    
    Consider the SEP on $\mathcal C_N$. The corresponding gap vector $\mathbf{g}$ evolves under a lazy gap process (again, see \cref{lemma:SEP induces gap}) with forward step probability $q = k / N$. Assume w.l.o.g. $k < N'$ (otherwise the bound in \cref{eq:sep_small_gaps} is trivial), then we can find $\mathbf{g'} \preceq \mathbf{g}$ with $\sum_i g_i' = N'$.
    By stochastic monotonicity as shown in \cref{lemma:stochasticdomination}, we then have that under two coupled evolutions, $\mathbf{g'}(T) \preceq \mathbf{g}(T)$. 
    However, due to the coupled evolution, $\mathbf{g}'$ is evolving under a rate $N$ Poisson clock, rather than a rate $N'$ clock. The rate of the clock may be rescaled to $N'$ by rescaling the time on lazy gap process of $\mathbf{g}'$ to $T' = \frac{N}{N'}T$. Thus by \cref{corollary:gap mixing time} if $T' \geq N'^{(2+\omega)}$ for any $\omega > 0$, then the coupled process on $\mathbf{g}'$ is fully mixed. This condition is satisfied if $N^{\prime(2+\omega)}\sim T^{\frac{1}{1+\mu}} \Rightarrow N^{\prime}\sim T^{\frac{1}{2+\nu}}$ for any $\nu > 0$, and hence we can use \cref{lemma:small gaps} to conclude that 
    \begin{align}
        \mathbb{P}\left[\sum_i \mathbb{1}_{g_i \leq d} \geq Q\right] 
        \leq 
        \mathbb{P}\left[\sum_i \mathbb{1}_{g_i' \leq d} \geq Q \right] 
        \leq 
        {k \choose Q} \left( \frac{e^2 kd}{N'}\right)^Q
        \leq
        {k \choose Q} \left(\frac{C kd}{T^{\frac{1}{2+\nu}}}\right)^Q
    \end{align}
    for some constant $C > 0$ and any $\nu > 0$. The first inequality here uses stochastic monotonicity of the gap vector (\cref{lemma:stochasticdomination}), 
    the second is the steady-state distribution bound in \cref{lemma:small gaps}, and the last inequality uses $N^{\prime}\sim T^{\frac{1}{2+\nu}}$.

\end{proof}

\subsubsection{Proof of \cref{theorem:diffusioncodecycle}}

We are now finally able to proof our main result. We consider diffusion codes created by a swap process on the cycle graph $\mathcal C_N$ and show that they are smaller set lossless expanders.

\begin{proof}

The proof will proceed through several arguments. 
 
   \begin{enumerate}
       \item Recall that each vertex of this graph corresponds to an \emph{edge} in the Tanner graph $\mB = (L,R,E)$, and bits and checks correspond to groups of such vertices of size $\wbit$ and $\wcheck$, respectively. Our goal will be to show that for Tanner graphs generated from the process defined in  \cref{def:diffusioncodes}, the probability that there exists a subset of bits $S \subset L : |S| \leq \delta (n)$ with neighbor set size $|\Gamma(S)| \leq (1-\varepsilon ) \wbit |S|$ isbounded away from 1. We denote this probability by $\mathbb{P}\left[|\Gamma(S)| \leq (1-\varepsilon ) \wbit |S|)\right]$.
       \item Denoting the set of outgoing edges from $S$ by $F$, the number of neighbors of $S$ is simply equal to the number of unique checks that $F$ connects into, the set of which we denote by $\Sigma(F)$.
    Hence, to show the above it suffices to show that for all $F \subset E$ such that $\abs{F} \leq \wbit\,\delta(n)$, $\mathbb{P}\left[|\Sigma(F)| \leq (1-\varepsilon) \abs{F})\right]$ is bounded away from one.

    A subset of edges $F$ corresponds to a subset of vertices $U$ in the cycle graph, and the number of unique checks corresponds to the number of unique check groups, $\Xi(U)$, that $U$ is connected to after the swap process. Again, to show the claim it hence suffices to show that after the swap process, for $\abs{U} \leq \wbit  \delta(n)$, $\mathbb{P}\left[|\Xi(U)| \leq (1-\varepsilon) \abs{U})\right]$ is bounded away from 1.
    
    \item     Now, given a subset of vertices $U$, their evolution under the interchange process corresponds to an instance of the simple exclusion process on $\mathcal C_N$ with the initial condition that particles are placed on $U$. Let $\mathbf{g} \in \mathbb{N}^{\abs{U}}$ be the vector storing all successive interparticle distances, the gap vector (\cref{def:gapvector}).  The number of unique check groups $\abs{\Xi(U)}$ that $U$ is connected to is then given by the number of unique groups that these particles fall into after the evolution under the SEP. Note that two particles cannot be in the same group if their distance is larger than $\wcheck$.
    Because of this, 
    $\mathbb{P}\left[|\Xi(U)| \leq (1-\varepsilon )\abs{U})\right] \leq \mathbb{P} \left[ \sum_i \mathbb{1}_{g_i \leq \wcheck} \geq \varepsilon \abs{U} \right].$
    
    \item We have reduced the problem to determining the probability distribution of the number of small gaps in the SEP. 
    This is an SEP of $|U| = \wbit |S|$ particles on the cycle graph of size $N$. Then, by \cref{lemma:sep small gaps}, 
\begin{align}
    \mathbb{P} \left[ \sum_i \mathbb{1}_{g_i \leq \wcheck} \geq \varepsilon \wbit |S| \right] \leq \begin{pmatrix}
        \wbit |S| \\ \varepsilon \wbit |S|
    \end{pmatrix} \left(\frac{e^2 \wcheck \wbit |S|}{T^{\frac{1}{2+\nu}}} \right)^{\varepsilon \wbit |S|},
\end{align}
for all $\nu > 0$.

\item Let us now set $\delta(n) = \frac{T^{\frac{1}{2+\nu}}}{e^2 \wcheck}$. Then, by the line of argument above, since the LHS of the above inequality also upper bounds $\mathbb{P}\left[ |\Gamma(S)| \leq (1-\varepsilon )\wbit |S|\right]$, we obtain
\begin{align}
    \mathbb{P}\left[ |\Gamma(S)| \leq (1-\varepsilon )\wbit |S|\right] &\leq \begin{pmatrix}
        \wbit |S| \\ \varepsilon \wbit |S|
    \end{pmatrix} \left(\frac{\wbit |S|}{\delta(n)} \right)^{\varepsilon \wbit |S|} \\
    &\leq \left(\frac{e\wbit |S|}{\varepsilon \delta(n)} \right)^{\varepsilon \wbit |S|}.
\end{align}

This bound is of the exact same form as \cref{eq:sublinear bound}, and so the remainder of the proof follows identically to that of \cref{theorem:sublinearlossless}.

\item As in the proof of \cref{theorem:sublinearlossless}, we demand that $\delta(n) = c_1n^\beta$ for some $\beta < 1$ where $c_1$ is some positive constant. Then, we require that $T = (e^2 \wcheck c_1 n^\beta)^{2+\nu}.$ Since $\nu$ can be made arbitrarily close to 0, if we define $\alpha > \beta/2$ then we have shown that for $T \propto n^\alpha$ with $0 < \alpha < 2$, we can obtain a $\beta < \alpha / 2$ which sets a scale ofexpansion. The resultant diffusion code is with nonzero probability a $(\delta(n), \gamma)$ smaller set expander.

    \item Also, as in the proof of \cref{theorem:sublinearlossless}, for a chosen $\varepsilon$, the expansion scale must satisfy $\beta > (\varepsilon c)^{-1}$, where $c$ was the left degree in the construction of \cref{theorem:sublinearlossless}. As the proof here follows identically, once again, as long as $\beta > (\varepsilon \wbit)^{-1}$, then the desired $\varepsilon$ is achieved with nonzero probability. This also implies that $\alpha / 2 > (\varepsilon \wbit)^{-1}$ is sufficient. 

   \end{enumerate}

\end{proof}

We conclude this section by discussing \textit{right side} lossless expansion in diffusion codes, as well as bounds on the size of checks.

\begin{lem} [Lossless right vertex expansion of diffusion codes on cycle graphs] \label{lemma:diffcode rightside}
    Take $\mathcal{C}_N$ the cycle graph of size $N = n \wbit = m \wcheck$, and its obvious partition into contiguous groups of $\wcheck$.  
        For all $\varepsilon, \alpha, \beta > 0$, if $T \propto n^\alpha$ and $(\varepsilon \wbit)^{-1} < \beta < \alpha/2$, then the Tanner graph of the $(n, m, \wbit,\wcheck, \mathcal{C}_N, T)$ diffusion code is, with probability $q(N) > 0$
        , a $(\delta, \gamma)$ lossless smaller set \textit{right} vertex expander with $\delta \propto n^{\beta}$ and $\gamma > \wbit(1-\varepsilon)$. Further, $q(N)\to 1$ as $N\to\infty$.
    
\end{lem}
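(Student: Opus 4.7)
The strategy is to mirror the proof of \cref{theorem:diffusioncodecycle}, using the time-reversal symmetry of the interchange process to reduce right vertex expansion to a statement about the gap vector of a small number of particles in an SEP on $\mathcal{C}_N$. Fix a subset $S\subset R$ of checks with $|S|\leq \delta(n)$. The $\wcheck|S|$ edges incident to $S$ correspond to the labels whose \emph{final} position after the swap network lies in one of the check groups of $S$, and $\Gamma(S)$ is precisely the set of bit groups of those labels. Under the natural choice of $\psi_0$ that assigns labels in order within each check group (which is permitted by \cref{def:diffusioncodes}), the partition of labels into bit groups induces a partition of the vertices of $\mathcal{C}_N$ into $n$ contiguous $\wbit$-blocks, and the bit group of a label $i$ is determined by the position $\psi_0^{-1}(i)$ itself.

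Because the interchange process is built from symmetric Poisson clocks driving self-inverse transpositions, its trajectory is invariant in distribution under time reversal, and in particular $\psi_T$ and $\psi_T^{-1}$ are identically distributed. Hence the joint distribution of the initial positions of the labels whose final positions lie in the check groups of $S$ agrees with the joint distribution of the \emph{final} positions of $\wcheck|S|$ particles initialized at those check groups and evolved under the SEP on $\mathcal{C}_N$ for time $T$. In particular, $|\Gamma(S)|$ is exactly the number of distinct contiguous $\wbit$-blocks that these $\wcheck|S|$ particles hit at time $T$.

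From here the argument is a verbatim translation of the proof of \cref{theorem:diffusioncodecycle} with $\wbit$ and $\wcheck$ interchanged. Letting $\mathbf{g}$ denote the gap vector of the $\wcheck|S|$ particles, two particles share a bit block only if the gap between them is at most $\wbit$, so $|\Gamma(S)|\geq \wcheck|S|-\sum_i\mathbb{1}_{g_i\leq \wbit}$. Applying \cref{lemma:sep small gaps} with $k=\wcheck|S|$ and $d=\wbit$ produces a tail bound of exactly the same functional form as in \cref{theorem:diffusioncodecycle}; setting the expansion scale $\delta(n)\propto T^{1/(2+\nu)}/\wbit$ and closing the union bound over neighbor-connected subsets of $R$ as in \cref{theorem:sublinearlossless} then yields $\gamma > \wcheck(1-\varepsilon)\geq \wbit(1-\varepsilon)$ with probability $q(N)\to 1$, provided $\beta > (\varepsilon\wcheck)^{-1}$---which follows from the stated hypothesis $\beta > (\varepsilon\wbit)^{-1}$ whenever $\wcheck\geq\wbit$ (the case of interest for any code of nontrivial rate). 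The only conceptually new ingredient relative to \cref{theorem:diffusioncodecycle} is the time-reversal step together with the identification of bit groups with contiguous $\wbit$-blocks of positions via the canonical $\psi_0$; I expect this identification to be the only subtle point to check carefully, after which the rest of the argument is mechanical.
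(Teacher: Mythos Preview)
Your proposal is correct and follows essentially the same approach as the paper: invoke the time-reversal symmetry of the interchange process to swap the roles of bits and checks, then rerun the argument of \cref{theorem:diffusioncodecycle} with $\wbit$ and $\wcheck$ interchanged. The paper's proof is a two-sentence sketch of exactly this idea; you have simply spelled out the mechanics (the identification of bit groups with contiguous $\wbit$-blocks under the canonical $\psi_0$, and the reduction $\gamma>\wcheck(1-\varepsilon)\geq\wbit(1-\varepsilon)$ together with $(\varepsilon\wcheck)^{-1}\leq(\varepsilon\wbit)^{-1}$ when $\wcheck\geq\wbit$) that the paper leaves implicit.
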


\begin{proof}
    The interchange process used to generate diffusion codes is time reversal symmetric. Therefore, any instance of the process can be viewed as an instance of the same process with bits and checks flipped. Then, the proof follows nearly identically to that of the proof of \cref{theorem:diffusioncodecycle}. 
\end{proof}

\begin{cor} [Left/right expansion of diffusion codes on cycle graphs] \label{corollary:left right expansion diffusion}
    Diffusion codes with $n$ bits and $m$ checks as constructed in \cref{theorem:diffusioncodecycle} are, with nonzero probability, left $(\delta_L(n), \gamma_L)$ and right side $(\delta_R(n), \gamma_R)$ sublinear expanders.
\end{cor}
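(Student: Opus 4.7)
The plan is to combine the two preceding results, \cref{theorem:diffusioncodecycle} and \cref{lemma:diffcode rightside}, via a simple union bound. Both statements are probabilistic guarantees over the same random object --- the permutation $\psi_T$ generated by the interchange process on $\mathcal{C}_N$ --- so the only thing to check is that the good events on the left and right sides hold simultaneously with nonzero probability.

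Concretely, I would first fix parameters $\varepsilon_L, \alpha_L, \beta_L$ and $\varepsilon_R, \alpha_R, \beta_R$ satisfying the hypotheses of the two statements (with a single common choice of diffusion time $T\propto n^\alpha$, say $\alpha = \max(\alpha_L,\alpha_R)$, and correspondingly adjusted $\beta$'s). By \cref{theorem:diffusioncodecycle}, the probability $q_L(N)$ that the resulting Tanner graph fails to be a left $(\delta_L(n), \gamma_L)$ lossless smaller-set expander satisfies $q_L(N)\to 0$ as $N\to\infty$. By \cref{lemma:diffcode rightside}, the analogous failure probability $q_R(N)$ for right expansion also tends to zero.

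A union bound then yields that the probability the Tanner graph is simultaneously a left $(\delta_L(n),\gamma_L)$ and right $(\delta_R(n),\gamma_R)$ lossless smaller-set expander is at least $1 - q_L(N) - q_R(N)$, which tends to $1$ (and is in particular nonzero) as $N\to\infty$. This is the desired statement.

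There is no real obstacle here: the heavy lifting was already done in proving \cref{theorem:diffusioncodecycle}, and \cref{lemma:diffcode rightside} reuses that argument verbatim under the bit$\leftrightarrow$check symmetry provided by the time-reversal invariance of the interchange process. The only cosmetic point to watch is to state the two sets of parameters $(\delta_L,\gamma_L)$ and $(\delta_R,\gamma_R)$ consistently with a single choice of $T$, so that the two good events are defined on the same probability space and the union bound applies directly.
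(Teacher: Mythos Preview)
Your proposal is correct and essentially the same as the paper's approach: the paper's proof simply says the corollary ``directly follows from \cref{theorem:diffusioncodecycle} and \cref{lemma:diffcode rightside},'' leaving the union bound implicit, while you have spelled it out explicitly. Since both failure probabilities $q_L(N)$ and $q_R(N)$ tend to zero, this is indeed immediate.
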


\begin{proof}
    This directly follows from \cref{theorem:diffusioncodecycle} and \cref{lemma:diffcode rightside}.
\end{proof}


\begin{lem}\label{lemma:check distance}
    Let $\mathcal C_N$ be the cycle graph of size $N=n\wbit=n\wcheck$ and its obvious partition into contiguous groups of $\wbit$ and $\wcheck$. Then, there exists a constant $C > 0$ such that for all $\alpha>0$, the $(n, m, \wbit, \wcheck, \mathcal C_N, T=n^\alpha)$ diffusion code is quasi-local on $\mathcal C_N$ in the sense that for all $\mu>0$ the probability of any check being (geometrically) larger than $C T^{\frac{1}{2}+\mu}$ vanishes as $n\to\infty$.
\end{lem}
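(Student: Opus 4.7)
The plan is to reduce the geometric size of each check to the maximum displacement of a small collection of continuous-time simple random walks, and then apply a standard Gaussian tail bound together with a union bound. Each check corresponds to a fixed contiguous group $G_j$ of $\wcheck$ vertices on $\mathcal{C}_N$, and its geometric size is the diameter of the \emph{initial} positions of the labels that have ended up in $G_j$ after the interchange process. Because the interchange process is its own time reversal (each swap is an involution and the Poisson clocks are i.i.d.), this joint distribution equals the distribution of the \emph{final} positions of the $\wcheck$ particles that start in $G_j$ and evolve for time $T$. I therefore study the latter.

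The key observation is that, marginally, each labeled particle in the interchange process performs a continuous-time simple random walk on $\mathcal{C}_N$ with total jump rate $2$, irrespective of the positions of the other particles. This is because a given particle only moves when one of its two incident edge-clocks rings, and each such ring causes a uniform-direction jump. Writing $X_T$ for the signed displacement of such a walk, we have the decomposition $X_T = \sum_{s=1}^{P_T}\xi_s$ with $P_T \sim \mathrm{Poisson}(2T)$ and $\xi_s$ i.i.d. uniform in $\{\pm 1\}$. A standard Chernoff bound for $P_T$ combined with Hoeffding's inequality for a simple random walk yields the Gaussian-type tail
\begin{equation*}
\mathbb{P}\bigl[|X_T|\geq x\bigr] \;\leq\; 2\exp\!\bigl(-\Omega(x^2/T)\bigr) + \exp(-\Omega(T)) .
\end{equation*}

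Setting $x = C\,T^{1/2+\mu}$ for any fixed $\mu>0$ makes the right-hand side decay as $\exp(-\Omega(T^{2\mu}))$. Since the geometric size of check $j$ is at most $2\max_{i\in G_j}|X_T^{(i)}| + \wcheck$, a union bound over the $\wcheck$ particles in each of the $m$ checks shows that the probability of any check exceeding $C\,T^{1/2+\mu}$ in diameter is at most $O(N)\cdot\exp(-\Omega(T^{2\mu}))$. With $T\propto n^\alpha$ and $\alpha>0$, $T^{2\mu}$ grows as a positive power of $n$ and the exponential tail dominates the polynomial prefactor, so the probability vanishes as $n\to\infty$, giving the claim.

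The main conceptual point (rather than a genuine obstacle) is the time-reversal step: one must be careful that the joint distribution of the initial positions of labels landing in a fixed group coincides with that of the final positions starting from it, not just the marginals. This is immediate from the involutive nature of swaps and the symmetry of the Poisson clocks. A minor technical remark is that ``displacement'' should be interpreted with respect to graph distance on $\mathcal{C}_N$, which caps at $N/2$; for the regime $T\ll N^2$ considered here the relevant deviations are far below this cap, so the cyclic identification is irrelevant and the random walk on $\mathbb{Z}$ tail suffices. Beyond these points, the argument uses only standard random-walk concentration and a union bound.
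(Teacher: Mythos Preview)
Your proposal is correct and follows essentially the same route as the paper: reduce to the marginal of a single particle, which is a continuous-time simple random walk on $\mathcal{C}_N$, apply a sub-Gaussian tail bound on its displacement, and union-bound over all $O(N)$ sockets. The paper uses Azuma--Hoeffding directly and is less explicit about the time-reversal step you highlight, while you separate the Poisson number of jumps via Chernoff before applying Hoeffding; these are minor technical variations on the same argument.
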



\begin{remark}
    In a standard expander code such as a Gallager code, each check is equally likely to connect to any of the bits in the system. Therefore, heuristically, we would expect that for any given check $j$ in a Gallager code, the average edge length of the edges between it and the bits to which it connects would be $\mathcal{O}(n)$. In the case of the diffusion codes there is a sense of locality to the problem. From the definition of diffusion codes (\cref{def:diffusioncodes}), and the initial numbering demanded in \cref{lemma:check distance}, any check $j$ will start off being connected to a bit that is within small $\mathcal{O}(1)$ distance of $j$. As the diffusion process unfolds, each check "spreads out" and becomes connected to farther and farther bits causing an increase in the check to bit distances. Trivially this distance is upperbounded by the total number of swaps performed during the interchange process (i.e. $NT$), but in actuality, the check distances are much smaller.
\end{remark}

\begin{proof}

    For any socket, if we follow its evolution under the interchange process, it is just a simple random walk on $\mathcal{C}_N$. From the continuous time nature of the problem, the probability that the particle steps clockwise is determined by the probability that the Poisson clock on the edge to the right of the particle goes off before the clock to the left of the particle. This is simply $1/2$, since the two clocks are independent exponentials of rate 1.
    
    The check size is then determined by the displacement of a particle from its original socket under a simple random walk process. On $\mathcal{C}_N$, the largest this displacement can be is $N/2$. Had this simple random walk occurred on the infinite line $\mathbb{Z}$, the displacement could only be greater and would thus upper bound the displacement on $\mathcal{C}_N$. We therefore consider instead the simple random walk on $\mathbb{Z}$. Letting $\Delta_T$ be the displacement of the particle at time $T$, by the Azuma-Hoeffding inequality, 
    \begin{align}
        \mathbb{P}\left( \Delta_T \geq \epsilon \right) \leq 2 \exp \left( \frac{-\epsilon^2}{2T}\right).
    \end{align}

    If we set $\epsilon = \sqrt{2} T^{\frac{1+\nu}{2}}$, where $\nu$ is an arbitrarily small positive number, the inequality becomes
    \begin{align}
        \mathbb{P}\left(\Delta_T \geq 2T^{\frac{1+\nu}{2}} \right) \leq 2 \exp \left(-T^\nu\right).
    \end{align}

    Now, if we let $T = n^\alpha$, and $\mu = \nu/2$, then the inequality becomes
    \begin{align}
        \mathbb{P}\left( \Delta_T \geq 2T^{\frac{1}{2}+\mu} \right) \leq 2 \exp \left(- n^{2\alpha\mu} \right)
    \end{align}
    
    Thus, the probability that the particle wanders away from its initial position by more than $N^{\alpha /2}$ vanishes as a stretched exponential. Since the edge we expected was arbitrary, we obtain
    
    \begin{align}
        \mathbb{P}\left(\exists\ \text{edge with length }\geq 2 T^{\frac{1}{2}+\mu} \right) \leq 2\wbit n \exp(-n^{2\alpha\mu}).
    \end{align}
    where the right hand side vanishes as $n\to\infty$.

\end{proof}

\subsection{Consequences of Smaller Set Expansion} \label{sec:bottleneck}

In this section, we discuss consequences of expansion as present in diffusion codes, such as bounds on distance, linear confinement and self correction. We additionally discuss how the smaller set expansion property guarantees properties of the quantum codes constructed from hypergraph products.

\subsubsection{Distance lower bound and confinement}

From the definition of expander codes (\cref{def:expander codes}), diffusion codes are expander codes with their Tanner graphs being $(\delta(n), \gamma)$ expanders, with $\delta(n) = o(n)$. They therefore inherit many of the favorable properties of expander codes, provided that $\gamma$ is large enough. If $\gamma = \wbit (1-\varepsilon)$, then from \cref{lemma:expansion distance}, if $\varepsilon < 1/2$, diffusion codes have distance lower bounded by $\delta(n)$. Furthermore, from \cref{lemma:expansion confinement}, they also exhibit $(\delta(n), \wbit(1-2\varepsilon)$ confinement (\cref{def:confinement}). As is often the case for diffusion codes, $\delta(n)$ is a sublinear power law in $n$, the total number of bits. If $\delta(n) \sim n^\beta$, with $\beta < 1$, then one obtains a distance lower bounded growing as $n^\beta$ as well as confinement out to a scale which is $n^\beta$. This allows one to construct families of diffusion codes with arbitrary distance and confinement cutoff scaling if the bit and check degree is chosen appropriately.

\subsubsection{Self correction} 

Classical linear codes admit a natural interpretation as a classical Hamiltonians of Ising variables (see \cref{eq:hamiltonian_classical}). A code is called self-correcting if this Hamiltonian, when coupled to a thermal bath, passively preserves the code space.
The codewords are the ground states of its Hamiltonian, and physically self correction hence means that under coupling to a bath the system stays ``near'' its ground state for a long time.

A natural mathematical model for this idea is Glauber dynamics: each time step we choose a random bit and flip it with a probability determined by a Boltzmann factor $p_{\rm flip} \propto \exp(-\beta \Delta E)$, where $\Delta E$ is the difference in the size of the syndrome before and after the flip. Then, a code is called self correcting if under this dynamics it takes a number of steps exponentially large in (some power of) the system size to produce an uncorrectable error when starting from a codeword.
%
%
We formalize this idea in the following two definitions:

\begin{defn} [Memory Time] \label{def:memory time}
    Consider a code initialized in a code state evolving under Glauber dynamics. If, for some decoder, there is a $t_{\rm mem}$ at which that decoder fails to recover the code state, this time is deemed the memory time of the code under that decoder.
\end{defn}

\begin{defn} [Self Correction] \label{def:self correction}
    Given a family of codes $\mathsf{C}_i$ of increasing size, we say that the family is self-correcting if the memory time of the codes diverge with the size of the code.
\end{defn}

Remarkably, self-correction is a direct consequence of confinement.

\begin{theo} [Self Correction from Confinement (Theorem S IV.10 in \cite{placke2024topologicalquantumspinglass})] \label{theorem:self correction from confinement}
    Consider a family of expander codes $\{\mathsf{C}_i\}$, $i\in \mathbb{N}$ with $n_i$ bits and $m_i$ checks. If the family has $(\delta(n_i), \gamma)$ confinement for some $\gamma > 0$ and $\delta(n) = \omega (\log (n))$ then at sufficiently low temperature the family $\mathsf{C}_i$ is self-correcting (\cref{def:self correction}) with memory time diverging as $\mathcal{O}(e^{\delta(n)})$.
\end{theo}

\begin{proof}
    The proof follows the proof of theorem S IV.10 in \cite{placke2024topologicalquantumspinglass}. As the smaller set expanders have $\delta(n) = \omega(\log (n))$, the condition outlined in theorem S IV.10 of \cite{placke2024topologicalquantumspinglass} is met and the proof is instantiated.
\end{proof}

Confinement in turn is a direct consequence of (sufficiently strong) vertex expansion of the Tanner graph.

\begin{cor} [Self Correction from Smaller Set Expansion]
    Consider a family of expander codes $\{\mathsf{C}_i\}$, $i\in \mathbb{N}$ with $n_i$ bits and $m_i$ checks. If the Tanner graphs of the family are $(\delta(n_i), \gamma)$ smaller set expanders with $\gamma > \wbit /2$ and $\delta(n) = \Omega(n^\beta)$ for some $0 < \beta < 1$, then the family $\{\mathsf{C}_i\}$ is self-correcting.
\end{cor}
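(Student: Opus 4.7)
The plan is to chain two results that have already appeared in the excerpt: first the implication from lossless vertex expansion to linear confinement (\cref{lemma:expansion confinement}), and then the implication from linear confinement to thermal self-correction (\cref{theorem:self correction from confinement}). Since both building blocks are in hand, the corollary is essentially a compatibility check on the hypotheses, and there is no substantive technical obstacle; the real work happens inside the two invoked results.

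First I would translate the hypothesis $\gamma > \wbit/2$ into the language used by \cref{lemma:expansion confinement}. Writing $\gamma = \wbit(1-\varepsilon)$ fixes $\varepsilon \in [0,1/2)$, which is exactly the range required by that lemma. Applying it to each $\mathsf C_i$ in the family then yields $(\delta(n_i), \wbit(1-2\varepsilon))$ confinement, with a strictly positive confinement constant $\wbit(1-2\varepsilon) > 0$ that is uniform across the family. Note that the lemma is stated per code but uses only the expansion parameters, so no additional care is needed to get a family-wide statement.

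Next I would verify the remaining hypothesis of \cref{theorem:self correction from confinement}, namely that the confinement radius grows super-logarithmically. This is immediate from the assumption $\delta(n) = \Omega(n^\beta)$ with $\beta > 0$, since any positive power law dominates $\log n$ asymptotically, so $\delta(n) = \omega(\log n)$. Applying \cref{theorem:self correction from confinement} to the family then gives, at sufficiently low temperature, a memory time scaling as $\mathcal{O}(e^{\delta(n)})$, which diverges with the system size. By \cref{def:self correction}, this is exactly the definition of self-correction, concluding the proof.

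The only point requiring any care is bookkeeping: making sure the expansion parameter $\varepsilon$ is uniform across the family (so that the confinement constant does not degrade with $i$), and making sure the low-temperature threshold supplied by \cref{theorem:self correction from confinement} can be chosen uniformly in $i$. Both follow because the relevant constants $\wbit$, $\varepsilon$, and $\gamma$ are fixed by the hypothesis and do not depend on $i$, so no additional argument is needed beyond the two cited results.
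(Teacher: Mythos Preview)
Your proof is correct and follows essentially the same route as the paper: invoke \cref{lemma:expansion confinement} to pass from $(\delta(n),\gamma)$ expansion with $\gamma>\wbit/2$ to linear confinement, then note that $\delta(n)=\Omega(n^\beta)$ forces $\delta(n)=\omega(\log n)$ and apply \cref{theorem:self correction from confinement}. If anything, your write-up is slightly more careful than the paper's, which loosely states that expansion yields ``$(\delta(n),\gamma)$ confinement'' whereas the confinement constant is actually $\wbit(1-2\varepsilon)$ as you point out.
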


\begin{proof}
    From \cref{lemma:expansion confinement}, smaller set $(\delta(n), \gamma)$ expansion of the Tanner graph implies $(\delta(n), \gamma)$ confinement. The proof then directly follows from the fact that $\delta(n) \propto n^\beta$ is strictly larger than $\log (n)$. Then by \cref{theorem:self correction from confinement}, the corollary is proven.
\end{proof}

In addition to the rigorous results presented here, in \cref{sec:classical self correction}, we numerically explore self correction in diffusion codes. There, we determine lower bounds on the critical temperature needed for self correction and compute the memory times as a function of temperature and system size.

\subsubsection{Linear-time decoders}

Sipser and Spielman's proved the existence of a linear time decoding algorithm called the $\mathsf{flip}$ decoder \cite{sipserspielman} using the fact that the Tanner graphs of their codes exhibit lossless unique neighbor expansion. They show that for sufficiently large constant $\gamma$, the $\mathsf{flip}$ decoder corrects any error up to a finite fraction of the distance. This proof also applies to diffusion codes, which have distance scaling subsextensively with $n$. Nevertheless, we expect the flip decoder to correct \emph{random} errors at sufficiently error rates. This is because as shown e.g. in \cite{bombin2015single_shot,Fawzi_2018_lineartime} at sufficiently low error rates random errors \emph{cluster} into connected components that can be decoded separately, and no single component contains more than $O(\log n)$ bits. We thus expect any local decoder for LDPC codes which corrects more than $\log n$ adversarial errors to have a threshold against random errors as well.
Indeed, in \cref{sec:decoding benchmarks} we present numerical evidence that the flip decoder has a threshold for families of diffusion codes with sufficiently large check degree.



\subsubsection{Confinement in the Hypergraph Product} 

The properties of hypergraph product quantum codes are determined by the underlying classical input codes. 
Most importantly for our purposes, one can guarantee the expansion of the hypergraph product Tanner graphs given (sufficiently strong) expansion of the inputs.
This in turn implies a generalized notion of confinement in the corresponding quantum code. This notion of confinement in turn implies quantum self correction, single-shot error correction, and the existence of a linear-time decoder against random errors.


The reason why the notion of confinement has to be revised for quantum codes is that the size of the syndrome in an qLDPC codes cannot be proportional simply to the size of the error. This is because there are small errors that have exactly zero syndrome but also trivial action on the codespace: the stabilizers. The solution is to define confinement with respect to a modified notion of error size, which we call \emph{reduced weight}. This notion measures the distance of any error from products of stabilizers or, equivalently, the smallest weight of any error with the same action on the code space.

\begin{defn} [Reduced Weight] \label{def:reduced weight}
    Given a pair of parity check matrices $\mathbf{H}_X$ and $\mathbf{H}_Z$ which satisfy $\mathbf{H}_X \cdot \mathbf{H}_Z^T = 0$, the reduced weight with respect to $\mathbf{H}_X$ and $\mathbf{H}_Z$ is defined as
    \begin{align}
        ||\mathbf{x}||_X &:= \text{dist}[\mathbf{x}, \text{im} (\mathbf{H}_X^T)] \\
        ||\mathbf{z}||_Z &:= \text{dist}[\mathbf{z}, \text{im} (\mathbf{H}_Z^T)],
    \end{align}
    where for some $\mathbf{x} \in \mathbb{F}_2^n$ and some subspace $A$, $\text{dist}(\mathbf{x},A) := \min_{\mathbf{a} \in A} |\mathbf{x}+\mathbf{a}|$.
\end{defn}

Given this adjusted notion of weight, we define confinement in quantum codes as before. Quantum CSS codes have two kinds of checks and errors, and correspondingly there are two distinct notions of confinement. We call these \emph{boundary} and \emph{coboundary} confinement, borrowing the name from the language of chain complexes. In this language, the stabilizers of the quantum CSS code are identified with the boundaries and co-boundaries of the complex, respectively, and the reduced weight measures the distance of a chain from the space of boundaries and co-boundaries, respectively.

\begin{defn} [Confinement in Quantum Codes] \label{def:quantum confinement}
Consider a quantum CSS code with parity check matrices $\mathbf{H}_X$ and $\mathbf{H}_Z$. For some $\delta, \gamma > 0$, we say that the code has $(\delta, \gamma)$-boundary confinement if
\begin{align}
    ||\mathbf{x}||_X \leq \delta(n) \Rightarrow |\mathbf{H}_Z \mathbf{x} | \geq \gamma ||\mathbf{x}||.
\end{align}

and we say that the code has $(\delta, \gamma)$-coboundary confinement if
\begin{align}
    ||\mathbf{z}||_Z \leq \delta(n) \Rightarrow |\mathbf{H}_X \mathbf{z} | \geq \gamma ||\mathbf{z}||.
\end{align}

We say that the code has $(\delta, \gamma)$ confinement if it has both boundary and coboundary confinement.
    
\end{defn}


In a hypergraph product quantum CSS code, confinement can be guaranteed provided that the underlying classical codes display both left and right unique neighbor expansion (\cref{def:unique neigbhor expansion}) \cite{Leverrier_2015}. Thus, hypergraph products of codes with sufficiently strong smaller set expansion, such as certain families of diffusion codes, will also exhibit confinement.

\begin{theo} \label{theorem: sublinear expander hypergraph product}
    Consider a code $\mathsf{C}$ with $n$ bits, $m$ checks, max bit degree $\wbit$ and max check degree $\wcheck$. Let $\varepsilon>0$ such that $\wbit, \wcheck > \epsilon^{-1}$, and let the Tanner graph $\mB = (L, R, E)$ of $\mathsf{C}$ be a $(\delta, \wbit (1-\varepsilon))$ left-right vertex expander. Then, the quantum code $\mathsf{C}_Q = \{ \mathsf{C}_X, \mathsf{C}_Z\}$ constructed as the hypergraph product of $\mathsf{C}$ with itself (\cref{def:hypergraph product}) has $(\delta_Q, \gamma_Q)$ confinement (\cref{def:quantum confinement}). Here, $\delta_Q = \alpha\,\delta$ with $\alpha$ a positive constant that depends only on $\wbit, \wcheck$ and $\epsilon$, and $\gamma_Q = \tfrac{1}{2}(1 - 8\varepsilon)\wbit$.
\end{theo}

\begin{proof}
    The proof follows analogously to that of Lemma 15 in \cite{Fawzi_2018}.
\end{proof}

The above means in particular that for any power $\beta > 0$, there exists a family of diffusion codes with $T\sim n^\beta$ such that their hypergraph product yields a family of quantum codes with confinement.

\begin{cor} [Hypergraph product of diffusion codes]
    Consider a family of diffusion codes defined on the cycle graph with $T\sim n^\beta$ for some $\beta > 0$ and $\wbit, \wcheck$ sufficiently large. Then, the family of quantum qLDPC codes obtained as the hypergraph product of the diffusion code with themselves has with high probability $(\delta_Q(n_Q), \gamma)$ confinement (in the sense of \cref{def:quantum confinement}), with $\delta \sim n_Q^{\beta/2}$, where $n_Q$ is the number of qubits in the hypergraph product code.
\end{cor}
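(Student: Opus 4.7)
The proof is an assembly of three earlier results. Theorem \ref{theorem:diffusioncodecycle} establishes left smaller-set lossless expansion of the Tanner graph of a diffusion code on the cycle; Lemma \ref{lemma:diffcode rightside} gives the corresponding right expansion; and Theorem \ref{theorem: sublinear expander hypergraph product} deterministically converts two-sided lossless vertex expansion of a classical LDPC code into boundary and coboundary confinement of its hypergraph product. The plan is first to verify that, with high probability, the Tanner graph of the diffusion code is a two-sided smaller-set lossless expander with expansion constant exceeding the threshold required by the hypergraph product theorem, and then to feed this graph into that theorem.

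The first technical step is to fix parameters so that all three results are simultaneously applicable. I would choose $\varepsilon < 1/8$ so that the conclusion $\gamma_Q = \tfrac{1}{2}(1-8\varepsilon)\wbit > 0$ of Theorem \ref{theorem: sublinear expander hypergraph product} is meaningful, and then take $\wbit,\wcheck$ large enough that $(\varepsilon\wbit)^{-1} < \beta/2$, so that Theorem \ref{theorem:diffusioncodecycle} admits a valid classical expansion exponent $\beta'\in((\varepsilon\wbit)^{-1},\beta/2)$. With $T\sim n^\beta$ and these parameters, Theorem \ref{theorem:diffusioncodecycle} applied with $\alpha=\beta$ gives a scale $\delta\sim n^{\beta'}$ on which the Tanner graph is a lossless $(\delta,\wbit(1-\varepsilon))$ left smaller-set expander with probability $q_L(N)\to 1$, and Lemma \ref{lemma:diffcode rightside} gives the corresponding right expansion on the same scale with probability $q_R(N)\to 1$. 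A union bound on the two failure events then guarantees that with probability at least $q_L(N)+q_R(N)-1 \to 1$, the Tanner graph is simultaneously a left and right lossless $(\delta,\wbit(1-\varepsilon))$ smaller-set expander.

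Conditioning on this high-probability event, Theorem \ref{theorem: sublinear expander hypergraph product} applies deterministically to the Tanner graph and yields $(\delta_Q,\gamma_Q)$ boundary and coboundary confinement of the hypergraph product code, with $\delta_Q=\alpha_0\,\delta$ for a positive constant $\alpha_0=\alpha_0(\wbit,\wcheck,\varepsilon)$ and $\gamma_Q=\tfrac{1}{2}(1-8\varepsilon)\wbit$. Since the hypergraph product of an $n$-bit classical code with itself has $n_Q=n^2+m^2=\Theta(n^2)$ physical qubits, translating the classical scale $\delta\sim n^{\beta'}$ in terms of $n_Q$ yields $\delta_Q\sim n_Q^{\beta'/2}$, which can be pushed arbitrarily close to the stated sublinear scaling by taking $\beta'$ near the upper limit $\beta/2$. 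There is no substantive obstacle beyond matching constants: all probabilistic work was done in establishing the classical smaller-set expansion (Theorem \ref{theorem:diffusioncodecycle} and Lemma \ref{lemma:diffcode rightside}), and the passage to the quantum code via the hypergraph product confinement theorem is purely deterministic.
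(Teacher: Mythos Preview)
Your proposal is correct and follows essentially the same route as the paper: invoke the two-sided smaller-set lossless expansion of diffusion codes on the cycle (the paper cites Corollary~\ref{corollary:left right expansion diffusion}, which packages Theorem~\ref{theorem:diffusioncodecycle} and Lemma~\ref{lemma:diffcode rightside} together), then apply Theorem~\ref{theorem: sublinear expander hypergraph product} deterministically. Your version is in fact more careful than the paper's two-line proof, making explicit the parameter choices ($\varepsilon<1/8$, $\wbit$ large enough that $(\varepsilon\wbit)^{-1}<\beta/2$), the union bound needed to obtain simultaneous left and right expansion with high probability, and the translation $n_Q=\Theta(n^2)$ that converts the classical scale into the quantum one.
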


\begin{proof}
    From \cref{corollary:left right expansion diffusion}, families of diffusion codes constructed from the cycle graph yield, with high probability, lossless left-right vertex expanders. The corollary then follows from \cref{theorem: sublinear expander hypergraph product}.
\end{proof}

Perhaps the most striking consequence of confinement in quantum LDPC codes is self correction, defined analogously to \cref{def:self correction}. As in the classical counterparts, any family with $(\delta(n), \gamma)$ confinement with super-logarithmic $\delta(n)$ and $\gamma > 0$ is self-correcting \cite{Hong_2025,placke2024topologicalquantumspinglass} with memory time scaling as $t_{\rm mem} = \exp[\Omega(\delta(n))]$. Hence, it is easy to see that hypergraph products of diffusion codes with $T\sim n^\beta$ for any $\beta > 0$ and sufficiently large bit and check degree are qunatum LDPC codes with confinement in the sense of \cref{def:quantum confinement} and hence self-correcting.

A numerical characterization of the self-correcting properties of diffusion codes on the same level as done for the classical codes in \cref{sec:classical self correction} is beyond the scope of this manuscript. However, we do perform a minimal ``heating experiments' on a single instance of a diffusion codes in \cref{sec:quantum self correction}, and find the result to be consistent with a large memory time at low temperatures.

Finally we mention that quantum codes with sufficiently strong enough confinement admit the use of a linear time small-set-flip decoding algorithm decoding algorithm (for $\gamma > 7\wbit / 8$) . If the expansion strength is larger still the code small set flip decoding algorithm is able to correct errors under noisy syndrome measurements \cite{quantumexpandercodes, Fawzi_2018, Fawzi_2018_lineartime} (for $\gamma > 15\wbit / 16)$). 
We thus expect hypergraph product diffusion codes with strong enough expansion to also be endowed with linear time decoding and single shot decoding.

\section{Numerical Experiments}

In this section, we showcase various numerical experiments performed on diffusion codes both in the classical and quantum case.

\subsection{Benchmarks against i.i.d. Noise} \label{sec:decoding benchmarks}

Sipser and Spielman \cite{sipserspielman} showed that expander codes with strong enough expansion (as determined by the parameter $\varepsilon$ in \cref{def:lossless expansion}) admit the use of a local active decoding algorithm, i.e. the $\mathsf{flip}$ decoder. This decoding algorithm works by checking each bit to see if the total number of unsatisfied parity checks may be lowered by flipping it. If so, the algorithm flips the bit and continuous on. 

In particular, they proved that, given a family of $(\delta(n), \gamma)$ expander codes, the flip decoder can correct adversarial errors up to a size $\sim \delta(n)$. It also works for random errors, even though these typically have weight much larger than $\delta(n)$. This is because percolation theory tells us that at sufficiently low error rates (below the percolation threshold), errors form connected clusters which are at most of size $\sim \log (n)$, and these clusters can be decoded independently. Hence, as long as $\delta(n)$ grows faster than $\log(n)$, the flip decoder indeed serves as a viable decoder for these codes under the random error model \cite{Fawzi_2018, Fawzi_2018_lineartime}. 
The diffusion codes we introduced in this work have smaller set expansion with $\delta(n)$ growing as some power law, and hence also allow for the use of the flip decoder provided that $\wbit, \wcheck$ and $T$ are chosen to make $\varepsilon$ sufficiently small. 
To test performance of this procedure and determine the numerical threshold, we perform decoding simulations. The experiment is performed as follows: 1) Begin in the all 0 code state. 2) Flip some fraction $p$ of bits randomly. 3) Run a version of the flip decoder corresponding to zero-temperature Metropolis dynamics. The decoder terminates when one of two conditions is met: 1) Bits are flipped until all checks are satisfied. 2) Bits are flipped until there is no single bit flip that decreases the total number of unsatisfied checks. The remaining error left over in this latter condition is what is termed a stopping set. 

We perform numerical experiments using diffusion codes with $\wbit = 9, \wcheck = 11$ and $T = N$. We expect this to result in, up to constants and log factors, correspond to an expansion scale of $\sim \sqrt{N}$. Note that although in \cref{sec:construction} we defined the diffusion code construction from a continuous time process, when numerically constructing the codes, we implement the SWAP network in discrete time, performing one SWAP per time step. Subsequently, this translates to overall $ NT = N^2$ SWAP gates applied. As shown in the left panel of \cref{fig:threshold}, we see that indeed, these codes have a threshold under the $\mathsf{flip}$ coder. For sufficiently bit-flip rates $p$, the logical failure rate $\mathbb{P}_{\rm fail}$ decays quickly with system size. 
To obtain the plotted data, we average over 10 instances of diffusion codes for each system size $n$ and perform $10^4$ decoding experiments per code. Each point in the figure is hence an average over $10^5$ decoding experiments. 
The threshold is numerically determined to be roughly $0.017 \leq p_c^{\rm flip} \leq 0.019$. Below the threshold, we expect the logical failure rate to decay as a stretched exponential in system size, and as shown in the inset our data is consistent with such behavior.

\begin{figure}
    \centering
    \includegraphics[width=0.49\linewidth]{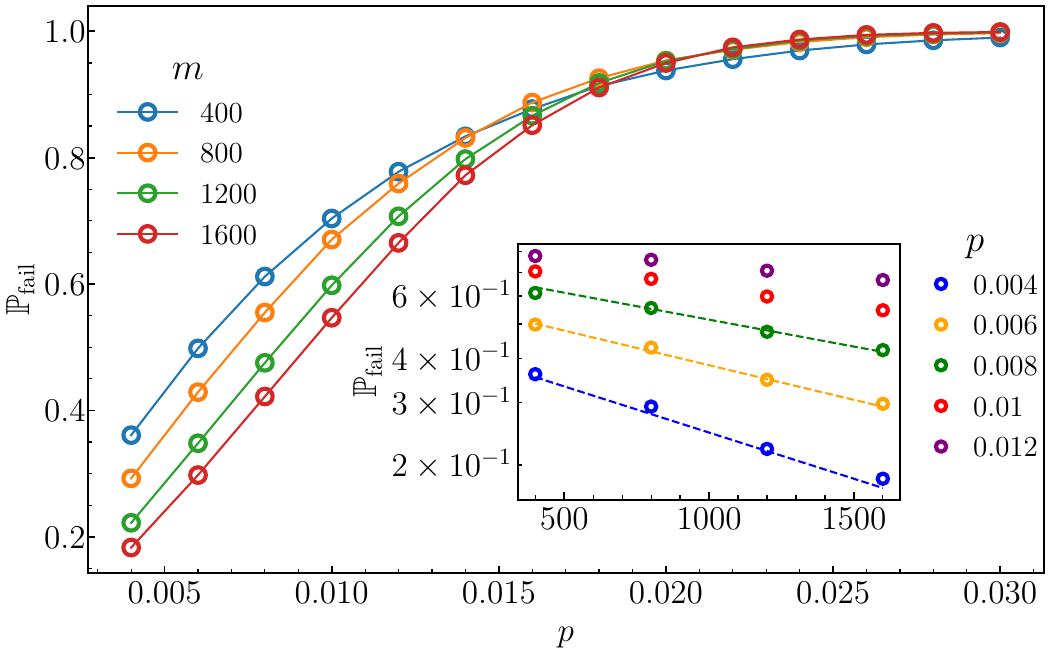}
    \includegraphics[width = 0.49\linewidth]{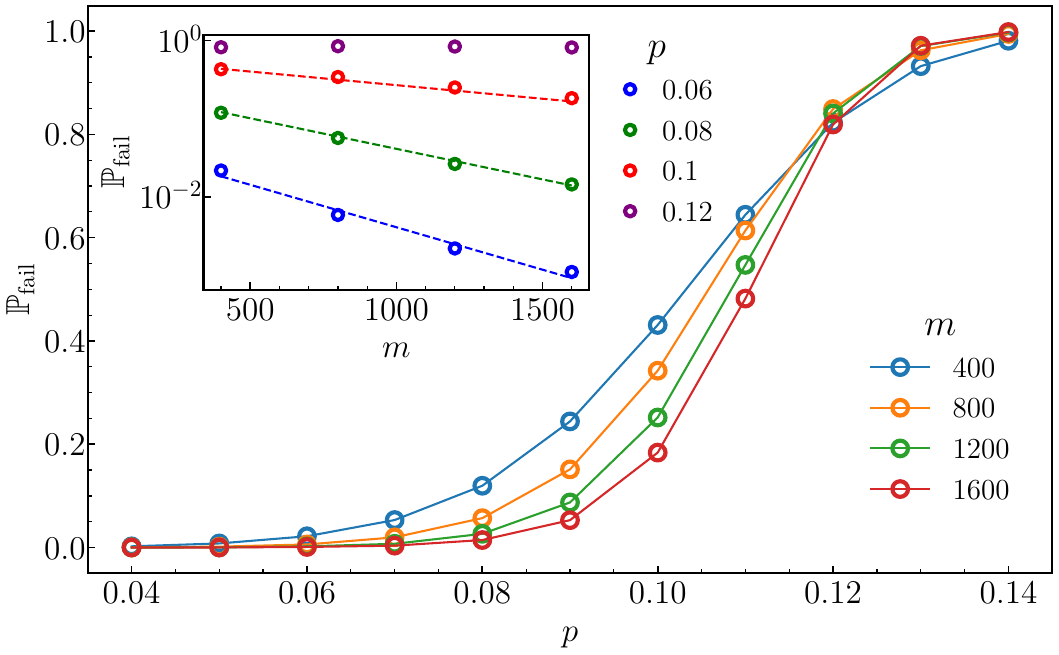}
    \caption{\textbf{Left}: Threshold of $\wbit = 9$, $\wcheck = 11$, $T = N$ diffusion codes under the $\mathsf{flip}$ decoder. Inset shows below threshold, the failure rate decays with system size. \textbf{Right}: Threshold of $\wbit = 9$, $\wcheck = 11$, $T = N$ diffusion codes under the BP decoder.
    Inset shows below threshold, the failure rate decays with system size. In both insets, dashed lines are not fits, but meant to guide the eye. 
    }
    \label{fig:threshold}
\end{figure}
  

We also benchmark our codes using a belief propagation (BP) decoder \cite{Roffe_2020, Roffe_LDPC_Python_tools_2022}, which has been shown to work well on both LDPC and qLDPC codes. This data is shown in the right panel of \cref{fig:threshold}. BP
performs significantly better than the $\mathsf{flip}$ decoder, and we obtain a threshold of roughly $0.11 \leq p_c^{\rm BP} \leq 0.13$.
The data is again averaged over $10^5$ data points, with $10^4$ decoding simulations run on $10$ instances of the code. We use the same 10 instances to produce the data for both the $\mathsf{flip}$ and BP decoder. Similar to the $\mathsf{flip}$ decoder, the decay of the failure rate below the threshold is consistent with a stretched exponential. However, we note that in both cases it is hard to conclusively rule out other functional forms given the limit range of system sizes.

The performance of codes in any decoder is highly dependent on the chosen parameters. In particular, one requires $T \gg \wbit, \wcheck$; otherwise, the code is subject to large finite size effects. If $\wbit, \wcheck$ or $T$ are made to be too small, then the decoder ceases to be viable at all and the code fails spectacularly with almost certainty on each trial. For example, if one desires a code with $\wbit = 5$ and $T = n^{1/4}$, then if $n=100$, then $T$ will actually be smaller than $\wbit$, indicating that the edges of a check have not diffused outwards at all.

\subsection{Glauber Dynamics}

\subsubsection{Self correction in classical diffusion codes} \label{sec:classical self correction}

\begin{figure}
    \centering
    \includegraphics[width=0.49\linewidth]{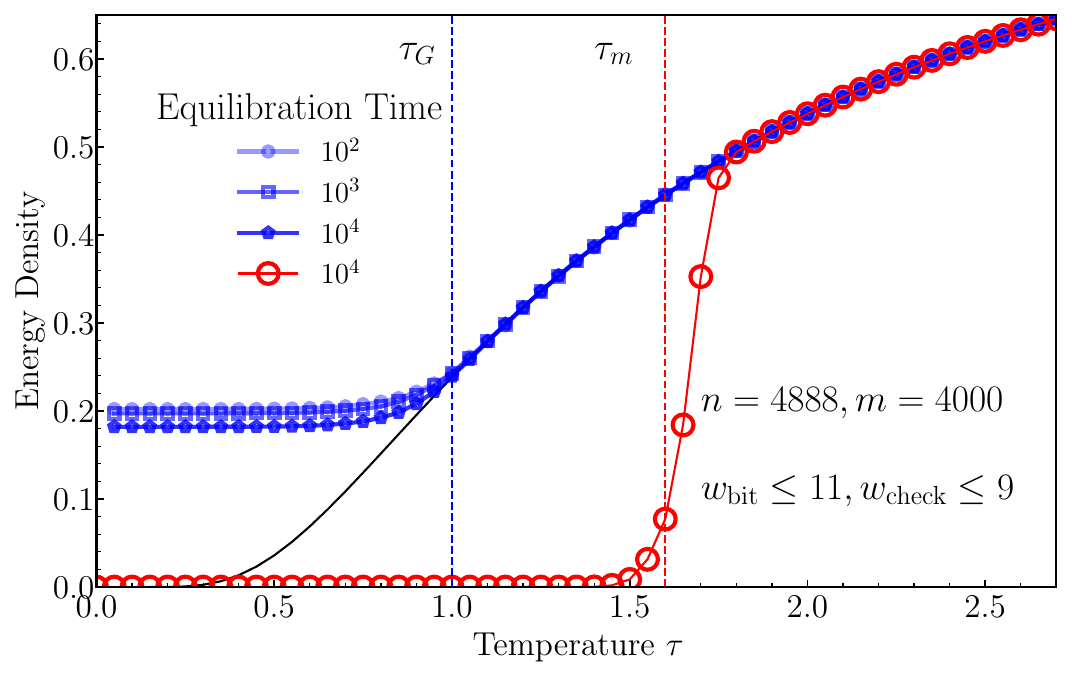}
    \includegraphics[width=0.49\linewidth]{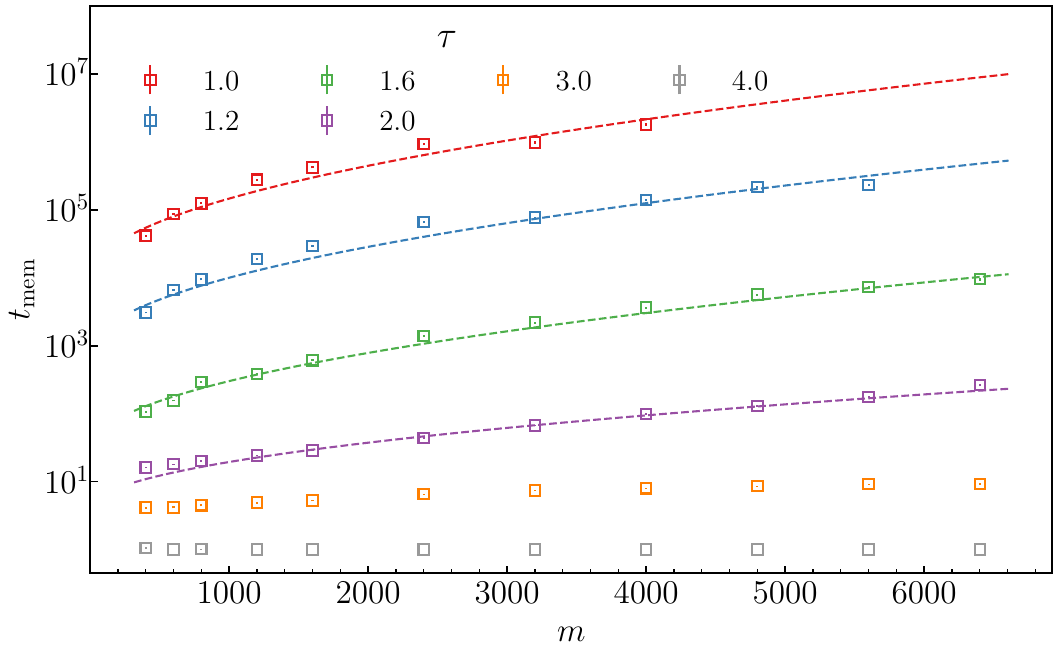}
    \caption{\textbf{Left}: \textbf{Heating and cooling experiments}. In this panel we show heating and cooling experiments done on $\wbit = 9, \wcheck = 11, T = N$ diffusion codes. Black line indicates the predicted thermal equilibrium value. At $\tau_m > 1.5$, we observe an abrupt transition where the code jumps from near 0 energy density to the predicted thermal equilibrium value. When cooling the system back down to $\tau = 0$, the system becomes trapped in a local minimum and falls off equilibrium. 
    \textbf{Right}: \textbf{Memory Time Experiments}. In this panel, we show the results of memory time experiments for $\wbit = 9, \wcheck = 11, T = N$ diffusion codes, using the $\mathsf{flip}$ decoder for read-out. For $\tau < 2$, we observe an increase in the memory time with system size that is consistent with a stretched exponential as predicted by \cref{theorem:self correction from confinement}.
    }
    \label{fig:selfcorrection}
\end{figure}

We also test the performance of the codes against \textit{thermal noise} in order to demonstrate their self-correction ability. We do this by first constructing a Hamiltonian composed of the parity checks. For a code of $n$ bits, denoted by $s_i$, we compose a system of $n$ spins, denoted by $\sigma_i$, where the states of bits and spins correspond as $0 \to +1$ and $1 \to -1$. If a parity check of the code acts on bits as $\mathbf{h} = s_i \oplus s_j \oplus \dots$, then the corresponding term in the Hamiltonian is $h = \sigma_i \sigma_j \dots$. From the set of parity checks of the code, we construct the Hamiltonian:
\begin{align}\label{eq:hamiltonian_classical}
    H = - \sum_{h_i} \prod_{j \in h_i} \sigma_i.
\end{align}

We take this Hamiltonian and evolve it under Monte-Carlo Metropolis dynamics. As codewords are the states which satisfy all the parity checks, they are the zero energy ground states of the Hamiltonian. 
Robustness to thermal noise and self-correction hence translate to the system remaining in a low energy density configuration for a time which grows with system size at some temperature $\tau$. We consider only Hamiltonians constructed from codes which contain all linearly independent checks. This implies analyticity of the partition function and free energy at all temperatures, allowing the free energy to be calculated directly \cite{mezard_montanari,placke2024topologicalquantumspinglass,placke2025expansioncreatesspinglassorder}. 


In the left panel \cref{fig:selfcorrection}, we show the results of a "heating-cooling" experiment done on diffusion codes with $n = 4888, m= 4000, \wbit = 9, \wcheck = 11$. In the heating part of the experiment, we initialize the code in the all zero code state and then evolve the system under Metropolis dynamics at some temperature $\tau$. This $\tau$ is slowly increased from $0$ in increments of $\Delta \tau = 0.05$ from $\tau = 0$ to $\tau = 4$. Here, one sweep corresponds to an average of $n$ Metropolis updates. Specifically, at each temperature, we do the following: (1) Evolve at temperature $\tau$ for 1000 sweeps. (2) Evolve for some additional time $t_{\rm eq}$ sweeps, taking a sample of the energy every 10 sweeps. \cref{fig:selfcorrection} (left) then shows the energy as a function of temperature recorded during step (2), but averaged over 10 instances of the code and repeated 10 times for each code. For an equilibration time of $t_{\rm eq}$ sweeps, each point in the figure then is the average of $10 \cdot 10 \cdot t_{eq} / 10$ data points. (3) When moving to the next temperature, the final state of the previous temperature is used as the new initial state. As comparison to the measured energy densities, we plot the theoretically predicted equilibrium value as the solid line in \cref{fig:selfcorrection} (left). We observe that at low temperature, the system remains close to 0 code state, indicating a very slow relaxation to equilibrium. Only at some higher temperature $\tau_m$ does the equilibration time exceed the time needed to reach the equilibrium energy density. The transition seems to be abrupt at $\tau_m$, as the recorded energy density of the system jumps to the thermodynamic equilibrium curve. 

The above behavior under ``heating'' is consistent with a memory transition at $\tau_m$. 
In order to confirm this, we also determine the ``memory time'' of the code as a function of system size and temperature.
To this end, we subject a code state to thermal noise at some temperature $\tau$. We hold the system at that temperature and repeatedly check whether the code remains correctable using the $\mathsf{flip}$ decoder. The first time at which the decoder fails to recover the code state is the empirically measured memory time. 
Below the memory transition temperature $\tau_m$, we expect the memory time to grow super-polynomialy with system size, while the system size dependence should be weak (at most logarithmic, at high temperatures). We note that the memory time measured and will depend on the decoder used for readout.

We expect from \cref{theorem:self correction from confinement} that the time to produce an uncorrectable error starting from a code state should be a stretched exponential in system size, given that $\delta(n)$ is a sublinear power law in system size. 
We show the results of a memory time experiment in the right panel of \cref{fig:selfcorrection}. We intialize the system in the all 0 code state and then subject the system to thermal noise. Then, every 10 time steps, we attempt to correct a copy of the system via the $\mathsf{flip}$ decoder. If the $\mathsf{flip}$ decoder converges back to the all 0 state, then the experiment is continued. If the $\mathsf{flip}$ decoder converges to a stopping set or to a different codeword, then the experiment is terminated and the time is recorded. The average of these recorded times across many experiments is what we report as the memory time of the code at a particular temperature. We note that the recorded memory times follow an exponential distribution, similar to that in \cite{Bravyi_2013}. In the panel, we show how this memory time scales with system size at different temperatures. Each point is the average of the 1000 memory time experiments
. We observe that for $\tau < 2$, the memory time grows with system size with a scaling that is consistent with a stretched exponential. In particular, since in these experiments the codes are constructed using a diffusion time of $T = N^1$, the expansion scale we estimate to be $\beta = 1/2$ (i.e. $\delta(n) \sim n^{\beta} = \sqrt{n}$). This corresponds to a memory time $t_{\rm mem} = \exp[\Omega(\sqrt{N})]$. Due to the difficulties of fitting a stretched exponential, we do not attempt an exact fit of the data, but the dashed lines are functions of the form $\sim \mathcal{O}\left( e^{\sqrt{N}} \right)$. 
Our data is indeed consistent with such a scaling of the memory time with system size. 

In the left panel of \cref{fig:selfcorrection}, we also show the results of a "cooling" experiment, in which the system is initialized in a high temperature state at $\tau = 4$ and then slowly annealed down in increments of $\Delta \tau = 0.05$ to $\tau = 0$. As in the heating experiment, at each temperature, we allow the system to equilibrate for 1000 sweeps before sampling and during the sampling period, we hold the system for a time specified by the equilibration time and sample the energy density every 10 sweeps and collect the average. Again, the experiment was performed on 10 different instances codes, and repeated 10 times per code and so each data point in the figure then is the average of $10 \cdot 10 \cdot t_{eq} / 10$ samples. 

Here, we observe that at some temperature $\tau_G$, the system falls off the equilibrium curve and remains in a high energy state. This indicated the onset of glassy behavior in the system. We observe that this behavior persists as we increase the equilibration time per temperature. This is at least consistent with the system undergoing a glass transition at low temperature. The existence of a glass transition in random LDPC codes is well appreciated in the physics literature \cite{mezard_montanari}, and was recently revisited in \cite{placke2025expansioncreatesspinglassorder}. There, it was shown that the glassiness of the codes originates from their expansion property. Our numerical experiments suggest that this behavior is maintained even in for codes with only smaller set expansion. 

\subsubsection{Heating and cooling dynamics of hypergraph products of diffusion codes} \label{sec:quantum self correction}

\begin{figure}
    \centering
    \includegraphics[width=0.49\linewidth]{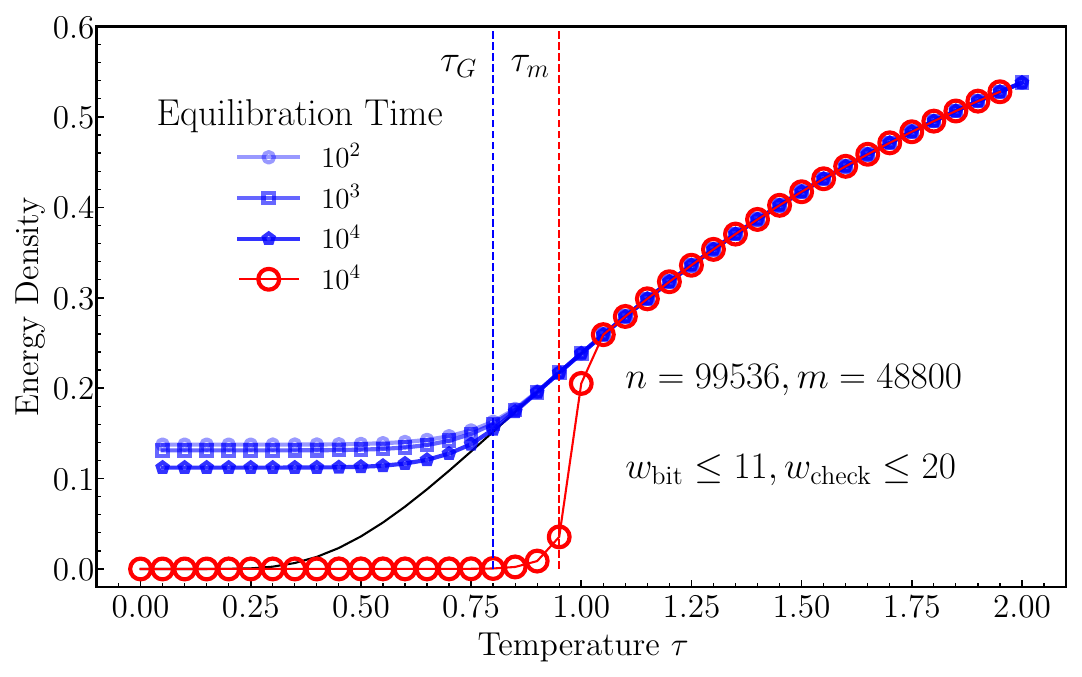}
    \caption{\textbf{Heating-Cooling Experiments for Hypergraph Product Quantum Diffusion Codes}: 
    In this panel we show heating and cooling experiments done on hypergraph products of $\wbit = 9, \wcheck = 11, T = N$ diffusion codes. At $\tau_m > 0.9$, we observe an abrupt transition where the code jumps from near 0 energy density to the predicted thermal equilibrium value. When cooling the system back down to $\tau = 0$, the system becomes trapped in a local minimum and falls off equilibrium.
    }
    \label{fig:heating cooling quantum}
\end{figure}

We now present some numerical results for quantum codes obtained as the hypergraph product of diffusion codes. Doing numerics on these codes is tricky because the taking the hypergraph product of a code with itself squares the number of bits, and hence even the smallest instances with good properties are very large. In particular, as remarked on at the end of \cref{sec:decoding benchmarks}, it is important that, for the classical codes used to take the hypergraph product, $T \gg \wbit, \wcheck$, where here $\wbit$ and $\wcheck$ refer classical code's bit and check degree. If we generate classical diffusion codes that are large enough to perform well, the quantum codes obtained by taking hypergraph products are very large. For a $n = 244, m = 200, \wbit \leq 9, \wcheck \leq 11$ diffusion code, the hypergraph product of this code with itself is a quantum code with $n = 99536$, $m = 48800$, $\wbit \leq 11$, and $\wcheck \leq 20$. It is thus numerically challenging to utilize either the BP decoder or the  $\mathsf{flip}$ decoder. 

Nevertheless, we can at least run a "heating-cooling" experiment as described before, and check whether the results are consistent with self correction and glassiness as observed for the classical codes. The results of such an experiment are shown for $n = 99536, m = 48800, \wbit \leq 11$ and $\wcheck \leq 20$. In both the heating and cooling experiments, at each temperature, we allow the system to equilibrate for 1000 sweeps before sampling. During the sampling period, we hold the system for a time specified by the equilibration time and sample the energy density every 10 sweeps and collect the average.
Here again, one sweep corresponds to an average of $n$ Metropolis updates. We again performed this experiment for 10 different instances, repeating the experiment 10 times per instance and so each point in the figure then is the average of $10 \cdot 10 \cdot t_{eq} / 10$ data points.

In the heating experiment, we again initialize the code in the all zero code state and then evolve the system under Metropolis dynamics at some temperature $\tau$. The $\tau$ is slowly increased from $0$ in increments of $\Delta \tau = 0.05$ up to $\tau = 2$ and at each temperature in our experiment, this system is allowed to equilibrate for a time specified by the equilibration time. As in the classical codes, we observe that at low temperatures $\tau < \tau_m$, the system remains close to 0 code state. Only at higher temperatures $\tau > \tau_m$ does the system reach the equilibrium energy density in the time available during the experiment. This is consistent with a memory transition at $\tau_m$, as observed in the classical codes. 

Finally, we also show the results of a "cooling" experiment in which the system is annealed down to $\tau = 0$ from som high temperature. We again observe that the energy plateaus below some temperature $\tau_G$, and that this effect persists as we increase the equilibration time. This observation is consistent with the prediction that the quantum code
undergoes a transition to a \textit{topological quantum spin glass}, a long-range entangled version of the spin glass phase recently in \cite{placke2024topologicalquantumspinglass}. This phase was reported to occur in quantum expander codes with sufficiently strong expansion. Our work here suggests that this uniquely quantum spin glass phase may also occur in more Euclidean settings when allowing for long-range interactions.

\section{Conclusion and Future Direction}

In this work, we have proved the existence of graphs which expand on a sub-extensive scale, and introduced a family of codes, diffusion codes, providing a tunable trade-off between the expansion scale and the resulting non-locality with respect to some underlying non-euclidean geometry. 
The expansion properties of these codes guarantees a lower bound on the code distance, confinement, and self correction. Furthermore, we showed that quantum codes constructed from hypergraph products of these codes inherit the expansion properties and their consequences. The resulting quantum codes are hence self-correcting, allow efficient decoders against random errors and single-shot error correction. At the same time, they have arbitrarily small power-law non-locality when embedded in an pre-specified euclidean geometry. The code parameters such as distance and memory time under passive decoding scale directly with the degree of expansion and hence non-locality of the code.

We conclude this work by discussing some open questions for future work.

\begin{enumerate}
    \item \textbf{Proof of expansion in diffusion codes on arbitrary graphs}: The diffusion of edges during the construction of diffusion codes may always be reduced to a SEP on some graph. The cycle graph allowed for a convenient organization of states in terms of a gap vector on which we were able to prove a monotonicity condition. This monotonicity condition on the gap vector proved crucial to prove expansion. On arbitrary graphs however, it is not obvious whether our strategy will extend. 

    On $d$-dimensional tori, there may be a way forward. $d$-dimensional tori admit a natural Euclidean metric using which one may organize the state in terms of interparticle distances. The challenge however is in proving a monotonicity condition that allows the reduction of the problem to a smaller system. On the cycle graph, due to the 1D geometry the interparticle distances could be easily organized into the sequential distances in the gap vector. As soon as one moves to 2D however, the particles can move around one another and so one must keep track of all interparticle distances. We therefore believe the 2D problem will be as difficult as the $d$-dimensional problem.

    \item \textbf{Fewer nonlocal checks}: In the diffusion code construction, although non-locality can be turned, almost all edges are of a macroscopic length. This remains true when considering the quantum codes constructed from them. As mentioned earlier, recent results by Dai and Li discuss the degree of nonlocality needed for 2D quantum codes in order to surpass the BPT bounds \cite{BPTbounds1,BPTbounds2,dai2024localityvsquantumcodes}. To reiterate, they prove that any $[n,k,d]$ 2D quantum stabilizer code with $kd^2 \geq \mathcal{O}(n)$ must have at least $c_0 \cdot \max (k,d)$ interactions of length at least $c_0 \cdot \max \left(\frac{d}{\sqrt{n}}, \left(\frac{kd^2}{n} \right)^{1/4} \right)$. The diffusion codes may be tuned to meet the requisite length scale, but a simple extension of \cref{lemma:check distance} will show that almost surely the length of every interaction will be of this length, not just the minimum number required. From this perspective, our quantum codes are sub-optimal. We speculate that their may be other stochastic processes or modifications to the interchange process (such as introducing a weak attractive interaction) which could lead to a more tailored edge-length distribution. In particular, it would be ideal to be able to tune the geometric size distribution of the checks, for example to have the distribution be exponential. Numerics suggest that this would enable fault-tolerant implementation of such codes even using 2D local circuits \cite{Berthusen2024partialsyndrome, PRXQuantum.6.010306}.

    \item \textbf{Higher Dimensional Expansion}: As a result of the discovery of quantum expander codes, there has been a recent explosion of interest in the construction of higher dimensional expanders \cite{ lubotzky2017highdimensionalexpanders}. One may construct these higher dimensional expanders through products of regular expander graphs, or via explicit construction. There are however very few random constructions that construct these higher dimensional expanders from scratch, and the constructions that currently exist typically come with many caveats, such as unbounded degree \cite{liu2022localglobalexpansionrandom}. We speculate whether one can apply similar stochastic process on simplicial complexes in order to obtain the requisite connectivity to generate a higher dimensional expander.

    \item \textbf{Practical Implementations}: Part of the motivation for our work has also been the development of quantum simulator platforms and near term quantum devices \cite{doi:10.1126/science.abi8378, bluvstein2023}. These new experimental platforms have shown rapid development in recent years and have also been used for error correction experiments \cite{googletoriccode2024}. The long range connectivity of quantum expander codes makes it challenging to implement these on current platforms, though there is some intriguing work in this direction \cite{periwalkunkelcooper2024, guo2025selfcorrectingquantumcodesneutral}. Given that the degree of nonlocality may be tuned in the diffusion codes, we are curious whether they may be more easily implemented in current platforms such as neutral atom simulators. The construction of diffusion codes also suggests a natural SWAP circuit for syndrome extraction, though the circuit depth is sub optimal \cite{delfosse2021boundsstabilizermeasurementcircuits}. Speeding up the mixing time of the underlying shuffling process (e.g. via lifting) would however directly translate into shorter syndrome extraction circuits.
\end{enumerate}
\ifanon
\else

\section{Acknowledgments}
The authors thank Daniel Fisher, Louis Golowich, Steven Kivelson, Yaodong Li, Nicholas O'Dea, Akshat Pandey and Charles Stahl for  helpful discussions. We are especially grateful to Nikolas Breuckmann, Tibor Rakovszky and Grace Sommers for past collaboration.  

A.S. acknowledges support from the National Science Foundation Graduate Research Fellowship and the ARCS Foundation for ARCS Scholar funding.
Numerical simulations were performed on Stanford Research Computing Center’s Sherlock cluster.
V.K. acknowledges support from the Packard Foundation through a Packard
Fellowship in Science and Engineering and the US Department of Energy, Office of Science under Award No DE-SC0019380. 
B.P. acknowledges funding through a Leverhulme-Peierls Fellowship at the University of Oxford and the Alexander von Humboldt foundation through a Feodor-Lynen fellowship.

\fi

\section{References}
\bibliographystyle{alpha}
\bibliography{main}

\end{document}